\colorlet{DarkRed}{red!50!black}
\colorlet{DarkGreen}{green!50!black}
\colorlet{DarkBlue}{blue!50!black}
\newcommand{\RB}[1]{{\small\textcolor{blue}{RB: #1}}}
\newcommand{\wrt}{w.r.t.\xspace}
\newcommand{\ie}{i.e.\xspace}
\newcommand{\iec}{\ie,\xspace}
\newcommand{\eg}{e.g.,\xspace}
\newcommand{\etal}{et al.\xspace}
\newcommand{\Suv}{\ensuremath{(S\cup \{v\}) \setminus\{u\}}}
\newcommand{\st}{\text{ s.t. }}
\newcommand{\quot}[1]{``#1''}
\newcommand{\false}{\texttt{false}}
\newcommand{\true}{\texttt{true}}
\newcommand{\nil}{\texttt{null}}
\DeclareMathOperator{\diam}{diam}
\DeclareMathOperator{\dist}{d}
\DeclareMathOperator{\argmin}{argmin}
\DeclareMathOperator{\argmax}{argmax}
\DeclareMathOperator{\degout}{\deg_{\text{out}}}
\newcommand{\R}[2]{R(#1,#2)}
\let\epsilon\varepsilon
\let\eps\varepsilon
\newcommand{\NN}{\ensuremath{\mathbb{N}}}
\newcommand{\QQ}{\ensuremath{\mathbb{Q}}}
\newcommand{\Oh}{\mathcal{O}}
\definecolor{orange}{RGB}{235,90,0}
\definecolor{darkorange}{RGB}{175,30,0}
\definecolor{turkis}{RGB}{131,182,182}
\definecolor{darkturkis}{RGB}{31,82,82}
\definecolor{green}{RGB}{102,180,0}
\definecolor{darkgreen}{RGB}{51,90,0}
\definecolor{myblue}{RGB}{0,0,213}
\definecolor{mydarkblue}{RGB}{0,0,100}
\definecolor{mybrightblue}{HTML}{74B0E4}
\definecolor{mybrighterblue}{HTML}{B3EAFA}
\definecolor{lila}{RGB}{102,0,102}
\definecolor{darkred}{RGB}{139,0,0}
\definecolor{darkyellow}{RGB}{188,135,2}
\definecolor{brightgray}{RGB}{200,200,200}
\definecolor{darkgray}{RGB}{50,50,50}
\definecolor{amaranth}{rgb}{0.9, 0.17, 0.31}
\definecolor{alizarin}{rgb}{0.82, 0.1, 0.26}
\definecolor{amber}{rgb}{1.0, 0.75, 0.0}
\definecolor{green(ryb)}{rgb}{0.4, 0.69, 0.2}
\definecolor{hanblue}{rgb}{0.27, 0.42, 0.81}
\definecolor{grannysmithapple}{rgb}{0.66, 0.89, 0.63}
\newtheorem{observation}{Observation}
\newproof{@proofof}{Proof of Theorem~\ref{theorem:hardness directed}}
\newenvironment{proofof}{\begin{@proofof}}{\end{@proofof}}
\DeclareMathOperator{\VC}{C}
\DeclareMathOperator{\VH}{H}
\DeclareMathOperator{\GC}{GC}
\DeclareMathOperator{\GF}{GF}
\DeclareMathOperator{\GFapx}{\widetilde{GF}}
\DeclareMathOperator{\GFlb}{\widehat{GF}}
\DeclareMathOperator{\GHhat}{\widehat{GH}}
\DeclareMathOperator{\GH}{GH}
\DeclareMathOperator{\LMIN}{\ell_{\min}}
\DeclareMathOperator{\LMAX}{\ell_{\max}}
\newcommand{\tool}[1]{\textsf{#1}\xspace}
\newcommand{\gsls}{\tool{GS-LS-C}}
\newcommand{\grls}{\tool{Greedy-LS-C}}
\newcommand{\greedy}{\tool{Greedy-C}}
\newcommand{\gs}{\tool{GS}}
\newcommand{\rnd}{\tool{Best-Random-C}}
\newcommand{\greedyh}{\tool{Greedy-H}}
\newcommand{\rndh}{\tool{Best-Random-H}}
\newcommand{\grlsh}{\tool{Greedy-LS-H}}
\newcommand{\tabtitle}[1]{
\begin{tabular}{c}
\centering
#1
\end{tabular}}
\newcommand{\minQualLSGRCplxUnw}{$99.77\%$\xspace}
\newcommand{\minQualLSGSCplxUnw}{$99.76\%$\xspace}
\newcommand{\minQualLSGRRoadUnw}{$98.66\%$\xspace}
\newcommand{\minQualLSGSRoadUnw}{$98.50\%$\xspace}
\newcommand{\avgSpeedGreedyCplxUnw}{$16.4\times$\xspace}
\newcommand{\minSpeedGSLSCplxUnw}{$28.33\times$\xspace}
\newcommand{\minSpeedGRLSCplxUnw}{$22.99\times$\xspace}
\newcommand{\maxSpeedGSLSCplxUnw}{$233.01\times$\xspace}
\newcommand{\maxSpeedGRLSCplxUnw}{$485.09\times$\xspace}
\newcommand{\minQualLSGRHCplxUnw}{$99.72\%$\xspace}
\newcommand{\minQualGRHRoadUnw}{$98.76\%$\xspace}
\newcommand{\minQualLSGRHRoadUnw}{$99.75\%$\xspace}
\newcommand{\minQualGRHRndCplxDir}{$1.407$\xspace}
\newcommand{\maxQualGRHRndCplxDir}{$1.525$\xspace}
\newcommand{\minQualGRHRndCplxUnd}{$1.445$\xspace}
\newcommand{\maxQualGRHRndCplxUnd}{$1.504$\xspace}
\newcommand{\maxQualImprLSGRHCplxDir}{$0.05\%$\xspace}
\newcommand{\minSlowdLSGRHCplxDir}{$5.7\times$\xspace}
\newcommand{\maxSlowdLSGRHCplxDir}{$27.3\times$\xspace}
\newcommand{\minQualGRHRndRoadDirWei}{$2.4$\xspace}
\newcommand{\maxQualGRHRndRoadDirWei}{$2.6$\xspace}
\newcommand{\maxQualImprLSGRHRoadDirWei}{$0.01\%$\xspace}
\newcommand{\minSlowdLSGRHRoadDirWei}{$54.9\times$\xspace}
\newcommand{\maxSlowdLSGRHRoadDirWei}{$448.9\times$\xspace}
\newcommand{\minSlowdGRHRoadDirWei}{$2.5\times$\xspace}
\newcommand{\maxSlowdGRHRoadDirWei}{$3.6\times$\xspace}
\newcommand{\minQualImprLSGRHRoadUnw}{$0.58\%$\xspace}
\newcommand{\maxQualImprLSGRHRoadUnw}{$0.69\%$\xspace}
\newcommand{\minSlowdGRHRoadUnw}{$3.2\times$\xspace}
\newcommand{\maxSlowdGRHRoadUnw}{$12.2\times$\xspace}
\newcommand{\ArxivOrCr}[2]{#1}
\newcommand*\samethanks[1][\value{footnote}]{\footnotemark[#1]}
\title{Group-Harmonic and Group-Closeness Maximization -- \\ Approximation and Engineering\thanks{
This work has been partially supported by German Research Foundation (DFG) grant ME 3619/3-2 within 
Priority Programme 1736 and 
by the Italian MIUR PRIN 2017 Project ALGADIMAR ``Algorithms, Games, and Digital Markets''.
}}
\author{Eugenio Angriman\thanks{Humboldt-Universit\"at zu Berlin, Department of Computer Science, Unter den Linden 6, 10099 Berlin, Germany.}
\and Ruben Becker\thanks{Gran Sasso Science Institute, L'Aquila, Italy.}
\and Gianlorenzo D'Angelo\samethanks[3]
\and Hugo Gilbert\thanks{Université Paris-Dauphine, Université PSL, CNRS, LAMSADE, 75016 Paris, France.}
\and Alexander van der Grinten\samethanks[2]
\and Henning Meyerhenke\samethanks[2]}
\date{}
\begin{document}

\maketitle

\ArxivOrCr{}{
\fancyfoot[R]{\scriptsize{Copyright \textcopyright\ 2021 by SIAM\\
Unauthorized reproduction of this article is prohibited}}
}

\begin{abstract}
    \ArxivOrCr{Centrality measures characterize important nodes in networks. Efficiently computing such nodes has received a lot of attention. When considering the generalization of computing central \emph{groups of nodes}, challenging optimization problems occur. In this work, we study two such problems, \emph{group-harmonic maximization} and \emph{group-closeness maximization} both from a theoretical and from an algorithm engineering perspective.

On the theoretical side, we obtain the following results. For \emph{group-harmonic maximization}, unless $P=NP$, there is no polynomial-time algorithm achieving an approximation factor better than $1-1/e$ (directed) and $1-1/(4e)$ (undirected), even for unweighted graphs. On the positive side, we show that a greedy algorithm achieves an approximation factor of $\lambda(1-2/e)$ (directed) and $\lambda(1-1/e)/2$ (undirected), where $\lambda$ is the ratio of minimal and maximal edge weights. For \emph{group-closeness maximization}, the undirected case is $NP$-hard to be approximated to within a factor better than $1-1/(e+1)$ and a constant approximation factor is achieved by a local-search algorithm. For the directed case, however, we show that, for any $\epsilon<1/2$, the problem is $NP$-hard to be approximated within a factor of $4|V|^{-\epsilon}$.

From the algorithm engineering perspective, we provide efficient implementations of the above greedy and local search algorithms. In our experimental study we show that, on small instances where an optimum solution can be computed in reasonable time, the quality of both the greedy and the local search algorithms come very close to the optimum. On larger instances, our local search algorithms yield results with superior quality compared to existing greedy and local search solutions, at the cost of additional running time. We thus advocate local search for scenarios where solution quality is of highest concern.

}{Centrality measures characterize important nodes in networks. Efficiently computing such nodes has received a lot of attention. When considering the generalization of computing central \emph{groups of nodes}, challenging optimization problems occur.
In this work, we study two such problems, \emph{group-harmonic maximization} and \emph{group-closeness maximization} both from a theoretical and from an algorithm engineering perspective. 

On the theoretical side, we obtain the following results. For \emph{group-harmonic maximization}, unless $P = NP$, there is no polynomial-time algorithm that achieves an approximation factor better than $1-\frac{1}{e}$ (directed) and $1-\frac{1}{4e}$ (undirected), even for unweighted graphs. On the positive side, we show that a greedy algorithm achieves an approximation factor of $\lambda (1-\frac{2}{e})$ (directed) and $\frac{\lambda}{2}(1-\frac{1}{e})$ (undirected), where $\lambda$ is the ratio of minimal and maximal edge weights.
For \emph{group-closeness maximization}, we obtain a strong separation between undirected and directed graphs (that holds even in the unweighted case). The undirected case is $NP$-hard to be approximated to within a factor better than $1-\frac{1}{e+1}$ and a constant approximation factor is achieved by a local-search algorithm. For the directed case, however, we show that, for any $\varepsilon <\frac{1}{2}$, the problem is $NP$-hard to be approximated within a factor of $4|V|^{-\varepsilon}$. 

From the algorithm engineering perspective, we provide efficient implementations of the above greedy and local search algorithms.
In our extensive experimental study we show that, on instances small enough so that an optimum solution can be computed in reasonable time, the quality of both the greedy and the local search algorithms come very close to the optimum. On larger instances, our local search algorithms yield results with superior quality compared to existing greedy and local search solutions, at the cost of additional running time. 
We thus advocate local search for scenarios where solution quality is of highest concern.
}
\end{abstract}

\section{Introduction}
\label{sec:intro}
%
The identification of important vertices in a graph $G=(V,E)$ is one of the most widely used analytics in
network analysis. To this end, numerous centrality measures have been proposed that
reflect different underlying network processes, see~\cite{boldi2014axioms,ScalingupnetworkcentralitycomputationsAbriefoverview}. Among the widely used ones are
closeness and harmonic centrality, which are both based on shortest-path distances, see~\cite{newman2018networks}.
The textbook algorithm for computing a node ranking \wrt closeness or harmonic centrality 
solves $|V|$ single-source shortest path problems. Top-$k$ ranking queries can often be solved 
faster by suitable pruning~\cite{bergamini2019computing}. Still, closeness is known to be expensive in the worst case: 
one cannot compute the most closeness-central vertex in time $\Oh(|E|^{2-\epsilon})$
for any $\epsilon > 0$ (assuming SETH)~\cite{bergamini2019computing}.

Many network analysis applications do not only require a centrality ranking, but also ask for a \emph{group} of $k$ nodes that is central \emph{as a group}. This is an orthogonal problem since the nodes of the
most central group need to cover the graph together and often differ significantly from the $k$ most highly ranked vertices.
Group centrality problems arise in facility location,
leader selection~\cite{DBLP:journals/tac/ClarkBP14}, and
influence maximization~\cite{banerjee2020survey}, among many others.

Group-closeness and group-harmonic maximization are ${NP}$-hard problems (for group-closeness see~\cite{ChenWW16}).
In practice real-world instances of non-trivial size (say, beyond a few thousand nodes/edges) usually take too long with exact methods such
as ILP solvers~\cite{BergaminiGM18}. Thus, for group-closeness maximization, recent work has concentrated on heuristics~\cite{ChenWW16,BergaminiGM18,AngrimanGM19}.
Early attempts to attribute a constant-factor approximation to a popular greedy algorithm for group-closeness maximization
were flawed (cf.\ discussion in Section~\ref{sub:prelim-discussion}), leaving the question open how and how well both problems can be approximated.

\subsection{Related Work.}
\label{sub:related-work}
Borgatti and Everett~\cite{everett1999centrality} were the first to extend the notion of centrality to groups of nodes,
including group-degree, group-betweenness, and group-closeness.
Group-degree maximization can be reduced from vertex cover (see \eg~\cite{DBLP:conf/alenex/AngrimanGBZGM20}) and is thus ${NP}$-hard. A similar reduction also shows that the optimization of GED-Walk, 
a recent group centrality measure inspired by Katz centrality, is ${NP}$-hard. In the same paper,
Angriman \etal~\cite{DBLP:conf/alenex/AngrimanGBZGM20} prove submodularity of GED-Walk and use a 
well-known greedy strategy to get a constant-factor approximation.
For group betweenness maximization, sampling-based approximation algorithms have been proposed~\cite{mahmoody2016scalable,yoshida2014gbc}.
The notion of group-closeness maximization in~\cite{JunzhouLTG14measuring} differs from the original definition (we use the latter, which is widely accepted)
and only serves as an estimate of the original. That is why their proofs of ${NP}$-hardness
and submodularity do not necessarily carry over to the original one. 
In fact, standard group-closeness maximization
is not sub\-modular (cf.\ Section~\ref{sub:prelim-discussion}), so that
we cannot simply apply submodular optimization results~\cite{vondrak2013symmetry} in this case directly.

Chen \etal~\cite{ChenWW16} argued that group-closeness maximization is ${NP}$-hard by relating it
to the ${NP}$-hard $k$-means problem. Their realization of the common greedy algorithm was later improved by Bergamini \etal~\cite{BergaminiGM18},
who made the algorithm more memory-efficient and exploited the supermodularity of the reciprocal of group-closeness for search pruning.
Exploiting supermodularity of the reciprocal also works when the graph distance is replaced by the 
resistance distance, which leads to the so-called group \emph{current-flow} closeness 
-- for which Li \etal~\cite{0002PSYZ19} proposed approximation algorithms based on greedy strategies and random projections.

Still, even for group-closeness with the usual graph distance, the greedy algorithm can be time-consuming on large instances, which motivated new
local search heuristics~\cite{AngrimanGM19}.
Depending on the quality level expected and the implementation,
these heuristics can be significantly faster than the greedy method
and often obtain a nearly comparable quality. All these works on group-closeness maximization
did not provide approximation bounds (also see Section~\ref{sub:prelim-discussion}), leaving the question of approximability unsettled for both problems considered here.
Yet, the close relationship between group-closeness maximization and the metric $k$-median problem as well as known local search algorithms with constant-factor approximation
bounds for the latter~\cite{AryaGKMMP04} motivate us to investigate
whether the $k$-median results can be transferred to group-closeness and to group-harmonic maximization.

\subsection{Outline and Contribution.}
In this paper, we address theoretical and practical approximation aspects of the two group centrality maximization problems.

In Section~\ref{sec:GH}, we provide the first non-trivial approximation bounds for group-harmonic maximization.
By proving that the problem is submodular, but not necessarily monotone, 
we can directly apply a local search algorithm~\cite{LMNS10} with approximation factor $\frac{1+\varepsilon}{6}$.
We also prove that a greedy algorithm admits a 
$\lambda(1-\frac{2}{e})$-approximation in directed graphs, where
$\lambda$ is the ratio of the smallest and the largest edge weight, respectively. In undirected graphs,
the approximation factor improves to $\frac{\lambda}{2}(1-\frac{1}{e})$. These results have
to be seen in relation to our hardness of approximation results: we show that, unless ${P} = {NP}$,
there is no polynomial-time
algorithm with approximation factor better than $1-1/e$ (general) or $1-1/(4e)$ (undirected).
We proceed by studying group-closeness maximization in Section~\ref{sec:GC}. Interestingly, for this problem we obtain a strong separation between undirected and directed graphs: we prove that the undirected case admits a constant-factor approximation (by relating it to known results on $k$-median).
For the directed case, in turn, we provide the first inapproximability results: it is ${NP}$-hard to approximate the 
problem to within a factor better than $\Theta(|V|^{-\eps})$ for any $\eps<1/2$. 
All our hardness results hold even in the unweighted case, hence the strong separation even holds in the unweighted case. We summarize our results on approximation in Table~\ref{tbl:bounds}.

The purpose of Section~\ref{sec:algo-eng} is to illustrate how to implement greedy and local search heuristics 
(that satisfy approximation guarantees, unlike previous implementations) efficiently for the respective group centrality maximization problem. 
Section~\ref{sec:exp} presents the results
of our experimental study with exact and random restart results as baselines:
where we can make such a comparison, greedy and local search are on average
less than $0.5\%$ away from the optimum.
Local search is one to three orders of magnitude slower than greedy, but this is to be expected due to
a high quality demand; indeed, unlike previous work on local search~\cite{AngrimanGM19} by a subset of the authors, 
our new algorithms often cut greedy's (empirical) gap to optimality by half or more.

\ArxivOrCr{Some proofs are deferred to the appendix along with further experimental results.}
{We refer the reader to a complete version of this article~\cite{abdggm20arxiv} which includes an appendix consisting of some omitted proofs and further experimental results.}

\begin{table}[t]
\centering
{
\resizebox{\columnwidth}{!}{\begin{tabular}{l|c|c|c|c|c|c}
  &\multicolumn{3}{c|}{Directed graphs}&\multicolumn{3}{c}{Undirected graphs}\\
\hline
  &Greedy&LS&Hardn.&Greedy&LS&Hardn.\\
\hline
$\GH$&$\lambda\left(1-\frac{1}{2e}\right)$&$\frac{1+\epsilon}{6}$&$1-\frac{1}{e}$&$\frac{\lambda}{2}\left(1-\frac{1}{e}\right)$&$\frac{1+\epsilon}{6}$&$1-\frac{1}{4e}$\\
\hline
$\GC$&&&$4|V|^{-\epsilon}$&&$3+\frac{2}{p}$&$1-\frac{1}{e+1}$\\
\end{tabular}
}}
\caption{A summary of our approximation bounds: $\lambda$ is the ratio of the minimum and maximum edge weight, respectively, $\epsilon$ is any value in $(0,1/2)$, $p$ is the number of swaps in the local search. The hardness of approximation bounds hold even in the unweighted case.}
\label{tbl:bounds}
\end{table}

\section{Preliminaries}
In all the problems we study, we are given a weighted (possibly directed) graph $G=(V, E, \ell)$, where $|V| = n$ and $\ell:E\rightarrow \NN_{>0}$ is an edge-weight function. We do not  assume that $G$ is connected,
but we assume that there are no isolated nodes. For two nodes $u,v\in V$, we denote with $\dist(u,v)$ the length of a shortest path in $G$ from $u$ to $v$, where length is measured \wrt the function $\ell$. We denote by $\LMIN:= \min_{e\in E} \ell(e)$ and $\LMAX:= \max_{e\in E} \ell(e)$ the lowest and highest  weights in graph $G$, respectively. We furthermore let 
$\lambda :=\frac{\LMIN}{\LMAX}$ be the ratio of smallest and largest edge weight.

\paragraph{Centrality Measures.} 
To measure the relative importance of a vertex in a graph, different centrality measures have been defined. 
Notably, two well-known centrality measures based on distances are \emph{closeness centrality} and \emph{harmonic centrality}. 
Formally, the closeness centrality $\VC(u)$ and harmonic centrality $\VH(u)$ of a vertex $u$ are defined as follows:
\begin{align*}
    &\VC(u):=\frac{n}{\sum_{v\in V\setminus \{u\}} \dist(u, v)}\text{ and } \quad\\
    &\VH(u):= \sum_{v\in V\setminus \{u\}} \frac{1}{\dist(u, v)}.
\end{align*}

These two centrality measures differ by the order in which they apply the sum and inverse operations. 
Note that while closeness centrality may seem more natural than harmonic centrality, it suffers from its inability to address disconnected graphs.  
Indeed, note that in this case, all vertices have a centrality of zero. 
This finding has been one of the motivations to introduce harmonic centrality, which 
additionally 
enjoys several desirable properties from an axiomatic viewpoint~\cite{boldi2014axioms}.

In this work, we study the generalizations of these two centrality measures to groups of nodes. 
We start by extending the notion of distances to sets by defining $\dist(S, v):=\min_{u\in S}\dist(u,v)$. 
In words, $\dist(S, v)$ denotes the distance from the closest node in $S$ to $v$. 
This notation allows us to define the group-closeness and group-harmonic centrality measures. 

\paragraph{Group Centralities.} 
For a group $S\subset V$ of vertices in $G$, its \emph{group-closeness centrality} is defined as
\[
    \GC(S):=\frac{n}{\sum_{v\in V\setminus S} \dist(S, v)},
\]
see for example~\cite{BergaminiGM18}. A similar objective has been addressed in the literature as well, namely the farness of a set, defined by $\GF(S):=\frac{1}{n}\cdot \sum_{v\in V\setminus S} \dist(S, v)$.
We note that the farness of a set is the reciprocal of its closeness.

The \emph{group-harmonic centrality} of a group $S\subset V$ of vertices in $G$ is defined as 

\[
\GH(S) := \sum_{v\in V\setminus S} \frac{1}{\dist(S,v)},
\]
where $\frac{1}{\dist(S,v)}=0$ if there is no path from $S$ to $v$. 
While this definition provides a natural generalization to harmonic centrality, the way it handles the vertices in the set $S$ may seem questionable. 
Indeed, why should these vertices
count as 0 while they are the closest ones to the group? 
On the other hand, making them count more than 0 by assigning them an arbitrary value would also be unsatisfactory. 
We work around this problem by always comparing the group-harmonic centrality of sets of equal cardinality. 
Indeed,
the value assigned to vertices in the set does not impact such comparisons.

\paragraph{Computational Problems.} 
In this work, we study the following two computational problems that consist of finding groups that maximize the two introduced centrality measures with respect to a budget constraint, \iec for a given parameter $k$, we are interested in finding a group of size $k$ of large group-closeness centrality or group-harmonic centrality.
Formally:
\begin{cproblem}{group-closeness maximization}
    Input: Graph \(G=(V, E, \ell)\), integer \(k\).

    Find: Set \(S\subset V\) with \(|S| = k\), s.t.\ \(\GC(S)\) is maximum.
\end{cproblem}

\begin{cproblem}{group-harmonic maximization}
    Input: Graph \(G=(V, E, \ell)\), integer \(k\).

    Find: Set \(S\subset V\) with \(|S| = k\), s.t.\ \(\GH(S)\) is maximum.
\end{cproblem}

While group-closeness maximization has already been studied in several settings~\cite{AngrimanGM19}, to the best of our knowledge, we are the first to study the group-harmonic maximization problem. 

\section{Group-Harmonic Maximization} \label{sec:GH}
\subsection{Mathematical Properties.}
We start our study of the group-harmonic maximization problem by analyzing 
the mathematical properties of the set function $\GH(\cdot)$. We observe that, while the function is submodular, it is not monotone.

\begin{lemma}\label{harmonic:submodularity}
Function $\GH:2^V\rightarrow \QQ_{\ge 0}$ is submodular.
\end{lemma}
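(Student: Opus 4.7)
The plan is to verify submodularity in its marginal-decrease form: for every $S \subseteq T \subseteq V$ and every $x \in V \setminus T$, show that $\GH(S \cup \{x\}) - \GH(S) \ge \GH(T \cup \{x\}) - \GH(T)$. First I would rewrite each marginal by isolating the term that drops out of the summation range, so that both sides are indexed over a set that does not contain $x$:
$$\GH(A \cup \{x\}) - \GH(A) = \sum_{v \in V \setminus (A \cup \{x\})} \left(\frac{1}{\dist(A \cup \{x\}, v)} - \frac{1}{\dist(A, v)}\right) - \frac{1}{\dist(A, x)},$$
for $A \in \{S, T\}$. The final correction accounts for the vertex $x$, which used to be summed over in $\GH(A)$ but no longer appears in $\GH(A \cup \{x\})$. (For disconnected graphs I adopt the convention $1/\infty = 0$, which leaves every inequality below intact.)

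Second, I would prove the key per-vertex inequality: for every fixed $v \ne x$,
$$\frac{1}{\dist(S \cup \{x\}, v)} - \frac{1}{\dist(S, v)} \;\ge\; \frac{1}{\dist(T \cup \{x\}, v)} - \frac{1}{\dist(T, v)}.$$
Using $\dist(A \cup \{x\}, v) = \min(\dist(A, v), \dist(x, v))$, this reduces to a case analysis according to whether $\dist(x, v)$ lies below $\dist(T, v)$, between $\dist(T, v)$ and $\dist(S, v)$, or above $\dist(S, v)$. In each case the inequality is immediate from $\dist(T, v) \le \dist(S, v)$, hence $1/\dist(T, v) \ge 1/\dist(S, v)$. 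Conceptually this is nothing but the fact that $A \mapsto \max_{u \in A} 1/\dist(u, v)$, being a maximum of a nonnegative function over $A$, is a submodular coverage-type function.

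Third, I would combine these ingredients. The summation range in the $S$-marginal is $V \setminus (S \cup \{x\}) \supseteq V \setminus (T \cup \{x\})$. On the common indices I apply the pointwise inequality from step two; on the extra indices $v \in T \setminus S$ the $S$-side contribution $1/\dist(S \cup \{x\}, v) - 1/\dist(S, v)$ is nonnegative (adding a vertex never enlarges $\dist(\cdot, v)$), so it only helps. The correction terms satisfy $-1/\dist(S, x) \ge -1/\dist(T, x)$ because $\dist(T, x) \le \dist(S, x)$. Summing yields the desired inequality.

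The main potential obstacle is the bookkeeping in step three, because the index set $V \setminus (A \cup \{x\})$ itself depends on $A$: one has to track simultaneously the pointwise inequality on the common indices, the extra nonnegative boundary terms that appear only on the $S$-side, and the sign-flipped $x$-correction that behaves in the \emph{opposite} monotonic direction. Once these three pieces are lined up, submodularity drops out cleanly from the per-vertex submodularity of $A \mapsto 1/\dist(A, v)$.
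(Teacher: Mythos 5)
Your proposal is correct and follows essentially the same route as the paper's proof: the paper likewise splits the $S$-marginal into the correction term $-1/\dist(S,v)$, the extra terms over $T\setminus S$ (shown nonnegative), and the terms over the common index set $V\setminus(T\cup\{v\})$, where it performs the same case analysis on whether the new vertex strictly decreases $\dist(T,\cdot)$. The only cosmetic difference is that you phrase the per-vertex step via $\dist(A\cup\{x\},v)=\min(\dist(A,v),\dist(x,v))$ and the submodularity of a pointwise maximum, which is an equivalent packaging of the paper's two-case argument.
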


To see
that the function $\GH(\cdot)$ is not necessarily monotone,
consider the example of an undirected graph with two nodes $u,v$ and one edge between them, then $\GH(\{u\}) = \GH(\{v\}) = 1$, while $\GH(\{u,v\})=0$.

\subsection{Approximation Algorithms.}
As $\GH(\cdot)$ is submodular, we can use the local-search algorithm due to Lee et al.~\cite{LMNS10} and obtain a $\left(\frac{1+ \epsilon}{6}\right)$-approximation (the exact cardinality constraint corresponds to the case of a single matroid base constraint, where the matroid is the uniform one). 
This algorithm was notably improved by Vondr\'ak~\cite{vondrak2013symmetry}, who designed a randomized local-search method with an approximation factor of $\left(\frac{1}{4} - o(1)\right)$. 
Another approximation algorithm candidate is the  greedy algorithm (Algorithm~\ref{algo:mon+submodular:greedy}) that provides an approximation factor of $1-\frac{1}{e}$ for maximizing a monotone and submodular function under a cardinality constraint $|S|\le k$. 

\begin{algorithm}
\caption{Greedy algorithm for maximizing a monotone submodular function $f$}
\label{algo:mon+submodular:greedy}
\begin{algorithmic}[1]
\State $S \gets \emptyset$
\While{$|S| < k$}
  \State $v \gets \argmax_{u\in V\setminus S}\{f(S\cup\{u\})-f(S)\}$
  \State $S \gets S\cup \{v\}$
\EndWhile
\State\Return $S$
\end{algorithmic}
\end{algorithm}

Unfortunately, as $\GH(\cdot)$ is not monotone, we cannot use this result directly. 
However, in what follows, we show that Algorithm~\ref{algo:mon+submodular:greedy} still guarantees interesting approximation bounds despite the non-monotonicity of $\GH(\cdot)$. 
Indeed, we obtain the following Theorem.
\begin{theorem}\label{harmonic:theoremapx2}
Algorithm~\ref{algo:mon+submodular:greedy} guarantees the following approximation factors for the group-harmonic maximization problem, where $\lambda :=\frac{\LMIN}{\LMAX}$ is the ratio of the minimum and maximum edge weight.
\begin{itemize}
\item $\lambda(1-\frac{2}{e}) > 0.264\lambda$ in the directed case;
\item $\frac{\lambda}{2}\left(1-\frac{1}{e}\right) > 0.316\lambda$ in the undirected case.
\end{itemize}
\end{theorem}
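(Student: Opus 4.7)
The plan is to turn the greedy analysis of the non-monotone submodular function $\GH$ into a classical analysis on a monotone submodular surrogate that Algorithm~\ref{algo:mon+submodular:greedy} implicitly optimizes.

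\textbf{Monotone lift.} First, define $g(S) := \GH(S) + |S|/\LMIN$. The function $g$ is submodular, because $\GH$ is submodular by Lemma~\ref{harmonic:submodularity} and $|S|/\LMIN$ is modular. It is also monotone: the marginal $g(S\cup\{u\}) - g(S) = \GH(S\cup\{u\}) - \GH(S) + 1/\LMIN$ is non-negative, because the only term of $\GH$ that can decrease when adding $u$ to $S$ is the former contribution $1/d(S,u)$ of $u$ itself, which is at most $1/\LMIN$. Crucially, since the offset $1/\LMIN$ is independent of the candidate $u$, the greedy algorithm run on $\GH$ and on $g$ selects the same element at every iteration.

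\textbf{Applying the classical greedy bound.} Apply the Nemhauser--Wolsey--Fisher analysis to $g$: since $g(S) = \GH(S) + k/\LMIN$ on cardinality-$k$ sets, the two problems share the same optimum $S^*$, and hence $g(S_k) \ge (1 - 1/e)\, g(S^*) = (1-1/e)\bigl(\GH(S^*) + k/\LMIN\bigr)$. Subtracting $k/\LMIN$ on both sides translates this into
\[
\GH(S_k) \;\ge\; \bigl(1 - \tfrac{1}{e}\bigr)\,\GH(S^*) \;-\; \tfrac{1}{e}\cdot \tfrac{k}{\LMIN}.
\]

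\textbf{From additive to multiplicative loss.} The main (and most delicate) remaining step is to absorb the additive term $\tfrac{1}{e}\cdot \tfrac{k}{\LMIN}$ into a multiplicative factor involving $\lambda = \LMIN/\LMAX$. For the directed case, the plan is to establish a structural lower bound of the form $\GH(S^*) \ge C\cdot k/\LMAX$ for a suitable constant $C$: every vertex has at least one outgoing edge, so from the $k$ nodes of $S^*$ one can reach, at distance at most $\LMAX$, enough vertices outside $S^*$ to witness such a bound. Plugging this into the additive inequality and rearranging yields the claimed $\lambda(1 - 2/e)\,\GH(S^*)$ guarantee. For the undirected case, I would further exploit the symmetry $d(u,v) = d(v,u)$ in a charging scheme: each $v \in S_{i-1}\setminus S^*$ can be charged to its nearest element of $S^*$ via a symmetric swap, roughly halving the loss term $\sum_{v \in S_{i-1}\setminus S^*} 1/d(S^*,v)$ and yielding the stronger factor $\tfrac{\lambda}{2}(1-1/e)$.

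\textbf{Main obstacle.} The hardest part will be this final translation step: both bounds depend on a carefully crafted lower bound on $\GH(S^*)$ (directly in terms of $k/\LMAX$ in the directed case, and via a symmetry-based charging argument in the undirected case) that converts the additive loss $\tfrac{1}{e}\cdot \tfrac{k}{\LMIN}$ into a bounded fraction of $\lambda\cdot \GH(S^*)$.
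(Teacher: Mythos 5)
Your directed-case skeleton (add a modular term to make $\GH$ monotone, run Nemhauser--Wolsey--Fisher, subtract the term back) is indeed the paper's core trick -- the paper uses $h'(S)=\GH(S)+|S|$ -- but your execution of the final "absorb the additive loss" step does not go through, for two reasons. First, the structural bound $\GH(S^*)\ge C\cdot k/\LMAX$ is false for an optimum $S^*$ of size \emph{exactly} $k$: in a star with center $c$ and $n-1$ leaves and $k=n-1$, every optimal $S^*$ has $\GH(S^*)=1\ll k$. The paper avoids this by working with a \emph{minimum-cardinality} optimum $\overline{O}$ of size $k'\le k$ (for which every member has a private witness, giving $\GH(\overline{O})\ge k'$), applying the greedy bound only up to iteration $k'$, and covering the remaining iterations with a separate case analysis (Lemma~\ref{lemma: one half approx}, the case $\Delta_i<0$) -- a case your proposal never addresses. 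Second, even granting such a bound, the arithmetic cannot produce a multiplicative $\lambda$: from $\GH(S_k)\ge(1-\tfrac1e)\GH(S^*)-\tfrac{k}{e\LMIN}$ and $k\le \GH(S^*)\LMAX/C$ you would get $\GH(S_k)\ge\bigl(1-\tfrac1e-\tfrac{1}{Ce\lambda}\bigr)\GH(S^*)$, which is \emph{negative} for small $\lambda$ rather than $\lambda(1-\tfrac2e)\GH(S^*)$. The $\lambda$ in the theorem does not come from absorbing the additive term; it comes from Lemma~\ref{harmonic:lemmaweighted}: the whole greedy analysis is carried out on the \emph{unweighted} version of the instance (where $\LMIN=\LMAX=1$, so the additive loss is $k'/e\le\GH(\overline{O})/e$ and absorbs cleanly into $(1-\tfrac2e)$), and the factor $\lambda$ is paid once at the end via the distortion $\GH_{\overline{w}}(T)/\LMAX\le\GH(T)\le\GH_{\overline{w}}(T)/\LMIN$.

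The undirected sketch is further from a proof. The extra factor $\tfrac12$ in the paper does not come from halving a loss term $\sum_{v\in S_{i-1}\setminus S^*}1/\dist(S^*,v)$ by a symmetric swap; that sum is simply bounded by $k$ in the (unweighted) increment lemma. Instead, the paper proves a per-iteration bound $\GH(S_i)-\GH(S_{i-1})\ge\tfrac1k(\GH(O)-\GH(S_{i-1}))-1$, unrolls it to $\GH(S_i)\ge(1-(1-\tfrac1k)^i)\GH(O)-i$, and then establishes the separate structural claim $\GH(S_i)\ge i$ (at least $i$ vertices lie at distance $1$ from the greedy set, shown by a contradiction argument on the greedy choice); combining $2\GH(S_k)\ge k+\GH(S_k)$ yields the factor $\tfrac12(1-\tfrac1e)$. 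Your proposal identifies neither the increment lemma nor the $\GH(S_i)\ge i$ claim, which are the actual content of the undirected case.
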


While these approximation factors may be worse than the ones provided by 
Lee et al.~\cite{LMNS10} and Vondr\'ak~\cite{vondrak2013symmetry}, they offer better guarantees 
for the unweighted version of the group-harmonic maximization problem.

We prove Theorem~\ref{harmonic:theoremapx2} by showing that the corresponding approximation factors hold in the unweighted case and then using the following lemma.

\begin{lemma}\label{harmonic:lemmaweighted}
An $\alpha$-approximation algorithm for the unweighted case of the group-harmonic maximization problem 
yields an $\alpha\lambda$-approximation algorithm for the general case, where $\lambda :=\frac{\LMIN}{\LMAX}$ is the ratio of the minimum and maximum edge weight.
\end{lemma}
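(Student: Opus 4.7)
The plan is to show that one can reduce the weighted problem to the unweighted one by the obvious reduction: given a weighted instance $G=(V,E,\ell)$, apply the $\alpha$-approximation algorithm to the unweighted graph $G_1=(V,E,\mathbf{1})$ and return its output $\hat T$ as the solution in $G$. The entire proof then amounts to comparing $\GH$ computed with respect to $\ell$ (denote it $\GH_\ell$) against $\GH$ computed with respect to the unit weighting (denote it $\GH_1$) on the same vertex set.

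The first key step is a pointwise distance bound. For any two vertices $u,v$, let $d_1(u,v)$ denote the hop-distance in $G_1$ and $d_\ell(u,v)$ the weighted distance in $G$. Since every path of $h$ edges has weighted length in $[h\,\LMIN,\, h\,\LMAX]$, and since the weighted shortest path uses at least $d_1(u,v)$ edges while the unweighted shortest path gives an upper bound via its weighted length, one gets
\[
\LMIN\cdot d_1(u,v)\;\le\;d_\ell(u,v)\;\le\;\LMAX\cdot d_1(u,v).
\]
Taking the minimum over $u\in S$ on both sides (the minimizer may differ, but the inequalities are preserved) yields $\LMIN\cdot d_1(S,v)\le d_\ell(S,v)\le \LMAX\cdot d_1(S,v)$, with the conventions that $d=\infty$ gives $1/d=0$ on both sides simultaneously.

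Inverting and summing over $v\in V\setminus S$ gives the sandwich
\[
\tfrac{1}{\LMAX}\,\GH_1(S)\;\le\;\GH_\ell(S)\;\le\;\tfrac{1}{\LMIN}\,\GH_1(S),
\]
valid for every $S\subseteq V$. The final step chains these inequalities. Let $S^*_\ell$ and $S^*_1$ denote optimal solutions of size $k$ for $\GH_\ell$ and $\GH_1$ respectively, and let $\hat T$ satisfy $\GH_1(\hat T)\ge \alpha\,\GH_1(S^*_1)$. Then
\[
\GH_\ell(\hat T)\;\ge\;\tfrac{1}{\LMAX}\GH_1(\hat T)\;\ge\;\tfrac{\alpha}{\LMAX}\GH_1(S^*_1)\;\ge\;\tfrac{\alpha}{\LMAX}\GH_1(S^*_\ell)\;\ge\;\tfrac{\alpha\,\LMIN}{\LMAX}\GH_\ell(S^*_\ell)\;=\;\alpha\lambda\,\GH_\ell(S^*_\ell),
\]
using optimality of $S^*_1$ for the third inequality and the lower sandwich bound for the fourth.

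There is no real obstacle here; the only point requiring a little care is verifying that taking the minimum over $u\in S$ preserves the pointwise sandwich even though the minimizers in the weighted and unweighted graphs need not coincide, and that the $\infty$/no-path case is handled uniformly by the $1/\infty=0$ convention already adopted in the definition of $\GH$.
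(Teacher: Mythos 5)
Your proof is correct and follows essentially the same route as the paper: run the unweighted $\alpha$-approximation on the instance with all weights set to $1$, and use the sandwich $\frac{1}{\LMAX}\,\GH_1(T)\le \GH_\ell(T)\le \frac{1}{\LMIN}\,\GH_1(T)$ (which the paper states as "easy to observe" and you fully justify via the per-pair distance bounds) to chain the approximation guarantee through both optima. No gaps.
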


We now analyze Algorithm~\ref{algo:mon+submodular:greedy} in the unweighted case. 
Let $S_i$ be the set computed by Algorithm~\ref{algo:mon+submodular:greedy} at the end of iteration $i$ 
and $\Delta_i = \max_{u\in V\setminus S_{i-1}}\{f(S_{i-1}\cup\{u\})-f(S_{i-1})\}$. 
We first show an approximation result in case of $\Delta_i < 0$ (directed) or $\Delta_i \leq 0$ (undirected), respectively.

\begin{lemma} \label{lemma: one half approx}
If $G$ is directed (resp. undirected), and if at some iteration $i\in\{1,\ldots, k\}$, $\Delta_i < 0$ (resp. $\Delta_i \le 0$), then the set returned by Algorithm~\ref{algo:mon+submodular:greedy} provides a $0.5$-approximation.
\end{lemma}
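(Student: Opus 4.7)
The plan is to combine the submodularity of $GH$ (Lemma~\ref{harmonic:submodularity}) with the explicit form of harmonic centrality as a sum of reciprocal distances. I would proceed in three conceptual stages.

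First, I would show that the non-positive-marginal hypothesis at iteration $i$ propagates to every later iteration. By submodularity, for any $v$ and any $T\supseteq S_{j-1}$ one has $GH(T\cup\{v\})-GH(T)\le GH(S_{j-1}\cup\{v\})-GH(S_{j-1})$, so the sequence $(\Delta_j)_j$ is non-increasing in $j$. Hence $\Delta_i<0$ (directed) or $\Delta_i\le 0$ (undirected) implies $\Delta_j<0$ (resp.\ $\le 0$) for every $j\ge i$, and in particular every marginal from $S_k$ is also non-positive: $GH(S_k\cup\{v\})-GH(S_k)\le \Delta_i\le 0$ for each $v\notin S_k$. Applying the standard submodular telescoping inequality
\[GH(S^*\cup S_k)-GH(S_k)\le \sum_{v\in S^*\setminus S_k}\bigl[GH(S_k\cup\{v\})-GH(S_k)\bigr]\le 0\]
then yields $GH(S^*\cup S_k)\le GH(S_k)$.

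Second, I would bound $GH(S^*)-GH(S^*\cup S_k)$ by unfolding the definition of $GH$. Splitting $V\setminus S^*=(V\setminus(S^*\cup S_k))\cup(S_k\setminus S^*)$ and using the monotonicity $d(S^*\cup S_k,v)\le d(S^*,v)$ for every $v\notin S^*\cup S_k$ gives
\[GH(S^*)-GH(S^*\cup S_k)\le \sum_{v\in S_k\setminus S^*}\frac{1}{d(S^*, v)}.\]
Combining the two stages yields $GH(S^*)\le GH(S_k)+\sum_{v\in S_k\setminus S^*}1/d(S^*,v)$.

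The main obstacle will be to convert this additive estimate into the multiplicative bound $GH(S^*)\le 2\,GH(S_k)$ that the claimed $0.5$-approximation requires. My plan is to exploit the $\Delta_j\le 0$ regime more finely: each vertex $v_j\in S_k\setminus S_{i-1}$ was added in an iteration where no vertex could offer a (strictly) positive marginal, so by submodularity and the hypothesis its reciprocal-distance self-contribution $1/d(S_{j-1},v_j)$ is dominated by the gain it produces at the remaining vertices. I expect this to allow one to charge each term $1/d(S^*, v)$ against a corresponding retained harmonic contribution in $GH(S_k)$, yielding the required factor~$2$. The distinction between strict negativity (directed) and non-strict non-positivity (undirected) should determine which contributions can be charged without loss in each case, and explains why the hypothesis is stated differently in the two settings.
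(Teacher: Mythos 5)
Your stages 1 and 2 are individually correct: the marginals are non-increasing by submodularity, the standard telescoping inequality then gives $\GH(O\cup S_k)\le\GH(S_k)$, and distance monotonicity gives $\GH(O)-\GH(O\cup S_k)\le\sum_{v\in S_k\setminus O}1/\dist(O,v)$. But these two facts only yield the additive bound $\GH(O)\le\GH(S_k)+|S_k\setminus O|\le\GH(S_k)+k$ (unweighted case), and the step that would turn this into the multiplicative bound $\GH(O)\le 2\,\GH(S_k)$ --- your stage 3 --- is exactly the part you have not supplied. It cannot be closed by proving $\GH(S_k)\ge k$: in the star $K_{1,m}$ with $k=m$ the hypothesis holds from iteration $2$ onward, yet $\GH(S_k)=1$. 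So you would need the sharper charging $\sum_{v\in S_k\setminus O}1/\dist(O,v)\le\GH(S_k)$, for which you give no mechanism; worse, the intuition you offer for it (that the self-contribution $1/\dist(S_{j-1},v_j)$ is ``dominated by the gain it produces at the remaining vertices'') is the opposite of what the hypothesis asserts --- $\Delta_j\le 0$ says precisely that the gains do \emph{not} compensate that loss. Your reading of the strict/non-strict distinction is also off: it has nothing to do with which contributions can be charged, but with ruling out vertices at infinite distance (adding an unreachable vertex has marginal $\ge 0$ in the directed case, and $>0$ in the undirected case because there are no isolated nodes).

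The idea that is missing, and on which the paper's proof rests, is a structural consequence of the hypothesis: if $\Delta_i<0$ (resp.\ $\Delta_i\le 0$), then every $v\in V\setminus S_{i-1}$ satisfies $\dist(S_{i-1},v)\le 2$. Indeed, a vertex $v$ with $3\le\dist(S_{i-1},v)<\infty$ has a neighbor $u$ on a shortest path with $\dist(S_{i-1},u)\ge 2$, and adding $u$ yields marginal at least $-\tfrac12+1-\tfrac13>0$, a contradiction. From this the lemma is immediate and needs no submodularity at all: since $S\supseteq S_{i-1}$, every vertex outside $S$ contributes at least $\tfrac12$, so $\GH(S)\ge\tfrac{|V|-k}{2}$, while trivially $\GH(O)\le|V|-k$ in the unweighted case. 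I would recommend abandoning the telescoping route and proving this distance bound instead.
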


\paragraph{Analysis in the Directed Case.} 
Let us consider a set $\overline{O} \in \argmax \{ \GH(S): |S|\le k \}$ 
with smallest size $k'\leq k$. 
Note that $\overline{O}$ may be of size smaller than $k$,
whereas group-harmonic maximization asks for solutions of size exactly $k$. 
Observe that for each node $v\in \overline{O}$, we have that there exists a node $u\in V \setminus \overline{O}$ whose distance from $\overline{O}$ is due to node $v$, that is $\dist(\overline{O}\setminus \{v\},u) > \dist(\overline{O},u)$, as otherwise we can find an optimal solution with smaller size. 
This implies that $\GH(\overline{O}) \geq k'$.

Let us consider the function $h'(S) := \GH(S) + |S|$. First note that $\overline{O}$ is an optimal solution of size $k'$ for $h'$. 
Secondly, note that $h'$ is monotone, as for each $v\in V\setminus S$ 
$ h'(S\cup\{v\}) \ge \GH(S) - \frac{1}{\dist(S,v)}  + |S| +1\geq h'(S) $, 
and submodular, as it is the sum of two submodular functions. 
Moreover, note that the greedy algorithm shows the same behavior for $h(\cdot)$ and $h'(\cdot)$.
Hence, we obtain that 
\begin{align*}
h'(S_{k'}) &= \GH(S_{k'}) + k' 
 \geq \left(1-\frac{1}{e}\right)h'(\overline{O}) \\
 &= \left(1-\frac{1}{e}\right)(\GH(\overline{O}) +k'),
\end{align*}
where $S_k'$ is the set obtained at iteration $k'$ of Algorithm~\ref{algo:mon+submodular:greedy}. Hence, 
we obtain that
\begin{align*}
\GH(S_{k'}) &\geq \left(1-\frac{1}{e}\right)(\GH(\overline{O}) +k') - k' \\
&= \left(1-\frac{1}{e}\right)\GH(\overline{O}) -\frac{k'}{e}\\
&\geq \left(1-\frac{1}{e}\right)\GH(\overline{O}) -\frac{\GH(\overline{O})}{e} \\
&= \left(1-\frac{2}{e}\right)\GH(\overline{O})\ge \left(1-\frac{2}{e}\right)\GH(O),
\end{align*}
where $O$ is an optimal solution to the group-harmonic maximization problem. 
Let $S$ be the solution returned by Algorithm~\ref{algo:mon+submodular:greedy}. 
If for all iterations $i$ of the algorithm $\Delta_i$ is greater than or equal to 0, then we obtain that $\GH(S) \geq \GH(S_{k'}) \ge \left(1-\frac{2}{e}\right)\GH(O)$. 
Otherwise, by Lemma~\ref{lemma: one half approx}, we obtain that $S$ is a 0.5-approximation, which is better than the claimed approximation bound. \\

\paragraph{Analysis in the Undirected Case.}
Let $O$ be an optimal solution to group-harmonic maximization. We start with the following lemma lower bounding the increment
achieved in each iteration of Algorithm~\ref{algo:mon+submodular:greedy}.

\begin{lemma}\label{harmonic:lemmaincrement}
For each $i=1,\ldots,k$, it holds that $\GH(S_i)-\GH(S_{i-1}) \geq \frac{1}{k}\left( \GH(O) - \GH(S_{i-1}) \right) - 1$.
\end{lemma}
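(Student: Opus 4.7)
The plan is to adapt the classical greedy analysis for monotone submodular maximization to $\GH$, absorbing its non-monotonicity into an additive penalty of $1$ per greedy step. Concretely, I aim to prove the stronger intermediate inequality
\begin{align*}
\GH(O) - \GH(S_{i-1}) \le k\,\Delta_i + k,
\end{align*}
from which the lemma follows by dividing through by $k$ and recalling that $\Delta_i = \GH(S_i) - \GH(S_{i-1})$.

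\emph{Step 1 (submodular telescoping).} By Lemma~\ref{harmonic:submodularity}, $\GH$ is submodular. Enumerating $O\setminus S_{i-1} = \{v_1,\ldots,v_m\}$ with $m\le k$ and applying submodularity along the chain from $S_{i-1}$ to $S_{i-1}\cup O$, I obtain
\begin{align*}
\GH(S_{i-1}\cup O) - \GH(S_{i-1}) \le \sum_{v\in O\setminus S_{i-1}}\!\bigl(\GH(S_{i-1}\cup\{v\}) - \GH(S_{i-1})\bigr).
\end{align*}
Each marginal on the right is bounded by $\Delta_i$, since the greedy rule maximizes exactly this quantity over $V\setminus S_{i-1}$, and there are at most $k$ terms, so the right-hand side is at most $k\Delta_i$.

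\emph{Step 2 (per-vertex loss bound).} Next I bound $\GH(O) - \GH(S_{i-1}\cup O)$ by going from $O$ to $S_{i-1}\cup O$ one vertex at a time. The key observation, which is exactly where the unweighted assumption is used, is that for any set $T$ and any $u\notin T$, the unweighted distance satisfies $\dist(T,u)\ge 1$. Therefore, adding $u$ to $T$ removes a single term of value $1/\dist(T,u)\le 1$ from the defining sum, while every remaining term $1/\dist(T,w)$ can only increase, because $\dist(T\cup\{u\},w)\le \dist(T,w)$. Hence $\GH(T) - \GH(T\cup\{u\})\le 1$. Iterating this inequality along any ordering of the at most $k$ vertices of $S_{i-1}\setminus O$ gives $\GH(O) - \GH(S_{i-1}\cup O)\le |S_{i-1}\setminus O|\le k$.

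Combining the bounds from Steps 1 and 2 yields the intermediate inequality above, and rearranging gives the stated lemma. The main obstacle—and the only step that uses any structural information beyond submodularity—is Step 2: the bound $\dist(T,u)\ge 1$ is what cleanly caps the loss at $1$ per vertex and delivers the $-1$ additive term. In a weighted graph this bound would degrade to $\LMAX/\LMIN$ per vertex, which is precisely why the argument is carried out for the unweighted case and then lifted to the weighted setting via Lemma~\ref{harmonic:lemmaweighted}.
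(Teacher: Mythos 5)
Your proof is correct and reaches the required inequality by a genuinely different, and in fact cleaner, route than the paper. The paper argues directly on the graph: it partitions $V\setminus T$ into regions $\R{u}{T}$ of vertices served by each $u\in T$, upper-bounds $\GH(O)-\GH(S_{i-1})$ by a sum of per-region terms over $u\in O\setminus S_{i-1}$ (culminating in~\eqref{harmonic:lemmaincrement:four}), and then separately lower-bounds the greedy increment by each such per-region term (inequality~\eqref{harmonic:lemmaincrement:ten}), which requires some delicate bookkeeping about which of $u$, $S_{i-1}$, and $O$ is closest to each vertex. You instead route the comparison through the set $S_{i-1}\cup O$: the standard submodular telescoping gives $\GH(S_{i-1}\cup O)-\GH(S_{i-1})\le k\,\Delta_i$, and the ``approximate monotonicity'' bound $\GH(T)-\GH(T\cup\{u\})\le 1$ (valid precisely because $\dist(T,u)\ge 1$ in the unweighted case, and $0$ if $u$ is unreachable) caps the loss in passing from $O$ to $S_{i-1}\cup O$ by $|S_{i-1}\setminus O|\le k$. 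This is essentially the same device the paper itself deploys only in its \emph{directed} analysis (monotonicity and submodularity of $h'(S)=\GH(S)+|S|$), so your argument unifies the two cases, avoids the region bookkeeping entirely, and even yields the marginally sharper additive term $i-1$ in place of $k$. One shared caveat: the step $\sum_{v\in O\setminus S_{i-1}}\Delta_i=|O\setminus S_{i-1}|\,\Delta_i\le k\,\Delta_i$ tacitly assumes $\Delta_i\ge 0$ whenever $|O\setminus S_{i-1}|<k$; the paper's final inequality (``since $|O|=k$'') has the identical issue, and both are harmless because the lemma is only invoked in the regime where every $\Delta_i>0$ (otherwise Lemma~\ref{lemma: one half approx} already gives a $0.5$-approximation).
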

\begin{proof}
For a set of nodes $T$, let us partition the set $V\setminus T$ into sets $\R{u}{T}$ for $u\in T$, 
where $v\in \R{u}{T}$ if $\dist(u,v)=\dist(T,v)$; ties are broken arbitrarily in such a way that 
$\left\{ \R{u}{T} \right\}_{u\in T}$ is a partition of $V\setminus T$. Then:
{\small 
\begin{align}
\GH(O) &- \GH(S_{i-1}) = \!\!\!\sum_{v\in V\setminus O} \!\!\!\frac{1}{\dist(O,v)} \!-\!\!\!\!\! \sum_{v\in V\setminus S_{i-1}} \!\!\!\!\frac{1}{\dist(S_{i-1},v)}\nonumber\\
                 & = \!\!\!\!\!\!\sum_{v\in V\setminus(O\cup S_{i-1})} \!\!\left(  \frac{1}{\dist(O,v)} \!-\! \frac{1}{\dist(S_{i-1},v)} \right)\label{harmonic:lemmaincrement:one}\\ 
                 & \nonumber \qquad + \!\!\! \sum_{v\in S_{i-1}\setminus O} \! \frac{1}{\dist(O,v)} \!-\!\!\! \sum_{v\in O\setminus S_{i-1}} \!\!\frac{1}{\dist(S_{i-1},v)}\\
                 & =  \sum_{u\in O}\sum_{v\in \R{u}{O}\setminus S_{i-1}} \!\!\! \left(  \frac{1}{\dist(u,v)} \!-\! \frac{1}{\dist(S_{i-1},v)} \right) \label{harmonic:lemmaincrement:two} \\ 
                 & \nonumber \qquad + \sum_{v\in S_{i-1}\setminus O} \frac{1}{\dist(O,v)} - \sum_{v\in O\setminus S_{i-1}} \frac{1}{\dist(S_{i-1},v)}\\
                 & \leq \!\!\!   \sum_{u\in O\setminus S_{i-1}}\sum_{v\in \R{u}{O}\setminus S_{i-1}}\!\!\!\! \left(  \frac{1}{\dist(u,v)} \!-\! \frac{1}{\dist(S_{i-1},v)} \right)\label{harmonic:lemmaincrement:three}\\
                 & \nonumber \qquad +\!\! \sum_{v\in S_{i-1}\setminus O} \frac{1}{\dist(O,v)} - \sum_{v\in O\setminus S_{i-1}}\!\! \frac{1}{\dist(S_{i-1},v)}\\
                 & \leq \!\!\!\!\! \sum_{u\in O\setminus S_{i-1}} \!\! \left[ \sum_{v\in \R{u}{O}\setminus S_{i-1}} \!\!\!\!\!\! \left( \frac{1}{\dist(u,v)} - \frac{1}{\dist(S_{i-1},v)}  \right) \right.\label{harmonic:lemmaincrement:four} \\ 
                 & \nonumber \qquad \left. - \frac{1}{\dist(S_{i-1},u)} \right] + k,
\end{align}}
where equality~\eqref{harmonic:lemmaincrement:one} is a reordering of the terms, equality~\eqref{harmonic:lemmaincrement:two} holds since $\left\{ \R{u}{O} \right\}_{u\in O}$ is a partition of $V\setminus O$, inequality~\eqref{harmonic:lemmaincrement:three} holds because, for $u\in O\cap S_{i-1}$ and $v\in \R{u}{O}$, we have $\dist(u,v)\geq \dist(S_{i-1},v)$ and then $\frac{1}{\dist(u,v)} - \frac{1}{\dist(S_{i-1},v)}\leq 0$, and inequality~\eqref{harmonic:lemmaincrement:four} holds because $\frac{1}{\dist(O,v)}\leq 1$
for each $v\in S_{i-1}\setminus O$ and $|S_i|\le k$.

Let $\bar{v}$ be the node selected
at  iteration $i$, i.e., $S_i\setminus S_{i-1} = \{\bar{v}\}$; then, for each $u\in O\setminus S_{i-1}$, we have
{\small \begin{align}
\GH(S_i) &- \GH(S_{i-1}) = \!\!\! \sum_{v\in V\setminus S_i} \! \frac{1}{\dist(S_i,v)} \!- \!\!\! \sum_{v\in V\setminus S_{i-1}} \!\! \frac{1}{\dist(S_{i-1},v)}\nonumber\\
                    &= \!\!\! \sum_{v\in \R{\bar{v}}{S_i}} \!\!\! \left( \frac{1}{\dist(\bar{v},v)} \!-\! \frac{1}{\dist(S_{i-1},v)} \right) \!-\! \frac{1}{\dist(S_{i-1},\bar{v})}\nonumber\\
                    &\geq \!\!\!\!\!\!\!\! \sum_{v\in \R{u}{S_{i-1}\cup\{u\}}} \!\! \left( \frac{1}{\dist(u,v)} - \frac{1}{\dist(S_{i-1},v)} \right) - \frac{1}{\dist(S_{i-1},u)}\label{harmonic:lemmaincrement:seven}\\
                    &\geq \!\!\!\!\!\! \sum_{\substack{v\in \R{u}{S_{i-1}\\\cup\{u\}}\cap \R{u}{O}}} \! \left( \frac{1}{\dist(u,v)}  \!-\! \frac{1}{\dist(S_{i-1},v)} \right) - \frac{1}{\dist(S_{i-1},u)}\label{harmonic:lemmaincrement:eight} \\
                    &\geq \!\!\!\!\!\! \label{harmonic:lemmaincrement:nine}\sum_{\substack{v\in \R{u}{S_{i-1}\\\cup\{u\}}\cap \R{u}{O}}}\left( \frac{1}{\dist(u,v)} - \frac{1}{\dist(S_{i-1},v)} \right) \\
                    &\nonumber \qquad + \!\!\!\! \sum_{\substack{v\in \R{u}{O}\setminus (\R{u}{S_{i-1}\\\cup\{u\}} \cup S_{i-1})} }\!\!\!\left( \frac{1}{\dist(u,v)} - \frac{1}{\dist(S_{i-1},v)} \right)\\
                    & \nonumber \qquad \qquad - \frac{1}{\dist(S_{i-1},u)}\\
                    &=\!\!\!\! \sum_{v\in \R{u}{O}\setminus S_{i-1}} \!\!\!\! \left( \frac{1}{\dist(u,v)} \!-\! \frac{1}{\dist(S_{i-1},v)}  \right) \!-\! \frac{1}{\dist(S_{i-1},u)}\label{harmonic:lemmaincrement:ten},
\end{align}}
where inequality~\eqref{harmonic:lemmaincrement:seven} holds since node $\bar{v}$ is the one that maximizes the marginal increment and $u$ is available at iteration $i$, inequality~\eqref{harmonic:lemmaincrement:eight} follows since for each $v\in \R{u}{S_{i-1}\cup\{u\}}$, we have $\dist(u,v) \leq \dist(S_{i-1},v)$ and then all the terms in the sum are non-negative, while, for nodes $v\in \R{u}{O}\setminus \R{u}{S_{i-1}\cup\{u\}}$, we have $\dist(u,v) \geq \dist(S_{i-1},v)$, since there is a shortest path from $S_{i-1}$ to $v$ that does not pass through $u$, and hence all the terms in the second sum of~\eqref{harmonic:lemmaincrement:nine} are non-positive, which implies the last inequality.
Combining equations (\ref{harmonic:lemmaincrement:four}) and (\ref{harmonic:lemmaincrement:ten}), we have
\begin{align*}
\GH(O) \!-\! \GH(S_{i-1}) &\leq \!\!\!\!\! \sum_{u\in O\setminus S_{i-1}} \!\!\!\!\! \left( \GH(S_i) \!-\! \GH(S_{i-1}) \right) \!+\! k\\
&\leq k\cdot\left( \GH(S_i) - \GH(S_{i-1}) \right) + k,
\end{align*}
since $|O| = k$, which implies the statement.
\hfill\end{proof}

We can now prove that Algorithm~\ref{algo:mon+submodular:greedy} guarantees an approximation factor of $\frac{1}{2}\left(1-\frac{1}{e}\right)$ in the unweighted undirected case.

\begin{proof}
As we saw in Lemma~\ref{lemma: one half approx}, Algorithm~\ref{algo:mon+submodular:greedy} provides a $0.5$-approximation, which is larger than the claimed approximation ratio, if at some iteration $\Delta_i \leq 0$. Hence, we now assume that in all iterations of Algorithm~\ref{algo:mon+submodular:greedy}, we have $\Delta_i > 0$. 
In this case, we prove by induction that
\begin{equation}\label{harmonic:theoremapx2:induction}
  \GH(S_i)\geq \left(1-\left(1-\frac{1}{k}\right)^i\right)\GH(O) - i,
\end{equation}
for each iteration $i=1,\ldots,k$. The inductive basis is implied by Lemma~\ref{harmonic:lemmaincrement}, since for $i=1$ we have $\GH(S_1)\geq \frac{\GH(O)}{k} - 1$.
For $i>1$, by Lemma~\ref{harmonic:lemmaincrement} and the inductive hypothesis, we have
\begin{align*}
\GH(S_i) &= \GH(S_i) -\GH(S_{i-1}) + \GH(S_{i-1})\\
       &\geq \frac{1}{k}\left( \GH(O) - \GH(S_{i-1}) \right) - 1 + \GH(S_{i-1})\\
       &=\GH(S_{i-1})\left( 1-\frac{1}{k} \right) + \frac{1}{k}\GH(O) - 1\\
       &\geq\left[ (1- \left( 1-\frac{1}{k} \right)^{i-1})\GH(O) - i + 1 \right] \! \cdot \! \left(1-\frac{1}{k}\right)\\
       & \quad + \frac{1}{k}\GH(O) - 1\\
       &= \GH(O)\left( 1-\left( 1-\frac{1}{k} \right)^i \right) - i + \frac{i-1}{k}\\
       &\geq \GH(O)\left( 1-\left( 1-\frac{1}{k} \right)^i \right) - i.
\end{align*}
We now show that $\GH(S_i)\geq i$.
This claim is due to the fact that the number of nodes at distance one from $S_i$ is greater than $i$,
which we prove by induction.
The claim is clear for $S_1$. Let us assume the claim to be true at iteration $i-1$ and let $u$ be the node picked by the greedy algorithm at iteration $i$. 
First observe that there exists a distinct neighbor $v$ of $u$ which is at distance at least $2$ from $S_{i-1}$ since we assumed that $\Delta_i > 0$. 
If $d(S_{i-1},u)\ge 2$, then we are done. If $d(S_{i-1},u) =  1$, we prove by contradiction that $|\{v~|~d(u,v) = 1 \text{ and }  d(S_{i-1},v) \ge 2\}|\ge 2$. 
Let this set be a singleton $\{v\}$; we show that picking $v$ would yield a higher increment than $u$, a contradiction.
Indeed, let $T_u = \{ w~|~d(u,w) < d(S_{i-1},w)\}$. Note that $T_u$ contains necessarily other nodes than $v$,
as otherwise $u$ would not yield a positive increment. 
Vertex $v$ is closer than $u$ to all vertices in $T_{u} \setminus \{v\}$. 
Lastly, let $h^{u,v}(S) = \sum_{w\in \{u,v\} \setminus S} 1/d(S,w)$; then $h^{u,v}(S_{i-1}\cup \{u\}) - h^{u,v}(S_{i-1}) = h^{u,v}(S_{i-1}\cup \{v\}) - h^{u,v}(S_{i-1}) = - 1/2$.  
By this observation and Equation~(\ref{harmonic:theoremapx2:induction}) it follows that
\[
2\GH(S_i) \geq i + \GH(S_i) \geq \left(1-\left(1-\frac{1}{k}\right)^i\right)\GH(O).
\]
By setting $i=k$, we get that $\GH(S)$ is at least
\begin{align*}
 \frac{1}{2}\left(1-\left(1-\frac{1}{k}\right)^{k}\right)\GH(O) 
 \geq \frac{1}{2}\left(1-\frac{1}{e}\right)\GH(O),
\end{align*}
which concludes the proof.
\hfill\end{proof}

\subsection{Hardness Results.}
We conclude this section with two hardness of approximation results, one in the directed case and one in the undirected case. 
These results do not completely close the gap w.r.t.\ the guarantees of our approximation algorithms, but they provide 
first upper bounds on the approximation factors that can be achieved for group-harmonic maximization.
\begin{theorem}\label{harmonic:hardness:directed}
Even in the unweighted case, there is no polynomial-time algorithm that can approximate group-harmonic maximization with a factor greater than $1-1/e$, unless ${P}={NP}$.
\end{theorem}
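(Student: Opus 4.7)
The plan is to reduce from the Maximum $k$-Coverage problem, for which Feige showed that no polynomial-time algorithm can achieve an approximation factor greater than $1 - 1/e$ unless $P = NP$. Given a Max-$k$-Coverage instance consisting of a universe $\UUU$, a family $\FFF \subseteq 2^\UUU$, and an integer $k$, I would construct the directed unweighted bipartite graph $G = (V, E)$ with a node $v_S$ for every $S \in \FFF$, a node $u_e$ for every $e \in \UUU$, and a directed arc from $v_S$ to $u_e$ whenever $e \in S$. Without loss of generality, every set is non-empty and every element lies in at least one set, so $G$ has no isolated nodes and the reduction is clearly polynomial.

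The central observation is that every element-node has out-degree zero, while every set-node reaches only its own element-nodes along length-one arcs. Hence for any group $S' \subseteq V$ of size $k$ consisting of $j$ set-nodes $v_{S_1}, \ldots, v_{S_j}$ and $k - j$ element-nodes $u_{e_1}, \ldots, u_{e_{k-j}}$, every node of $V \setminus S'$ either is itself a set-node (and therefore unreachable from $S'$) or is an element-node at distance $1$ from $S'$ if covered by some $S_i$ and unreachable otherwise. Writing $C := S_1 \cup \cdots \cup S_j$, we thus obtain
\begin{equation*}
\GH(S') = \bigl| C \setminus \{e_1, \ldots, e_{k-j}\} \bigr|.
\end{equation*}

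This identity has two consequences. First, the group-harmonic optimum on $G$ equals the Max-$k$-Coverage optimum $\opt$: it is at most $\opt$ because the right-hand side is upper-bounded by the coverage of $j \le k$ sets, and it is at least $\opt$ because selecting the $k$ optimal cover sets as set-nodes yields exactly $\opt$. Second, any group $S'$ with $\GH(S') \ge \alpha \cdot \opt$ can be converted in polynomial time into a Max-$k$-Coverage solution of value at least $\alpha \cdot \opt$ by keeping the $j$ set-nodes in $S'$ and replacing each of the $k - j$ element-nodes by an arbitrary previously unchosen set-node; this replacement can only enlarge $C$, so the resulting coverage has size at least $\GH(S')$.

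Combining these two points, any polynomial-time $\alpha$-approximation for group-harmonic maximization with $\alpha > 1 - 1/e$ would yield one for Max-$k$-Coverage of the same factor, contradicting Feige's theorem unless $P = NP$. The only potentially delicate point is the treatment of groups that mix set-nodes with element-nodes, and this is handled cleanly by the out-degree-zero property of element-nodes; everything else is bookkeeping.
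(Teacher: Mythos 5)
Your proposal is correct and is essentially the same reduction as the paper's: a directed bipartite graph with arcs from set-nodes to element-nodes, budget $k$, and the identification of $\GH$ with coverage. The only difference is that you spell out the handling of groups mixing set-nodes and element-nodes (via the explicit formula $\GH(S') = |C \setminus \{e_1,\ldots,e_{k-j}\}|$ and the replacement argument), which the paper compresses into its unproved observation that an optimal/approximate solution may be assumed to consist only of set-nodes.
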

\begin{proof}
We provide a simple reduction from the maximum coverage problem which is known to be hard to approximate better than $1 - 1/e$~\cite{feige1996threshold}. 
In the maximum coverage problem, we are given a universe $U = \{x_1, \ldots, x_n\}$ of $n$ elements, a collection $C = \{S_1,\ldots, S_m\}$ of $m$ subsets of $U$ and a positive integer $k$. 
The goal is to select $k$ sets $\{S_{i_1},\ldots,S_{i_k}\}$ in $C$ that maximize $|\bigcup_{j=1}^{k} S_{i_j}|$. 
Given an instance $(U,C,k)$, we create the following unweighted digraph. 
There exists a vertex $v_x$ for each element $x\in U$ and a vertex $v_S$ for each set $S\in C$. 
Moreover, there is one arc from $v_S$ to $v_x$ if $x\in S$. 
Then we consider the group harmonic maximization instance defined by this digraph and a budget $k$.
The soundness of the reduction stems from the following two observations. (1) To maximize the group harmonic maximization problem, one should only select vertices associated to sets. (2) For a solution $T$ only compounded of vertices associated to sets, $\GH(T) = |\bigcup_{v_{S}\in T} S|$.\hfill
\end{proof}

\begin{theorem}\label{harmonic:hardness:undirected}
Even in the unweighted undirected case, there is no polynomial-time algorithm that can approximate group-harmonic maximization with a factor greater than $1-1/4e$, unless ${P}={NP}$.
\end{theorem}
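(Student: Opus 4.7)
The plan is to mimic the reduction from maximum coverage used in Theorem~\ref{harmonic:hardness:directed} but to robustify it against the symmetry of undirected edges. Two things go wrong if one simply orients the directed construction: element-vertices become selectable and cover the sets containing them, and every unselected set-vertex sits at distance exactly $2$ from any group of set-vertices, so a large constant contribution threatens to flatten the dependence on the coverage. Both issues will be handled by one blow-up trick: replacing each element by many identical copies.

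The construction I have in mind takes a max-$k$-coverage instance $(U,C,k)$ with $|U|=n$ and $|C|=m$, creates a set-vertex $v_{S_i}$ for each $S_i\in C$ and $N$ element-copies $v_{x_j}^{(1)},\ldots,v_{x_j}^{(N)}$ for each $x_j\in U$, and adds an edge between $v_{S_i}$ and every $v_{x_j}^{(t)}$ whenever $x_j\in S_i$; the budget is $k$ and $N$ is a sufficiently large polynomial in $n,m$. Since $G$ is bipartite between set- and element-vertices, an uncovered element-copy must lie at distance at least $3$ from any group of set-vertices. The first nontrivial step is to show that for $N$ large enough we may restrict attention to groups of set-vertices: swapping an element-copy $v_{x_j}^{(t)}\in T$ for any set-vertex $v_{S_i}\notin T$ with $x_j\in S_i$ brings an $\Omega(N)$ gain (from the many other copies of $x_j$ and of other elements in $S_i$ whose distance drops to $1$) against only an $O(m)$ possible loss at other vertices.

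Once we know there is an optimal $T^*$ of set-vertices, straightforward accounting on the bipartite graph gives an upper bound
\[
\GH(T)\;\le\; N\left(\frac{n}{3}+\frac{2q}{3}\right)+O(m)
\]
for any set-vertex group $T$ with coverage $q$, with a matching $\GH(T)\ge qN$ lower bound; in particular, when $q=n$ we get $\GH(T)\ge nN$. Choosing $N$ to be a sufficiently large polynomial in $n,m$ renders the $O(m)$ slack negligible compared to $Nn$.

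To conclude, I would invoke Feige's NP-hardness of distinguishing max-$k$-coverage instances with $q^*=n$ from those with $q^*\le(1-\frac{1}{e}+\epsilon)n$~\cite{feige1996threshold}. Under the reduction, these cases map to $\GH$-values at least $Nn$ and at most $Nn\bigl(1-\frac{2}{3e}+\frac{2\epsilon}{3}\bigr)+o(Nn)$, respectively, so no polynomial-time algorithm can approximate group-harmonic maximization to within $1-\frac{2}{3e}$, a strictly stronger statement than the theorem's $1-\frac{1}{4e}$ and hence implying it. The main obstacle I foresee is the swap argument in the second paragraph: carefully tracking distance changes across several vertex classes simultaneously (other copies of $x_j$, copies of other elements in $S_i$, unselected set-vertices that lose their nearest neighbor in $T$), and showing that $N=\poly(n,m)$ can be chosen to dominate all lower-order losses, is the delicate part.
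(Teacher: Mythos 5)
Your reduction is a genuinely different route from the paper's (which derives a logarithmic-factor approximation for \emph{Minimum Dominating Set} from a hypothetical $\gamma$-approximation, via an iterative peeling gadget), and the construction itself is sound --- but the swap argument at its core is false, and with it the claimed constant $1-\frac{2}{3e}$. The problem is that removing an element-copy from the group can hurt $\Theta(Nn)$ vertices, not $O(m)$: every copy of every element that co-occurs with $x_j$ in \emph{some} set lies at distance $2$ from $v_{x_j}^{(t)}$ (through the shared set-vertex), and after the swap such a copy can fall back to distance $3$, losing $\frac{1}{2}-\frac{1}{3}=\frac{1}{6}$ each. Concretely, take a sunflower instance $S_i=\{x_0\}\cup\{x_i\}$ with $A_i$ disjoint: the copy $v_{x_0}^{(t)}$ is adjacent to all $m$ set-vertices and hence within distance $2$ of \emph{all} $Nn$ element-copies, so $\GH(\{v_{x_0}^{(t)}\})\approx \frac{Nn}{2}$ beats any single set-vertex, and a group containing this hub copy plus $k-1$ set-vertices strictly beats the best $k$ set-vertices whenever $m-k>3$. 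So the optimum need not consist of set-vertices, your upper bound $N(\frac{n}{3}+\frac{2q}{3})+O(m)$ does not apply to it, and $1-\frac{2}{3e}$ is not established by this argument.

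The good news is that the reduction survives with a weaker but still sufficient constant, and without any swap lemma. Element-copies are adjacent only to set-vertices, so for an \emph{arbitrary} group $T$ of size $k$ the element-copies at distance $1$ number at most $q^{*}N$, where $q^{*}$ is the maximum coverage achievable by the at most $k$ set-vertices in $T$; all remaining copies contribute at most $\frac{1}{2}$. Hence $\GH(T)\le \frac{N(n+q^{*})}{2}+O(m)$ for every $T$, while the YES case still gives $\GH\ge Nn$. With Feige's gap ($q^{*}\le(1-\frac{1}{e}+\epsilon)n$ in the NO case) this yields inapproximability within $1-\frac{1}{2e}+O(\epsilon)$, which is smaller than $1-\frac{1}{4e}$ and therefore implies the theorem --- in fact, written this way your reduction is arguably simpler than the paper's proof. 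But as proposed, with the $\frac{1}{3}$ accounting resting on the unproved (and false) reduction to set-vertex-only groups, the argument has a real gap.
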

In order to prove the theorem, we assume that there exists a $\gamma$-approximation algorithm $\mathcal{A}$ for the group harmonic maximization problem, where $\gamma > 1-1/4e$. We then show that, using $\mathcal{A}$, we can get a logarithmic-factor approximation algorithm to the minimum dominating set problem, which is not possible unless $P=NP$~\cite{DinurS14}. \ArxivOrCr{See Appendix~\ref{apx:reduction harmonic undirected} for the proof.}{The complete proof can be found in the full version~\cite{abdggm20arxiv}.}

\section{Group-Closeness Maximization}\label{sec:GC}
\subsection{Preliminary Discussion.}
\label{sub:prelim-discussion}
Different variants of the group-closeness maximization problem occur depending on whether the graph at hand is undirected or directed. 
When studying these problems from an approximation algorithm's perspective, it is tempting to observe that the group-farness $\GF(\cdot)$ is a supermodular set function.
In the literature, see the paper by Chen et al.~\cite{ChenWW16}, this has been used to argue that $\GC(\cdot)$ is submodular by falsely assuming that 
the reciprocal of a supermodular function was submodular. It is well-known~\cite{NemhauserWF78} that maximizing a submodular set function with respect to 
a cardinality constraint can be done using the greedy algorithm within an approximation factor of $1-1/e$. Unfortunately, this approach is flawed and thus the 
approximation question can be considered as still unresolved for this problem. 
\ArxivOrCr{In Appendix~\ref{apx:example}, we provide a counter-example to the submodularity of $\GC(\cdot)$.}{In the full version of this article, we provide a counter-example to the submodularity of $\GC(\cdot)$\cite[Appendix~B]{abdggm20arxiv}.}

A similar, yet non-flawed, approach has been recently taken by Li et al.~\cite{0002PSYZ19}. 
In their work, which deals with a different notion of group centrality, namely ``current-flow closeness centrality'', they measure the approximation factor of their algorithms in a different way, allowing them to
obtain constant-factor approximation results. 
\ArxivOrCr{In Appendix~\ref{apx:li-approach}, we argue that an approach similar to theirs can be applied also in our setting, yielding constant-factor approximation algorithms in their sense.}{In the full version of this article~\cite[Appendix~C]{abdggm20arxiv}, we argue that an approach similar to theirs can be applied also in our setting, yielding constant-factor approximation algorithms in their sense.}
We would like to stress, however, that this notion of approximation used in the work by Li et al.\ is a fundamentally different notion of approximation.

\subsection{Approximation Algorithms.}
We will observe that the undirected and directed problems fundamentally differ from an approximation algorithm's perspective when considering the standard notion of approximation factor. 
Indeed, while the undirected case allows for a constant-factor approximation, it is ${NP}$-hard to approximate directed group-closeness maximization to within a factor better than $\Theta(n^{-\eps})$ for any $\eps<1/2$. We stress that this strong separation between the directed and undirected case even occurs in the unweighted case.

We start by introducing the metric $k$-Median problem following Arya et al.~\cite{AryaGKMMP04}.
\begin{cproblem}{Metric $k$-Median}
    Input: Set of clients $C$, set of facilities $F$, cost matrix $c$ with $c_{i,j}\ge 0$ for $i\in C$, $j\in F$, satisfying triangle inequality, integer \(k\).

    Find: Set \(S\subseteq F\) with \(|S|\le k\), s.t.\ \(c(S):=\sum_{i\in C}\min_{j\in S} c_{i,j}\) is minimum.
\end{cproblem}
Arya et al.\ show that the local search algorithm that performs $p$ swaps at a step leads to a solution with approximation ratio at most $3+2/p$ for Metric $k$-Median. 

The group-farness problem can be seen as a special case of the metric $k$-Median problem where $C$ and $F$ are both taken to be the vertex set and the cost matrix being obtained using the shortest path distances.
Since $\GF(\cdot)$ is monotone, the result of Arya et al.\ carries over to the undirected group-farness maximization problem with exact cardinality constraint, yielding an approximation factor of $\frac{p}{3p +2}$ for group-closeness maximization.

\subsection{Hardness Results.}%
\paragraph{The Undirected Case.}
Following~\cite{DAngeloDNP16}, it is ${NP}$-hard to approximate the metric $k$-median problem
to within a factor of $1+1/e$ even in the case when sets $C$ and $F$ are the same set.
This is equivalent to the group-farness minimization problem in undirected connected graphs and hence we get that it is ${NP}$-hard to approximate this latter to within a factor $1+1/e=(e+1)/e\approx 1.37$.
Similarly, we get that it is ${NP}$-hard to approximate the undirected group-closeness maximization problem to within a factor of $e/(e+1)=1-1/(e+1)\approx 0.73$.

\paragraph{The Directed Case.}
In this paragraph, we prove that the directed case fundamentally differs from the undirected case from an approximation algorithm's perspective, that is we will show the following theorem:
\begin{theorem}\label{theorem:hardness directed}
It is ${NP}$-hard to approximate the group-closeness maximization problem within $4\cdot|V|^{-\epsilon}$ for any $\epsilon \in (0,1/2)$, even in the case of an unweighted DAG.
\end{theorem}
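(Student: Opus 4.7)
The plan is to establish the hardness via a gap-preserving reduction from the \emph{maximum coverage} problem, which given a universe of $n$ elements, a family of $m$ subsets, and an integer $k$ asks for $k$ subsets with largest union and which is ${NP}$-hard to approximate within $1-1/e$~\cite{feige1996threshold}, and then to amplify this constant gap into a polynomial-factor inapproximability result by embedding the instance inside a DAG with long paths and padded witnesses. Given $\epsilon \in (0, 1/2)$, I set $\delta := 1/2 - \epsilon > 0$.

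Given a maximum coverage instance with $n$ elements, $m$ sets, and budget $k$, I introduce two polynomial parameters $L = L(n,\delta)$ and $M = M(n,\delta)$, and build the following DAG. Add a unique source $s$; for each set $S_i$, add a directed path $s \to q_{i,1}^S \to \cdots \to q_{i,L}^S$ of length $L$, calling the endpoint the \emph{set vertex} of $S_i$; for each element $e_j$, add a vertex $e_j$ with an arc from the set vertex of $S_i$ whenever $e_j \in S_i$; finally, attach $M$ \emph{witness} vertices to each $e_j$ via arcs $e_j \to w_{j,1}, \ldots, e_j \to w_{j,M}$. Set the group-closeness budget to $k+1$.

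Any group $S$ with $\GC(S) > 0$ must contain $s$, since $s$ is the unique source, leaving $k$ free slots. An exchange argument shows that, without loss of generality, all $k$ other members of $S$ can be taken to be set vertices. In the YES case, $k$ sets cover all elements, so the honest group places every element at distance $1$ and every witness at distance $2$, yielding total distance $\Theta(mL^2 + nM)$. In the NO case, any $k$-set choice misses $\Omega(n)$ elements, and the $\Omega(nM)$ witnesses of these unreached elements lie at distance $L+2$ via the long paths from $s$, forcing total distance $\Omega(mL^2 + nLM)$. Since $\GC = |V|/\text{cost}$, the ratio of the YES-optimum to the NO-optimum of $\GC$ is of order $L$. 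Choosing $L$ and $M$ as polynomials in $n$ appropriately balanced so that $|V| = \Theta(mL + nM)$ and the ratio $L$ dominates $|V|^{1/2-\delta}$, yields a gap of at least $|V|^{1/2-\delta}/4 = |V|^{\epsilon}/4$, establishing the claimed inapproximability bound.

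The main obstacle is justifying the exchange argument in the NO case, i.e., showing that no adversarial group can beat the bound by picking path-internal vertices, element vertices, or witness vertices in place of set vertices. Replacing any such vertex by a well-chosen set vertex either preserves or strictly decreases the total distance cost, since a set vertex at depth $L$ \emph{services} all of its element and witness descendants (up to $M$ per element) at small distance $1$ or $2$, whereas the alternative picks reach only a small local neighborhood. Combining this exchange argument with the $1-1/e$ gap for maximum coverage in the NO case yields the desired inapproximability factor of $4|V|^{-\epsilon}$.
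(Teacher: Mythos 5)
Your proposal is correct and follows essentially the same route as the paper's proof: a coverage-type reduction with a unique source, long paths feeding set-gadget endpoints, heavily padded element-gadgets so that uncovered elements dominate the total distance, an exchange lemma normalizing solutions to $\{s\}$ plus set vertices, and polynomial parameter balancing to reach any exponent $\epsilon<1/2$. The only differences are cosmetic: the paper reduces from plain Set Cover and attaches $\Lambda=m\alpha^2$ copies to each element so that a single uncovered element already forces the gap, whereas you reduce from gap Max Coverage with a separate witness layer and rely on $\Omega(n)$ uncovered elements -- both yield the same bound.
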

To prove this result, we provide a reduction from the Set Cover problem. Let $X = \{x_1,\ldots,x_n\}$ be a universe, $C = \{C_1,\ldots,C_m\}$ be a collection of subsets of $X$ and $k$ be an integer. The Set Cover problem investigates whether there exists a subset $C'\subseteq C$ of size at most $k$ such that $\bigcup_{C_i\in C'} C_i = X$. Importantly, note that the Set Cover problem is NP-hard even if $m$ is less than or equal to $3n$. Indeed, there is a simple reduction from Exact Cover by 3-Sets to the Set Cover problem and Exact Cover by 3-Sets remains NP-hard when each element appears in exactly three subsets~\cite{Gonzalez85}.

\emph{The reduction.} Let $(X,C,k)$ be a Set Cover instance and let $\delta > 0$ be arbitrary. We construct the following instance of the group-closeness centrality problem: For each set $C_j$, we create  $\alpha:=\lceil n^{1+\delta} \rceil$ vertices $\{q_j^\ell : \ell \in [\alpha]\}$ and connect them in the form of a path by $\alpha -1$ arcs $\{(q_j^\ell, q_j^{\ell+1}):\ell \in [\alpha - 1]\}$. For each element $x_i$, we create $\Lambda := m\alpha^2$ vertices $\{p_i^t: t\in[\Lambda]\}$ and arcs $(q_j^{\alpha},p_i^t)$ for all $t\in[\Lambda]$, if $x_i\in C_j$. Lastly, we add a vertex $s$ and arcs $(s,q_j^{1})$ for all $j\in [m]$. The budget is set to $k+1$. The reduction is illustrated in Figure~\ref{figure:reduction directed}. The number of resulting vertices $V$ can be bounded as $|V| = m\alpha + nm\alpha^2 + 1 \le 4n^{4+2\delta}$ using that \(m \le 3n\) and assuming that $n\ge 18$. Indeed,
\begin{align*}
   |V| & =  m\alpha + nm\alpha^2 + 1 \\
       &\le m(n^{1+\delta}+1) + nm(n^{1+\delta}+1)^2 + 1\\
                             &\le mn^{1+\delta}+ m + mn^{3+2\delta} + 2mn^{2+\delta} + mn + 1\\
                             &\le 6mn^{2+\delta} + mn^{3+2\delta}\\
                             &\le 18n^{3+\delta} + 3n^{4+2\delta}\\
                             &\le 4n^{4+2\delta}.
\end{align*}

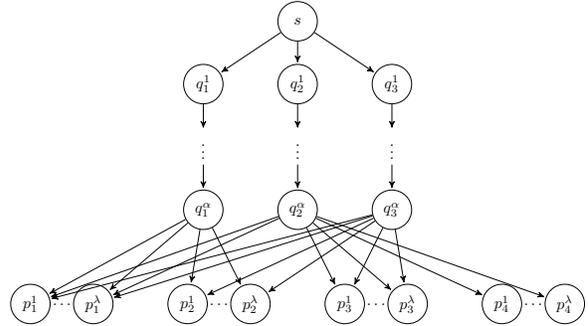
\begin{figure}
    \centering{
        \resizebox{.45\textwidth}{!}{ 
    \begin{tikzpicture}[scale=.7,->,>=stealth', shorten >=1pt, auto, semithick, every node/.style = {minimum width = 2.5em}]
        \node[circle,draw] (s)   at (0,6)  {$s$};
        \node[circle,draw] (q1v) at (-3,4) {$q_1^{1}$};
        \node[circle,draw] (q2v) at (0,4) {$q_2^{1}$};
        \node[circle,draw] (q3v) at (3,4) {$q_3^{1}$};
        \node (q1i) at (-3,2) {$\vdots$};
        \node (q2i) at (0,2) {$\vdots$};
        \node (q3i) at (3,2) {$\vdots$};
        \node[circle,draw] (q11) at (-3,0) {$q_1^{\alpha}$};
        \node[circle,draw] (q21) at (0,0) {$q_2^{\alpha}$};
        \node[circle,draw] (q31) at (3,0) {$q_3^{\alpha}$};
        
        \node[circle,draw] (p11) at (-8.5,-3) {$p_1^{1}$};
        \node (p1j) at (-7.5,-3) {$\ldots$};
        \node[circle,draw] (p1lamb) at (-6.5,-3) {$p_1^{\lambda}$};
        
        \node[circle,draw] (p21) at (-3.5,-3) {$p_2^{1}$};
        \node (p2j) at (-2.5,-3) {$\ldots$};
        \node[circle,draw] (p2lamb) at (-1.5,-3) {$p_2^{\lambda}$};
        
        \node[circle,draw] (p31) at (1.5,-3) {$p_3^{1}$};
        \node (p3j) at (2.5,-3) {$\ldots$};
        \node[circle,draw] (p3lamb) at (3.5,-3) {$p_3^{\lambda}$};
        
        \node[circle,draw] (p41) at (6.5,-3) {$p_4^{1}$};
        \node (p4j) at (7.5,-3) {$\ldots$};
        \node[circle,draw] (p4lamb) at (8.5,-3) {$p_4^{\lambda}$};

        \path (s) edge node {} (q1v)
        (s) edge node {} (q2v)
        (s) edge node {} (q3v)
        (q1v) edge node {} (q1i)
        (q2v) edge node {} (q2i)
        (q3v) edge node {} (q3i)
        (q1i) edge node {} (q11)
        (q2i) edge node {} (q21)
        (q3i) edge node {} (q31)
        
        (q11) edge node {} (p11)
        (q11) edge node {} (p1lamb)
        (q21) edge node {} (p11)
        (q21) edge node {} (p1lamb)
        (q31) edge node {} (p11)
        (q31) edge node {} (p1lamb)
        
        (q11) edge node {} (p21)
        (q11) edge node {} (p2lamb)
        (q31) edge node {} (p21)
        (q31) edge node {} (p2lamb)
        
        (q21) edge node {} (p31)
        (q21) edge node {} (p3lamb)
        (q31) edge node {} (p31)
        (q31) edge node {} (p3lamb)
        
        (q21) edge node {} (p41)
        (q21) edge node {} (p4lamb);
      \end{tikzpicture}
    }
    }
    \caption{Illustration of the reduction applied to the Set Cover instance where $X = \{x_i:i\in [4]\}$ and $C = \{\{x_1,x_2\},\{x_1,x_3,x_4\}, \{x_1,x_2,x_3\}\}$.}
    \label{figure:reduction directed}
\end{figure}

We start with the following observation that follows from $\dist(v,s)=\infty$ for any $v\neq s$.
\begin{observation}
    For any $S\subseteq V\setminus\{s\}$, it holds that $\sum_{v\in V\setminus S} \dist(S, v) = \infty$.
\end{observation}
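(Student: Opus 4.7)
The plan is to verify the Observation by a direct, short argument based on the topology of the constructed digraph. First, I would examine the arc set of the reduction to confirm that $s$ is a pure source: the only arcs incident to $s$ are the outgoing arcs $(s,q_j^1)$ for $j\in[m]$, while all other arcs in the construction ($(q_j^\ell, q_j^{\ell+1})$ along the paths and $(q_j^\alpha, p_i^t)$ from path-endpoints to element-copies) do not have $s$ as an endpoint. Hence $s$ has in-degree zero in the digraph.

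Second, I would conclude that $s$ is unreachable from every other vertex. Since any directed walk ending at $s$ would require an arc entering $s$, the absence of such arcs immediately gives $\dist(v,s)=\infty$ for every $v\in V\setminus\{s\}$, which is exactly the fact cited just before the Observation.

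Third, I would plug this into the definition of $\dist(S,\cdot)$. Take any $S\subseteq V\setminus\{s\}$. Then $s\in V\setminus S$, so the term $\dist(S,s)$ appears in the sum $\sum_{v\in V\setminus S}\dist(S,v)$. By definition $\dist(S,s)=\min_{u\in S}\dist(u,s)$, and since every $u\in S$ satisfies $u\neq s$, each $\dist(u,s)=\infty$, yielding $\dist(S,s)=\infty$. As all distances are nonnegative, the whole sum is $\infty$, which proves the Observation.

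The argument is essentially a one-line topological observation about $s$ followed by a one-line unpacking of the definition of $\dist(S,\cdot)$; there is no real obstacle. The only point worth stating cleanly is that the quantifier ``for any $v\neq s$'' in the cited fact is applied with $v$ ranging over $S$, so that the minimum defining $\dist(S,s)$ is taken over a nonempty set of values all equal to $\infty$ (this is where the hypothesis $S\subseteq V\setminus\{s\}$ is used).
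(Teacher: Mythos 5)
Your argument is correct and is exactly the paper's reasoning: the paper simply notes that the observation ``follows from $\dist(v,s)=\infty$ for any $v\neq s$,'' which is the fact you establish from $s$ having in-degree zero and then unpack via the term $\dist(S,s)$ in the sum. Nothing further is needed.
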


We continue with further observations.
\begin{observation}
    If there is a set cover \(\{C_{j_1},\ldots,C_{j_k}\}\) of size $k$, then $S := \{s\} \cup \{q_{j_r}^{\alpha}|r \in [k]\}$ satisfies
    \begin{equation}\label{eq:yesInst}
        \sum_{v\in V\setminus S} \dist(S, v)
        \le m\alpha^2 + \Lambda n = \Lambda (n+1)
    \end{equation}
\end{observation}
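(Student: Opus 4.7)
The plan is to partition the vertices of $V\setminus S$ into three natural classes by vertex type and bound $\dist(S,v)$ separately on each class. Concretely, these classes are (i) the element-copy vertices $p_i^t$ for $i\in[n]$, $t\in[\Lambda]$, (ii) the inner path vertices $q_j^\ell$ for $j\in[m]$ and $\ell\in[\alpha-1]$, and (iii) the top-of-path vertices $q_j^\alpha$ for $j\notin\{j_1,\ldots,j_k\}$ (since $q_{j_r}^\alpha\in S$ by construction). The observation $\dist(v,s)=\infty$ for $v\neq s$ plays no role here, since $s\in S$, so only shortest paths \emph{out of} $S$ matter.

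For class (i), I would invoke the set-cover property: for every element $x_i$, there is an index $r\in[k]$ with $x_i\in C_{j_r}$, so by construction there is an arc $(q_{j_r}^\alpha,p_i^t)\in E$ for all $t\in[\Lambda]$. Since $q_{j_r}^\alpha\in S$, this gives $\dist(S,p_i^t)\le 1$, and hence this class contributes at most $n\Lambda$ to the sum. For classes (ii) and (iii), I would use that $s\in S$ together with the arc $(s,q_j^1)$ and the path arcs $(q_j^\ell,q_j^{\ell+1})$ to conclude $\dist(S,q_j^\ell)\le \ell$ for every $j\in[m]$ and $\ell\in[\alpha]$.

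Summing the contributions of (ii) and (iii) gives an upper bound of
\[
\sum_{j=1}^{m}\sum_{\ell=1}^{\alpha-1}\ell \;+\; (m-k)\alpha
\;\le\; m\cdot\frac{\alpha(\alpha-1)}{2}+m\alpha \;\le\; m\alpha^2,
\]
where the last inequality holds for $\alpha\ge 2$. Combining with the contribution of class (i), one obtains
\[
\sum_{v\in V\setminus S}\dist(S,v)\;\le\; m\alpha^2 + n\Lambda \;=\; \Lambda + n\Lambda \;=\; \Lambda(n+1),
\]
where the middle equality uses the definition $\Lambda=m\alpha^2$. This yields exactly the stated bound.

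There is no real obstacle here beyond careful bookkeeping: the argument is essentially a case split, and the only substantive step is the use of the set-cover hypothesis to ensure that every $p_i^t$ is reachable in a single hop from $S$, while the path structure from $s$ trivially controls the distance to every $q_j^\ell$.
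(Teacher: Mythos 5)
Your proof is correct and follows essentially the same approach as the paper: the paper likewise attributes the $m\alpha^2$ term to the $q$-vertices outside $S$ (each reachable from $s$ along its path) and the $\Lambda n$ term to the $p$-vertices, which are at distance one from $S$ by the set-cover property. Your version merely spells out the bookkeeping (and in fact obtains the slightly tighter intermediate bound $m\alpha(\alpha+1)/2$ for the $q$-vertices), which the paper leaves implicit.
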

In the above inequality, the first summand is due to the vertices from $\{q_j^l: j \in [m], l\in[\alpha]\}$ that are not in $S$, while the second summand is due to the $\Lambda n$ vertices $\{p_i^t:i\in[n],t\in[\Lambda]\}$ that are all at distance one from $S$.

Conversely, we obtain the following observation.
\begin{observation}
    Let $S \subset \{s\}\cup\{q_{j}^{\alpha}:j\in [m]\}$ be such that $\{C_j|q_{j}^{\alpha}\in S\}$ does not correspond to a set cover in $(X, C, k)$ , then
    \begin{equation}\label{eq:noInst}
        \sum_{v\in V\setminus S} \dist(S, v) \ge \Lambda \alpha.
    \end{equation}
\end{observation}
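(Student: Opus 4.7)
The plan is to exploit the single piece of combinatorial information available: since $\{C_j : q_j^\alpha \in S\}$ is not a set cover of $X$, there exists an element $x_i \in X$ not contained in any $C_j$ with $q_j^\alpha \in S$. I will focus entirely on the $\Lambda$ gadget vertices $p_i^1,\ldots,p_i^\Lambda$ attached to this uncovered element, arguing that their distances from $S$ alone sum to at least $\Lambda \alpha$. Since all other summands in $\sum_{v\in V\setminus S}\dist(S,v)$ are non-negative, such a partial lower bound transfers immediately to the full sum.

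To bound $\dist(S, p_i^t)$, I will use the structural fact that the only incoming arcs of any $p_i^t$ are of the form $(q_j^\alpha, p_i^t)$ with $x_i \in C_j$. Thus every directed path from a vertex $u \in S$ to $p_i^t$ must end in such an arc. Since $S \subseteq \{s\}\cup\{q_j^\alpha : j\in [m]\}$, every non-$s$ element of $S$ has the form $q_{j'}^\alpha$, whose out-neighbourhood consists only of those $p_{i'}^{t'}$ with $x_{i'}\in C_{j'}$; by our choice of $x_i$, $x_i\notin C_{j'}$ whenever $q_{j'}^\alpha\in S$, so no such $q_{j'}^\alpha$ reaches any $p_i^t$. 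This leaves $s$ as the only possible source in $S$, and any $s$-to-$p_i^t$ path must traverse $s \to q_{j'}^1 \to \cdots \to q_{j'}^\alpha \to p_i^t$ for some $j'$ with $x_i\in C_{j'}$, giving length exactly $\alpha+1$.

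This yields a clean case split. If $s \in S$, then $\dist(S,p_i^t)\ge \alpha+1$ for every $t\in[\Lambda]$, so the partial sum over the $p_i^t$'s is at least $\Lambda(\alpha+1)\ge \Lambda\alpha$. If $s\notin S$, then $p_i^t$ is unreachable from $S$, so $\dist(S,p_i^t)=\infty$ and the full sum is infinite, trivially satisfying the claim. The main obstacle is simply to handle the $s\notin S$ case and the convention that unreachable vertices contribute $+\infty$; once those are pinned down, the argument reduces to the elementary observation that the length-$\alpha$ path gadgets force any ``rescue'' of $p_i^t$ through $s$ to pay the full cost $\alpha+1$, which is why the statement comfortably absorbs the slack between $\Lambda(\alpha+1)$ and $\Lambda\alpha$.
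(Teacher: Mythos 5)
Your proof is correct and matches the reasoning the paper intends for this (unproved) observation: an uncovered element $x_i$ forces all $\Lambda$ vertices $p_i^t$ to be reached only via the length-$(\alpha+1)$ path through $s$, giving $\Lambda(\alpha+1)\ge\Lambda\alpha$, with the $s\notin S$ case degenerating to an infinite sum. Your handling of both cases and of the unreachability convention is sound.
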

Note that the previous inequality has made an assumption on the elements that compose the set \(S\). The following lemma, will provide the rational behind this assumption.

\begin{lemma} \label{lemma:directed reduction}
    Let $S$ be a set of vertices containing $s$ such that $S \not\subset \{s\}\cup\{q_{j}^{\alpha}:j\in [m]\}$,
    then either $\{C_j:q_{j}^{\alpha}\in S\}$ corresponds to a set cover in $(X, C, k)$ or we can, in polynomial time, build a set $S'$ from $S$ with $|S'|=|S|$ such that $S\cap \{q_{j}^{\alpha}: j\in [m]\} \subseteq S'\cap \{q_{j}^{\alpha}:j \in [m]\}$ and $\sum_{v\in V\setminus S} \dist(S, v) > \sum_{v\in V\setminus S'} \dist(S', v)$.
\end{lemma}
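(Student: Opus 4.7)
The plan is to exhibit, whenever $\{C_j : q_j^\alpha \in S\}$ is not already a set cover, a single swap that strictly decreases the farness while preserving the $q_{j''}^\alpha$-vertices of $S$. Set $T := \{j : q_j^\alpha \in S\}$ and suppose $\bigcup_{j \in T} C_j \neq X$ (otherwise the first alternative of the lemma holds). Fix an uncovered element $x_{i^*}$ and a set $C_{j^*} \ni x_{i^*}$; note that $q_{j^*}^\alpha \notin S$. By hypothesis $S$ contains some vertex $v \notin \{s\} \cup \{q_j^\alpha : j \in [m]\}$, so $v$ is either some $p_i^t$ or some $q_j^\ell$ with $\ell < \alpha$.

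The swap is chosen by a three-way case analysis. In Case~A ($v = p_i^t$) and in Case~B-ii ($v = q_j^\ell$ with $\ell < \alpha$ and $C_j \subseteq \bigcup_{j' \in T} C_{j'}$), I set $S' := (S \setminus \{v\}) \cup \{q_{j^*}^\alpha\}$. In Case~B-i ($v = q_j^\ell$ with $\ell < \alpha$ and $C_j \not\subseteq \bigcup_{j' \in T} C_{j'}$), I set $S' := (S \setminus \{v\}) \cup \{q_j^\alpha\}$; note that $q_j^\alpha \notin S$ in this case, since the case hypothesis forces $j \notin T$. In all three cases $|S'| = |S|$, and since $v$ is not itself a $q_{j''}^\alpha$, the containment $S \cap \{q_{j''}^\alpha : j'' \in [m]\} \subseteq S' \cap \{q_{j''}^\alpha : j'' \in [m]\}$ is automatic.

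To bound the farness change I decompose the swap into two steps: first add the new vertex to obtain $S_1 := S \cup \{u\}$, then delete $v$. Step one gains at least $\Lambda = m\alpha^2$: in Cases~A and B-ii the vertex $q_{j^*}^\alpha$ reduces $\dist(\cdot, p_{i^*}^t)$ from at least $2$ (since $x_{i^*}$ is uncovered by $T$ and no current member of $S$ can reach $p_{i^*}^t$ in a single step) down to exactly $1$ for all $t \in [\Lambda]$; in Case~B-i the vertex $q_j^\alpha$ does the same for the $\Lambda$ vertices $p_i^t$ with some $x_i \in C_j$ uncovered by $T$, which exists by the case assumption. Step two loses only $O(\alpha^2)$: a leaf-vertex $p_i^t$ has no outgoing edges, so deleting it only adds the single term $\dist(S',v) \le \alpha+1$; deleting $v = q_j^\ell$ adds $\dist(S',v) \le \ell$ plus at most $\ell$ per each of the $O(\alpha)$ path-vertices $q_j^r$ with $r > \ell$, since $s \in S'$ always provides a fallback path of length $r$.

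The main obstacle is the effect of deleting $v = q_j^\ell$ on the $\Theta(\Lambda)$ leaf-vertices $p_i^t$ with $x_i \in C_j$: a naive swap could inflate these distances and swamp the $\Lambda$-gain. The split into Cases~B-i and B-ii is designed precisely to neutralize this. In Case~B-ii, every such $p_i^t$ is already at distance~$1$ in $S$ via some $q_{j'}^\alpha \in T$, and that $q_{j'}^\alpha$ remains in $S'$, so these distances are unchanged. In Case~B-i the freshly inserted $q_j^\alpha$ keeps every such $p_i^t$ at distance~$1$ in $S'$. Consequently the total farness change is at most $O(\alpha^2) - \Lambda < 0$ for the chosen $\alpha = \lceil n^{1+\delta} \rceil$ and $\Lambda = m\alpha^2$, proving the strict decrease.
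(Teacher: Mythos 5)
Your proof is correct and follows essentially the same strategy as the paper's: perform a single swap that removes the offending non-$q^{\alpha}$ vertex and inserts a suitable $q_{j}^{\alpha}$ whose $\Lambda$ newly-covered leaf vertices provide a gain that dwarfs the $O(\alpha^2)$ loss incurred along the affected path. Your case split (by the type of the removed vertex and by whether its set $C_j$ is already covered by $T$) is organized a little differently from the paper's (which distinguishes cases via distance conditions on $p_i^t$ and on $q_k^{\alpha}$), and your add-then-delete accounting is more explicit than the paper's terse ``it is easy to see,'' but the underlying argument is the same.
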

\begin{proof}
    Let $S$ be a set of vertices containing $s$ such that $S \not\subset \{s\} \cup \{q_{j}^{\alpha}: j\in [m]\}$ and $\{C_j : q_{j}^{\alpha} \in S\}$ is not a set cover in $(X, C, k)$. We distinguish the following (non-exclusive) cases.
    \begin{itemize}
        \item There exists a vertex $p_i^t \in S$ such that $p_i^t$ is not at distance $1$ from $S$. In this case, let $q_{j}^{\alpha}$ be a vertex at distance $1$ of $p_i^t$ and set $S' = S\setminus\{p_i^t\}\cup\{q_{j}^{\alpha}\}$.
        \item There exists a vertex $p_i^t \in S$ such that $p_i^t$ is at distance $1$ from $S$. In this case, because $\{C_j : q_{j}^{\alpha} \in S\}$ is not a set cover of $X$, there exists a vertex $p_{i'}^{t'}$ which is not at distance $1$ from $S$ and a vertex $q_{j}^{\alpha}$ which is at distance $1$ of $p_{i'}^{t'}$. Set $S' = S\setminus\{p_i^t\}\cup\{q_{j}^{\alpha}\}$.
        \item Assume that the two first cases do not occur. Then, there exists a vertex $q_k^\ell$ in $S\setminus \{q_{j}^{\alpha}: j\in [m]\}$. If $q_k^{\alpha} \notin S$ and there exists a vertex $p_i^t$ such that $q_k^{\alpha}$ is closer to $p_i^t$ than any vertex in $S$, then set $S' = S\setminus\{q_k^\ell\}\cup \{q_k^{\alpha}\}$. Otherwise, because $\{C_j: q_{j}^{\alpha} \in S\}$ is not a set cover of $X$, there exists a vertex $p_{i}^{t}$ which is not at distance $1$ from $S$ and a vertex $q_{j'}^{\alpha}$ which is at distance $1$ of $p_{i}^{t}$. Set $S' = S\setminus\{q_k^\ell\}\cup\{q_{j'}^{\alpha}\}$.
    \end{itemize}
    In all cases, it is easy to see that we obtain a set $S'$ which satisfies the conditions of the lemma. In the first two cases, we use the fact that $(\Lambda-k) > 1$. In the third case, the result is due to the fact that $\Lambda$ is greater than the possible loss incurred by vertices $q_j^\ell$.
\hfill\end{proof}

Using Lemma~\ref{lemma:directed reduction}, we can now prove  Theorem~\ref{theorem:hardness directed}.
\begin{proofof}
    Let us assume that there exists an algorithm $A$ with approximation guarantee $4\cdot |V|^{-\epsilon}$ for the group-closeness centrality problem for some $\epsilon \in (0,1/2) $. For a given instance of Set Cover, we use the reduction described above to obtain an instance of the group-closeness centrality problem and apply algorithm $A$ using \(\delta := 4\epsilon/(1-2\epsilon)\). Observe that $\epsilon<1/2$ guarantees that $\delta>0$.
    Let $S$ be the solution returned by $A$. Using Lemma~\ref{lemma:directed reduction}, we can assume that either $\{C_j:q_{j}^{\alpha}\in S\}$ corresponds to a set cover of $X$ or $S\setminus \{q_{j}^{\alpha}|j\in [m]\} = \{s\}$. Let us assume that $\{C_j:q_{j}^{\alpha}\in S\}$ does not correspond to a set cover of $X$. Furthermore, suppose for the purpose of contradiction that the original instance of Set Cover is a YES instance. Then, denoting the optimum to the group-closeness centrality problem problem by $OPT$ and using \eqref{eq:yesInst} and \eqref{eq:noInst} yields that
    \[
        \frac{\GC(S)}{OPT}
        \le \frac{\Lambda (n+1)}{\Lambda \alpha}
        < 2n^{-\delta}
        \le 2\Big(\frac{|V|}{4}\Big)^{-\frac{\delta}{4+2\delta}}
        \le 4|V|^{-\epsilon},
    \]
    using that $|V|\le 4n^{4+2\delta}$ and $\epsilon=\delta/(4+2\delta) < 1/2$.
    This contradicts the assumption that $S$ is a $4\cdot|V|^{-\epsilon}$-approximate solution. 
    To summarize, we have shown that if $A$ is a $4\cdot |V|^{-\epsilon}$-approximation algorithm for the group-closeness centrality problem, then it provides a polynomial time algorithm for Set Cover.\hfill
\end{proofof}

\section{Algorithm Engineering}
\label{sec:algo-eng}

In the following we propose several engineering techniques
that accelerate the approximate maximization of group-closeness
and group-harmonic in practice.


\subsection{Group-Harmonic Maximization.}
\label{sub:ae-group-harm-closeness}
We consider greedy and local search algorithms for group-harmonic centrality.
\paragraph{Greedy Algorithm.}
We start with the greedy algorithm; the pseudocode of this algorithm
is given by Algorithm~\ArxivOrCr{\ref{algo:greedy-group-harmonic} in Appendix~\ref{apx:pseudocodes}}{3 in Appendix H}.
The first vertex that is added to the
group is the vertex with highest harmonic centrality (Line~\ArxivOrCr{\ref{line:top-harmonic-vtx}}{1 of the pseudocode});
this vertex can be found by a top-$k$ algorithm such as
the ones from Refs.~\cite{bisenius2018computing, bergamini2019computing}.
Afterwards, the algorithm iteratively adds the
vertex with highest marginal gain $\GH(S \cup \{u\}) - \GH(S)$ to the group.

Since $\GH$ is submodular, we can evaluate marginal gains lazily, \ie, the
marginal gain $\GHhat(S, u)$ from previous iterations serves as an upper bound
of the marginal gain $\GH(S \cup \{u\}) - \GH(S)$ in the current operation.
Since $\GHhat(S, u) \ge \GH(S \cup \{u\})$ holds after $S$ is initialized with
the vertex with highest harmonic centrality, we initialize $\GHhat(S, u)$ to
H$(u)$ for each $u \in V\setminus S$ (more precisely, the top-$k$ closeness
algorithm from Bisenius \etal~\cite{bisenius2018computing} yields an upper
bound on H$(u)$ that we can use in this initialization step). To determine the
vertex with highest marginal gain, we use the well-known lazy
strategy~\cite{10.1007/BFb0006528}: we evaluate the marginal gain of the vertex
with highest upper bound (and adjust the upper bound to the true marginal gain)
until we know the true marginal gain of the top vertex \wrt the upper bound
(Lines~\ArxivOrCr{\ref{line:group-hclos-pq}}{3}-\ArxivOrCr{\ref{line:max-pq-insert2}}{5} and
Line~\ArxivOrCr{\ref{line:group-hclos-repeat}}{8} of the pseudocode, by using a priority
queue).

To evaluate marginal gains, we run a
pruned SSSP algorithm from $u$ that only visits vertices $v$ such that
$\dist(u, v) < \dist(S, v)$ and updates
$\GHhat(S, u)$ after every vertex at distance $i$
from $u$ has been explored.
The traversal is pruned if $\GHhat(S, u) \le
\GH(S \cup \{x\})$, where $x$ is the vertex with highest marginal gain
computed so far; otherwise it returns the exact value of $\GH(S \cup \{u\})$ once all
that are vertices closer to $u$ than to $S$ have been visited.
As for group-closeness, $\GHhat$ is defined differently for weighted than
for unweighted graphs.

\paragraph{Pruning (Unweighted).} In unweighted graphs, we can exploit additional bounds
to prune the SSSP algorithm earlier.
Let us assume that the pruned SSSP (\iec a BFS) has explored
all vertices up to distance $i$.
We denote by $\Phi_{S, u}^{\le i}$
the set of vertices $v$ such that $\dist(u, v) \le i$
and $\dist(u, v) < \dist(S, v)$. An additional upper bound on the
marginal gain of $u$ is
\begin{align}
\begin{split}
\label{eq:ghc-ub}
&\sum_{v\in \Phi_{S, u}^{\le i} \setminus \{u\}}\left(\frac{1}{\dist(u, v)} -
\frac{1}{\dist(S, v)}\right) +\frac{\tilde{n}_{S, u}^{i + 1}}{i + 1} \\ &+
\frac{\max(0, r(u) - |\Phi_{S, u}^{\le i}| - \tilde{n}_{S, u}^{i + 1})}{i + 2} -
\frac{1}{\dist(S, u)}.
\end{split}
\end{align}

The first term is the contribution of the explored
vertices up to distance $i$ to the marginal gain.
Then, let $\Phi_{S, u}^{i} \subseteq \Phi_{S, u}^{\le i}$
contain the vertices at distance exactly $i$ from $u$; in the second term we assume that
$\tilde{n}_{S, u}^{i + 1} \ge |\Phi_{S, u}^{i + 1}|$ vertices are at distance
exactly $i + 1$ from $u$, where $\tilde{n}_{S, u}^{i + 1}$ is defined
as $\sum_{x\in \Phi_{S, u}^i}\degout(x)$ for directed graphs, and $\sum_{x\in
\Phi_{s, u}^i}(\deg(x) - 1)$ for undirected graphs.
In the third term we assume that all the remaining vertices reachable from
$u$ are at distance $i + 2$ from $u$ (where $r(u)$ is the number of
vertices reachable from $u$\footnote{
Because in directed graphs it is too expensive to compute $r(u)$ for each
vertex, we use an upper bound as described in~\cite{bergamini2019computing}.}).
Finally, we subtract the contribution of $u$ to the centrality of $S$.

As a further optimization for unweighted and undirected graphs, for every vertex $u\in
V\setminus S$ we subtract from $r(u)$ all the vertices in $u$'s connected
component that are at distance 1 from $S$. In this way we avoid to count them
in the third term of Eq.~\eqref{eq:ghc-ub}.

\paragraph{Pruning (Weighted).}
Concerning weighted graphs, the SSSP is a pruned version of the Dijkstra
algorithm. Let $i$ be the distance from $u$ to the last explored vertex.
Upon completion of Dijkstra's relaxation step, $\GHhat(S, u)$ is updated as
follows:

\begin{align}
\label{eq:gh-weighted-bound}
\begin{split}
    \GHhat(S, u) &= \sum_{v\in \Phi_{S, u}^{\le i} \setminus \{u\}}
    \left(\frac{1}{\dist(u, v)} - \frac{1}{\dist(S, v)}\right)\\
                 &+ \frac{r(u) - |\Phi_{S, u}^{\le i}|}{i} - \frac{1}{\dist(S, u)},
\end{split}
\end{align}

\iec to the contribution to $\GHhat(S, u)$ of (i) the vertices visited by
the SSSP, and (ii) the unexplored vertices assuming that they are all at distance
$i$ from $u$.

\paragraph{Local Search.}

The local search algorithm by Lee \etal ~\cite{LMNS10}
needs to evaluate $\Omega(n^2)$ swaps
per iteration. Since this is already quite expensive,
it is desirable to perform only few iterations.
Hence, we initialize the local search with a greedy solution;
this does not affect its approximation guarantee
but offers a considerable acceleration in practice.

We cannot make use of lazy evaluation for local search
(since we need to consider swaps and not vertex additions).
However, we can still make use of the bound from Eq.~\eqref{eq:ghc-ub}.

\paragraph{Parallelism.}

Since both greedy and local search typically need to evaluate
either the marginal gains or the objective function
for many vertices before performing a single
addition (or swap), it is desirable to
utilize parallelism.
We parallelize multiple evaluations of the objective function
in a straightforward way. Each thread evaluates
the marginal gain for one candidate vertex.
It needs to store the state of a single SSSP; this incurs
$\mathcal{O}(n)$ additional memory per thread.


\subsection{Group-Closeness Maximization.}
Since the greedy algorithm for group-closeness has been studied
before~\cite{BergaminiGM18}, we only discuss local-search and engineering improvements.
\paragraph{Local Search.}
We consider a local search algorithm that evaluates all possible pairs of
swaps. For the $k$-Median case, Arya \etal~\cite{AryaGKMMP04} minimize the cost
function of an initial solution $S$; a swap is done only if $cost(S') \le (1 -
\epsilon/Q)\cdot cost(S)$, where $S'$ is the solution after the swap, $Q$ is
the number of neighboring solutions (\ie how many different $S'$ are one swap
away from $S$), and $\epsilon >0$. For group-closeness, the cost function is
represented by $\GF(S)$ (minimum farness is maximum closeness), and $Q =
k\cdot(n - k)$ \iec the number of possible swaps. The algorithm has
an approximation ratio of 5.

Like in the group-harmonic case,
the local search algorithm is much faster in practice if we
start from a good initial solution.
We use the \emph{grow-shrink} algorithm that was introduced
by a subset of the authors~\cite{AngrimanGM19}
to obtain such a solution.
Grow-shrink is a heuristic algorithm;
Ref.~\cite{AngrimanGM19} does not provide any bounds on its approximation
guarantee; however, the paper demonstrates empirically that the algorithm performs
well on real-world graphs.
The lack of approximation ratio in grow-shrink is
not an issue in our case, since the approximation
guarantee of our local search does not depend on the initial solution.


\paragraph{Prioritizing Swaps.}

In practice, the number of swaps that need to be analyzed before a local optimum is reached
is heavily affected by the sequence of swaps that are done.
Algorithm~\ArxivOrCr{\ref{algo:local-search}}{4} in Appendix~\ArxivOrCr{\ref{apx:pseudocodes}}{H} summarizes how we prioritize
the swaps.
Similarly to the original grow-shrink algorithm, we prioritize swaps depending
on their estimated impact on $\GF(S)$.
First, we sort in ascending order the
vertices in $S$ by the increase in $\GF$ due to their removal from $S$ (\iec
$\GF(S \setminus \{u\}) - \GF(S)$ for all $u \in S$, Lines~\ArxivOrCr{\ref{line:pq1}}{4}-\ArxivOrCr{\ref{line:pq2}}{6} of the pseudocode).
Afterwards, we sort in descending order
all the vertices $v \in V\setminus S$ by $\GFapx(\Suv)$, which is an estimate
of the decrease in farness (\iec $\GF(\Suv) - \GF(S)$).
We use the same estimate based on the size of shortest path DAGs as Ref.~\cite{AngrimanGM19}.

As a further optimization, we exclude swaps with vertices in $V\setminus S$
with degree 1 as, in (strongly) connected graphs, they cannot result in a
decrease in $\GF(S)$.


\paragraph{Additional Pruning.}

The grow-shrink algorithm~\cite{AngrimanGM19}
performs pruned SSSPs to evaluate whether a swap is advantageous.
We modify the algorithm to incorporate additional pruning conditions
that prune the SSSP when a swap is not good enough to be
considered in the local search (in contrast, Ref.~\cite{AngrimanGM19}
perform all swaps that improve the objective function, regardless
of the difference in the score).
In particular, we maintain a lower bound $\GFlb(S, u, v) \le \GF(\Suv)$,
so that we can interrupt the pruned SSSP as soon as $\GFlb(S, u, v) > (1 -
\eps/(k\cdot(n - k))) \cdot \GF(S)$.

$\GFlb(S, u, v)$ is computed in two steps: we first compute $\GF^+(u) := \GF(S \setminus \{u\})
- \GF(S)$ exactly (Line~\ArxivOrCr{\ref{line:farn-inc}}{11 of the pseudocode})
\iec the increase in farness
of $S$ due to the removal of $u$.
Then, during every pruned SSSP from $v$, we
keep updating an upper bound of decrease in farness
of $S \setminus \{u\}$ due to the addition of $v$:
$\widehat{\GF}^-(v) := \GF(S \setminus \{u\}) - \GFlb(S, u, v)$. Then, $\GFlb(S, u, v)$ is computed
as $\GF(S) + \GF^+(u) - \widehat{\GF}^-(v)$.

To compute $\GF^+(u)$ exactly we maintain the following information for each
vertex $x\in V\setminus S$: $\dist(S, x)$, a vertex $r_x \in S$ such that
$\dist(r_x, x) = \dist(S,x)$, and $\dist'(S, x) = \dist(S\setminus\{r_x\}, x)$.
In this way, $\GF^+(u)$ can be computed in
$\Oh(n)$ time as done in the original grow-shrink algorithm:

\[
\GF^+(u) = \sum_{x\in \{V\setminus S \st \dist(S, x) = \dist(u,
x)\}}\dist(S, x) - \dist(S', x).
\]

$\widehat{\GF}^-(v)$ is computed differently in unweighted and weighted graphs.
In unweighted graphs the pruned SSSP is a BFS, and we define bounds inspired by the ones
used for top-$k$ closeness centrality in~\cite{bergamini2019computing}:
For every distance $1 \le i \le \diam(G)$ we maintain
$N_{S}^{\ge i}$ \iec the set of vertices at distance $\ge i$ from $S$,
and $\Phi_{S, v}^{\le i}$ \iec the set of vertices $x$ such that $\dist(v, x) \le
i$ and $\dist(v, x) < \dist(S, x)$. Once every vertex in $\Phi_{S,v}^{\le i}$
has been visited by the pruned BFS, we know that at most $\tilde{n}_v^{i +
1} := \min(|N_S^{\ge i + 2}|, \sum_{x\in \Phi_{S, v}^i}\degout(x))$
vertices can be at distance $i + 1$ from $v$ (in undirected graphs,
$\tilde{n}_v^{i + 1} := \min(|N_S^{\ge i + 2}|, \sum_{x\in
\Phi_{S, v}^i}\deg(x) - 1)$)
while the remaining unexplored vertices will be at distance $\ge i + 2$.
Thus, we update $\widehat{\GF}^-(v)$ as follows:
\begin{align*}
\begin{split}
&\widehat{\GF}^-(v) = \sum_{x \in \Phi_{S, v}^{\le i}}(\dist(S, x) - \dist(v, x))\\
                   &+ \sum_{x\in\Lambda}(\dist(S, x) - i - 1)
                   + \sum_{x \in N_S^{\ge i + 3} \setminus \Lambda}(\dist(S,
                   x) - i - 2).
\end{split}
\end{align*}

The first term represents the decrease in farness due to the vertices
that are already
visited by the BFS. In the second term $\Lambda
\subseteq N_S^{\ge i + 2}$ contains the nearest $\tilde{n}_v^{i + 1}$
vertices to $S$, and we assume at they are $i + 1$ hops away from $v$.
Finally, in the third term we assume that all the remaining unvisited vertices
at distance $\ge i + 3$ from $S$ not counted in $\Lambda$ can be reachable
from $v$ in $i + 2$ hops. From the third term we exclude vertices at
distance $i + 2$ from $S$ because, under our assumption, their distance from
$S$ would remain unchanged. At the cost of an additional $\Oh(\diam(G))$
memory, $\widehat{\GF}^-(v)$ can be computed in $\Oh(\diam(G))$ time.

On weighted graphs we update $\widehat{\GF}^-(v)$
by adapting the our strategy from $\GH$ to $\GF$
(see Eq.~\eqref{eq:gh-weighted-bound}).

\paragraph{Parallelism.}
We employ the same parallelism as for group-harmonic centrality.
The fact that evaluations of the objective function can be parallelized
in the greedy and local search algorithm can be seen as an advantage
over the grow-shrink algorithm since the latter
operates inherently sequentially (\ie in many cases, it performs
swaps after evaluating the objective function only once,
even if this does not lead to an improvement in the objective function).

\section{Experiments}
\label{sec:exp}

We conduct experiments to evaluate our algorithms in terms of solution quality
and running time.

For $\GH$ we first evaluate the quality of our greedy algorithm (\greedyh),
our local-search algorithm (\grlsh), and sets of vertices
selected uniformly at random (\rndh, the best of 100 randomly chosen sets)
against the optimal solution on small-sized networks.
Then, we measure the quality and running time performance of \greedyh and \grlsh and we use
\rndh as baseline.

Regarding group closeness, we compare our local-search algorithm against the greedy
algorithm by Bergamini \etal~\cite{BergaminiGM18}, the grow-shrink algorithm\footnote{
As in Ref.~\cite[Sec. III.B]{AngrimanGM19}, we use a variant of this algorithm that achieves
a reasonable time-quality trade-off \ie with $p = 0.75$.},
and vertices selected uniformly at random (again, the best of 100 randomly chosen sets).
Hereafter, these algorithms are referred to as \greedy, \gs, and \rnd respectively.
Our local-search algorithm for group closeness uses either \greedy or \gs to
initialize the initial group: in the former case we label it as \grls, and
\gsls in the latter.

\begin{table}[tb]
\footnotesize
\centering
\begin{tabular}{lcrr}
\midrule
Graph & Type & $|V|$ & $|E|$\\
\midrule
petster-hamster-household & \texttt{U} & \numprint{874} & \numprint{4003}\\
petster-hamster-friend & \texttt{U} & \numprint{1788} & \numprint{12476}\\
petster-hamster & \texttt{U} & \numprint{2000} & \numprint{16098}\\
loc-brightkite\_edges & \texttt{U} & \numprint{58228} & \numprint{214078}\\
douban & \texttt{U} & \numprint{154908} & \numprint{327162}\\
petster-cat-household & \texttt{U} & \numprint{105138} & \numprint{494858}\\
loc-gowalla\_edges & \texttt{U} & \numprint{196591} & \numprint{950327}\\
wikipedia\_link\_fy & \texttt{U} & \numprint{65562} & \numprint{1071668}\\
wikipedia\_link\_ckb & \texttt{U} & \numprint{60722} & \numprint{1176289}\\
petster-dog-household & \texttt{U} & \numprint{260390} & \numprint{2148179}\\
livemocha & \texttt{U} & \numprint{104103} & \numprint{2193083}\\
flickrEdges & \texttt{U} & \numprint{105938} & \numprint{2316948}\\
petster-friendships-cat & \texttt{U} & \numprint{149700} & \numprint{5448197}\\
\midrule
wikipedia\_link\_mi & \texttt{D} & \numprint{7996} & \numprint{116457}\\
foldoc & \texttt{D} & \numprint{13356} & \numprint{120238}\\
wikipedia\_link\_so & \texttt{D} & \numprint{7439} & \numprint{125046}\\
wikipedia\_link\_lo & \texttt{D} & \numprint{3811} & \numprint{132837}\\
wikipedia\_link\_co & \texttt{D} & \numprint{8252} & \numprint{177420}\\
\midrule
\end{tabular}
\caption{Large complex networks. The \quot{Type} column
indicates whether the network is undirected (\texttt{U}) or
directed (\texttt{D}).}
\label{tab:cplx-harmonic-large}
\end{table}

\begin{table}[tb]
\footnotesize
\centering
\begin{tabular}{lcrr}
\midrule
Graph & Type & $|V|$ & $|E|$\\
\midrule
marshall-islands & \texttt{UU} & \numprint{1080} & \numprint{2557}\\
micronesia & \texttt{UU} & \numprint{1703} & \numprint{3600}\\
kiribati & \texttt{UU} & \numprint{1867} & \numprint{4412}\\
opsahl-powergrid & \texttt{UU} & \numprint{4941} & \numprint{6594}\\
samoa & \texttt{UU} & \numprint{6926} & \numprint{15217}\\
comores & \texttt{UU} & \numprint{7250} & \numprint{17554}\\
\midrule
marshall-islands & \texttt{UW} & \numprint{1080} & \numprint{2557}\\
micronesia & \texttt{UW} & \numprint{1703} & \numprint{3600}\\
kiribati & \texttt{UW} & \numprint{1867} & \numprint{4412}\\
DC & \texttt{UW} & \numprint{9522} & \numprint{14807}\\
samoa & \texttt{UW} & \numprint{6926} & \numprint{15217}\\
comores & \texttt{UW} & \numprint{7250} & \numprint{17554}\\
\midrule
marshall-islands & \texttt{DU} & \numprint{1080} & \numprint{2557}\\
micronesia & \texttt{DU} & \numprint{1703} & \numprint{3600}\\
kiribati & \texttt{DU} & \numprint{1867} & \numprint{4412}\\
samoa & \texttt{DU} & \numprint{6926} & \numprint{15217}\\
comores & \texttt{DU} & \numprint{7250} & \numprint{17554}\\
opsahl-openflights & \texttt{DU} & \numprint{2939} & \numprint{30501}\\
tntp-ChicagoRegional & \texttt{DU} & \numprint{12982} & \numprint{39018}\\
\midrule
marshall-islands & \texttt{DW} & \numprint{1080} & \numprint{2557}\\
micronesia & \texttt{DW} & \numprint{1703} & \numprint{3600}\\
kiribati & \texttt{DW} & \numprint{1867} & \numprint{4412}\\
samoa & \texttt{DW} & \numprint{6926} & \numprint{15217}\\
comores & \texttt{DW} & \numprint{7250} & \numprint{17554}\\
\midrule
\end{tabular}
\caption{Large high-diameter networks. In the \quot{Type} column
the first letter indicates whether the network is undirected (\texttt{U})
or directed (\texttt{D}), while the second letter whether the network is unweighted (\texttt{U})
or weighted (\texttt{W}).}
\label{tab:high-diam-harmonic-large}
\end{table}

\subsection{Settings.}
We implemented all algorithms in C++, using the
NetworKit~\cite{staudt2016networkit} graph APIs, and we use
SCIP~\cite{GamrathEtal2020ZR} to solve ILP instances. All experiments are
conducted on a Linux machine with an Intel Xeon Gold 6126 CPU (2 sockets, 12 cores
each) and 192 GB of RAM, and managed by the
SimexPal~\cite{angriman2019guidelines} software to ensure reproducibility.
When averaging approximation ratios and speedups, we use the geometric mean.
All experiments have a timeout of one hour.

Experiments are executed on real-world complex and high-diameter networks.
Sources and detailed statistics of our datasets are reported in
Appendix~\ArxivOrCr{\ref{apx:insts-stats}}{G}.

\subsection{Instances Statistics}
\paragraph{Data Sets.}
Instances have been downloaded from the public repositories
KONECT~\cite{kunegis2013konect}, OpenStreetMap~\cite{OpenStreetMap} (from which
we build the car routing graph using
RoutingKit~\cite{dibbeltSW16}), and from
the 9th DIMACS Implementation Challenge~\cite{demetrescu2009shortest}.
Small instances used for the experiments with ILP solvers are reported in
\ArxivOrCr{Tables~\ref{tab:cplx-small} and~\ref{tab:high-diam-small} in Appendix~\ref{apx:insts-stats}}{Tables 6 and~7}, while the rest of the experiments have been conducted on
the instances in Tables~\ref{tab:cplx-harmonic-large}
and~\ref{tab:high-diam-harmonic-large}.

Because algorithms for group-closeness maximization can only handle (strongly)
connected graphs, we run them on the (strongly) connected components of the
instances in our datasets. Detailed statistics are reported in
Appendix~\ArxivOrCr{\ref{apx:insts-stats}}{G}.
For high-diameter networks we test mainly road networks because
they are the most common type of networks in the aforementioned repositories.
We are confident that our local-search algorithms are capable to handle
other types of high-diameter networks as well without significant difference
in performance.
Because public repositories do not provide a reasonable amount of weighted
complex networks, we omit these networks from our experiments.

\begin{figure}[tb]
\centering
\begin{subfigure}[t]{\columnwidth}
\centering
\includegraphics{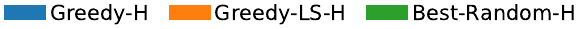}
\end{subfigure}\smallskip

\begin{subfigure}[t]{.5\columnwidth}
\centering
\includegraphics{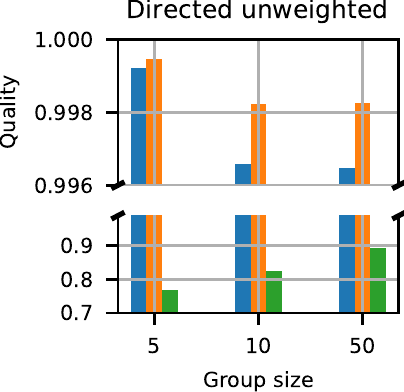}
\caption{Complex networks}
\label{fig:quality-harmonic-ilp-cplx}
\end{subfigure}\hfill
\begin{subfigure}[t]{.5\columnwidth}
\centering
\includegraphics{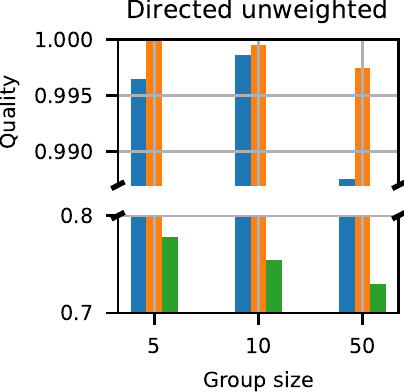}
\caption{High-diameter networks}
\label{fig:quality-harmonic-ilp-high-diameter}
\end{subfigure}

\caption{Quality vs. the optimum over the networks of Tables~\ArxivOrCr{\ref{tab:cplx-harmonic-small}
and~\ref{tab:high-diam-harmonic-small}, Appendix~\ref{apx:insts-stats}}{4 and 5, Appendix G}.}
\label{fig:quality-harmonic-ilp}
\end{figure}

\subsection{Group Harmonic Maximization.}
\label{sub:exp-ghc}
\paragraph{Comparison to Exact ILP Solutions.}
Figure~\ref{fig:quality-harmonic-ilp}
shows a comparison of the solution quality of our algorithms
compared to exact solutions.
We observe that random groups
cover these unweighted graphs reasonably well;
hence, \rndh already yields solutions of $>70\%$ of the optimum.
This peculiarity is amplified by the fact that the networks
are rather small in comparison to $k$ (at most 1000 vertices).
Indeed, the quality of \rndh \emph{increases} with $k$ on
complex networks,
a behavior that no other algorithm shows.
Still, \greedyh yields substantially better solutions in all cases:
it yields solutions of $>99.5\%$ of the optimum for
all group sizes. These solutions are further improved by \grlsh, which yields
groups with at least $\minQualLSGRHCplxUnw$ of the optimal quality.

In high-diameter networks (Figures~\ref{fig:quality-harmonic-ilp-high-diameter}
and~\ArxivOrCr{\ref{fig:apx:quality-harmonic-ilp} in Appendix~\ref{apx:exact-harmonic})}{7 in Appendix~E} \rndh
is not a serious competitor. It yields solutions
less than $80\%$ the optimal quality. Indeed, since high-diameter
networks have a higher diameter compared to complex networks,
it is expected that a random group of vertices is less likely to be central.
On the other hand, \greedyh and \grlsh yield
solution qualities from $\minQualGRHRoadUnw$
and $\minQualLSGRHRoadUnw$, respectively.
For $k = 5$ in particular,
solutions returned by \grlsh have $>99.99\%$ the quality of the optimal
solution.

Concerning weighted high-diameter networks, the ILP solver runs out of time
or memory on nearly all instances. Tentative results on the two remaining
instances suggest that \greedyh yields solutions at are almost optimal,
but due to the small size of the data set, we cannot conclude definitive results.

\paragraph{Quality and Running Time on Larger Instances.}

\begin{figure}[tb]
\centering
\begin{subfigure}[t]{\columnwidth}
\centering
\includegraphics{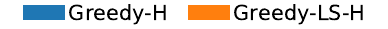}
\end{subfigure}\smallskip

\begin{subfigure}[t]{\columnwidth}
\centering
\begin{subfigure}[t]{.5\columnwidth}
\centering
\includegraphics{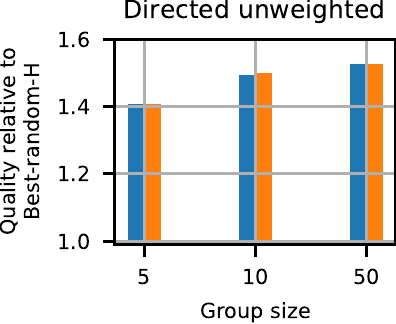}
\end{subfigure}\hfill
\begin{subfigure}[t]{.5\columnwidth}
\centering
\includegraphics{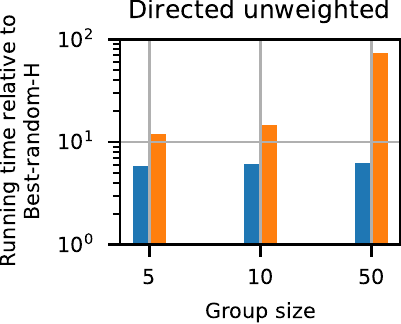}
\end{subfigure}
\caption{Complex networks}
\label{fig:quality-harmonic-cplx}
\end{subfigure}\smallskip

\begin{subfigure}[t]{\columnwidth}
\centering
\begin{subfigure}[t]{.5\columnwidth}
\centering
\includegraphics{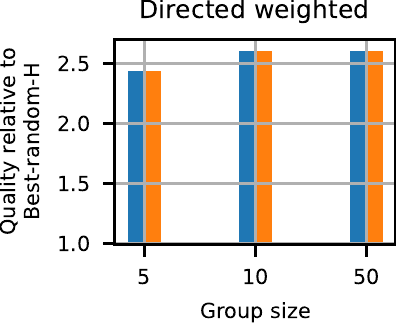}
\end{subfigure}\hfill
\begin{subfigure}[t]{.5\columnwidth}
\centering
\includegraphics{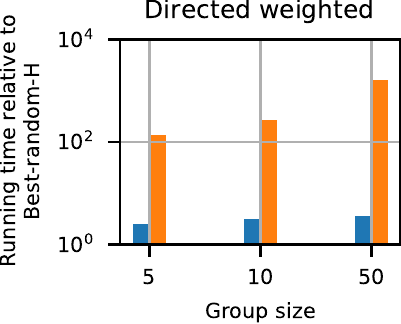}
\end{subfigure}
\caption{High-diameter networks}
\label{fig:quality-harmonic-high-diameter}
\end{subfigure}

\caption{Quality and time \wrt \rndh over the networks of Tables~\ref{tab:cplx-harmonic-large}
and~\ref{tab:high-diam-harmonic-large}.}
\label{fig:quality-harmonic}
\end{figure}

Figure~\ref{fig:quality-harmonic} summarizes quality and running time results of
\greedyh and \grlsh (absolute running times are reported in Tables~\ArxivOrCr{\ref{tab:runtime-h-cplx}
and~\ref{tab:runtime-h-high-diam}, Appendix~\ref{sec:running-times}}{10 and~11 in Appendix I}).
Due to the size of these graphs, it is not feasible to obtain an ILP solution
and we use \rndh as baseline.
In unweighted complex networks (Figures~\ref{fig:quality-harmonic-cplx}
and~\ArxivOrCr{\ref{fig:apx:quality-harmonic-cplx} in Appendix~\ref{apx:larger-harmonic}}{8 in Appendix~E}),
\greedyh finds solutions with quality (compared to \rndh)
from \minQualGRHRndCplxDir (with $k = 5$) to \maxQualGRHRndCplxDir (with
$k = 50$) in directed networks, and from \minQualGRHRndCplxUnd to
\maxQualGRHRndCplxUnd in undirected networks.
Compared to \greedyh, \grlsh is not competitive: it improves the quality by at most
\maxQualImprLSGRHCplxDir while being \minSlowdLSGRHCplxDir to \maxSlowdLSGRHCplxDir
slower.

\greedyh achieves even better results in high-diameter networks: in weighted
directed high-diameter networks
(Figure~\ref{fig:quality-harmonic-high-diameter}) \greedyh's quality is
\minQualGRHRndRoadDirWei to \maxQualGRHRndRoadDirWei of the quality returned by
\rndh, while being just \minSlowdGRHRoadDirWei to \maxSlowdGRHRoadDirWei
slower.
Concerning \grlsh, it is less competitive than in complex networks: it improves
\greedyh's quality by at most \maxQualImprLSGRHRoadDirWei, while being
\minSlowdLSGRHRoadDirWei to \maxSlowdLSGRHRoadDirWei slower. Results are more
promising in high-diameter unweighted networks
(Figure~\ArxivOrCr{\ref{fig:apx:quality-harmonic-high-diameter} in
Appendix~\ref{apx:larger-harmonic}}{9 in Appendix E}): here \grlsh improves \greedyh's quality by
\minQualImprLSGRHRoadUnw to \maxQualImprLSGRHRoadUnw while being \minSlowdGRHRoadUnw
to \maxSlowdGRHRoadUnw slower.

\subsection{Group Closeness Maximization.}
\paragraph{Comparison to Exact ILP Solutions.}
\begin{figure}[tb]
\centering
\begin{subfigure}[t]{\columnwidth}
\centering
\includegraphics{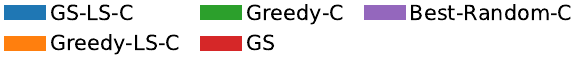}
\end{subfigure}\smallskip

\begin{subfigure}[t]{.5\columnwidth}
\centering
\includegraphics{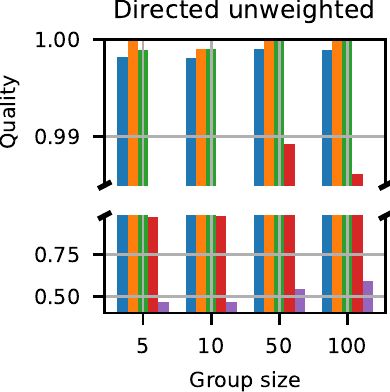}
\caption{Complex networks}
\label{fig:quality-ilp-cplx}
\end{subfigure}\hfill
\begin{subfigure}[t]{.5\columnwidth}
\centering
\includegraphics{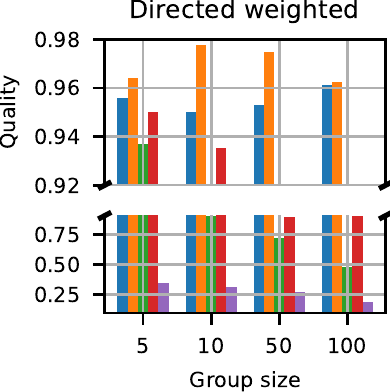}
\caption{High-diameter networks}
\label{fig:quality-ilp-high-diam}
\end{subfigure}

\caption{Quality \wrt the optimum over the networks of Tables~\ArxivOrCr{\ref{tab:cplx-small}
and~\ref{tab:high-diam-small}, Appendix~\ref{apx:insts-stats}}{6 and 7, Appendix G}.}
\label{fig:quality-closeness-ilp}
\end{figure}

Figures~\ref{fig:quality-closeness-ilp} and~\ArxivOrCr{\ref{fig:apx:quality-closeness-ilp}
(Appendix~\ref{apx:add-exp-clos})}{10 (Appendix F)} summarize the quality of our local-search
algorithms for group closeness and the competitors compared to the optimum.

Concerning unweighted complex networks,
in the directed case, for groups of size 5 \grls is the only algorithm achieving optimal solutions,
while for the remaining group sizes it yields solutions with the same quality as \greedy.
In the undirected case (see Figure~\ArxivOrCr{\ref{fig:apx:quality-closeness-ilp}}{10})
\grls and \gsls achieve solutions with at least \minQualLSGRCplxUnw
and \minQualLSGSCplxUnw the optimal quality, resp.; for $k = 5$ and $k =
100$ in particular they achieve optimal solutions.

In high-diameter networks our local-search algorithms always achieve better results than
\greedy and \gs. The best results are on unweighted graphs: here \grls and \gsls yield
solutions at least \minQualLSGRRoadUnw and \minQualLSGSRoadUnw away from
optimality, respectively.

Interestingly, the quality of \grls is often higher
than \gsls, especially in complex networks and high-diameter weighted networks; we
conjecture that our local-search algorithm has a narrower improvement margin on
solutions from \gs compared to solutions from \greedy
since \gs is based on local-search as well.

\subsection{Quality and Running Time on Larger Instances.}
\begin{figure}[tb]
\centering
\begin{subfigure}[t]{\columnwidth}
\centering
\includegraphics{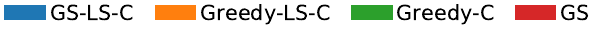}
\end{subfigure}\smallskip

\begin{subfigure}[t]{\columnwidth}
\begin{subfigure}[t]{.5\columnwidth}
\centering
\includegraphics{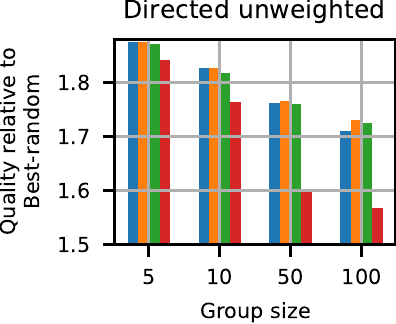}
\end{subfigure}\hfill
\begin{subfigure}[t]{.5\columnwidth}
\centering
\includegraphics{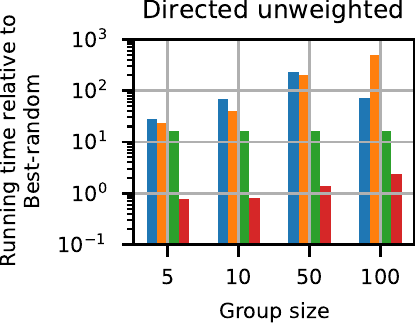}
\end{subfigure}
\caption{Complex networks}
\label{fig:quality-cplx}
\end{subfigure}\smallskip

\begin{subfigure}[t]{\columnwidth}
\begin{subfigure}[t]{.5\columnwidth}
\centering
\includegraphics{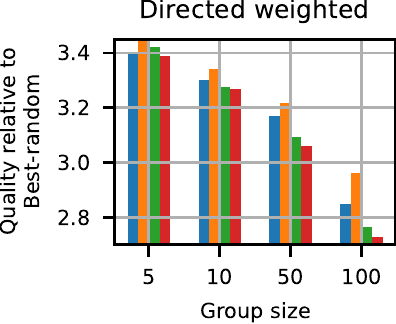}
\end{subfigure}\hfill
\begin{subfigure}[t]{.5\columnwidth}
\centering
\includegraphics{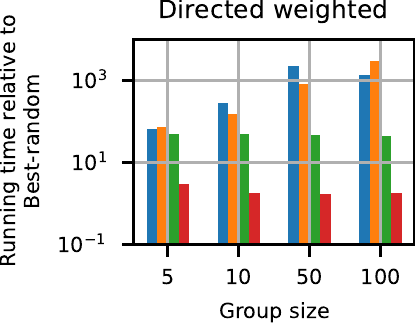}
\end{subfigure}
\caption{High-diameter networks}
\label{fig:quality-high-diam}
\end{subfigure}

\caption{Quality and time \wrt \rnd over the networks of Tables~\ArxivOrCr{\ref{tab:cplx-large}
and~\ref{tab:high-diam-large}, Appendix~\ref{apx:insts-stats}}{8 and 9, Appendix G}.}
\label{fig:quality-closeness}
\end{figure}

In Figures~\ref{fig:quality-closeness} and~\ArxivOrCr{\ref{fig:apx:quality-closeness}
(Appendix~\ref{apx:add-exp-clos})}{11 (Appendix F)}, we report the quality and running time
results of \gsls, \grls, \greedy and \gs compared to \rnd (absolute running times are reported in
Tables~\ArxivOrCr{\ref{tab:runtime-c-cplx}
and~\ref{tab:runtime-c-high-diam}, Appendix~\ref{sec:running-times}}{12 and 13, Appendix I}).
In terms of quality our local-search algorithms always reach the best results in
all our experiments:
in directed complex networks (Figure~\ref{fig:quality-cplx}) \gsls, \grls, and \greedy yield similar quality,
while \gs has consistently the lowest quality. On the other hand, quality
can be traded for running time: \gs is the fastest algorithm (for small group
sizes even faster than \rnd), \greedy is on average \avgSpeedGreedyCplxUnw
slower than \rnd (average among all $k$), whereas \gsls and \grls are
respectively \minSpeedGSLSCplxUnw to \maxSpeedGSLSCplxUnw, and
\minSpeedGRLSCplxUnw to \maxSpeedGRLSCplxUnw slower than \rnd. Interestingly,
for small group sizes \grls is often faster
than \gsls,
and vice versa for larger groups.
This is likely due to the difference between \gs and \greedy solutions:
\greedy aims to maximize the objective function regardless of the group size,
while for \gs the group size determines how many vertices are consecutively added
and removed in a single iteration. Therefore, for larger groups \gs solutions need
less swaps to reach a local optimum compared to \greedy solutions.

In directed weighted high-diameter networks (Figure~\ref{fig:quality-high-diam})
\grls always achieves the highest quality with less time overhead
than \gsls for all group sizes but $100$.


\subsection{Parallel Scalability.}
\label{apx:parallel-scalability}

Strong scaling plots for \greedy, \grls, and \gs are reported in
Figure~\ref{fig:par-scal}.
On average our local search algorithms scale better than \greedy
on both complex and high-diameter networks.
This is not surprising: local search needs to evaluate at least $k(n - k)$
swaps which is a highly parallel operation, and often much more expensive
than running \greedy.

On high-diameter networks in particular, \greedy has a poor parallel scalability;
we conjecture that, since closeness centrality distinguishes vertices in
high-diameter networks better than in complex
networks~\cite[Ch.~7]{newman2018networks}, \greedy needs to evaluate only few
vertices per iteration before finding the vertex with highest marginal gain.,
In that case, multiple cores do not speed this process up significantly.

\begin{figure}[tb]
\centering
\includegraphics{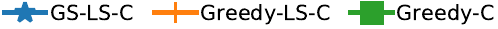}\smallskip

\begin{subfigure}[t]{\columnwidth}
\begin{subfigure}[t]{.5\columnwidth}
\centering
\includegraphics{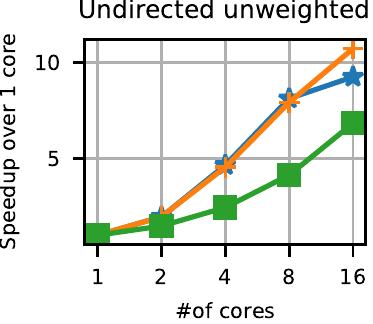}
\end{subfigure}%
\begin{subfigure}[t]{.5\columnwidth}
\centering
\includegraphics{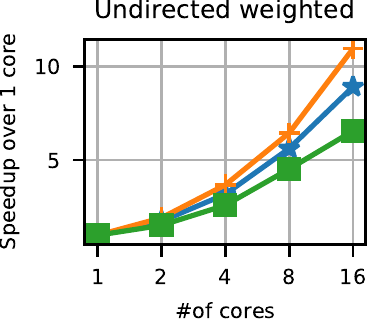}
\end{subfigure}
\caption{Complex networks}
\label{fig:par-scal-cplx}
\end{subfigure}\medskip

\begin{subfigure}[t]{\columnwidth}
\begin{subfigure}[t]{.5\columnwidth}
\centering
\includegraphics{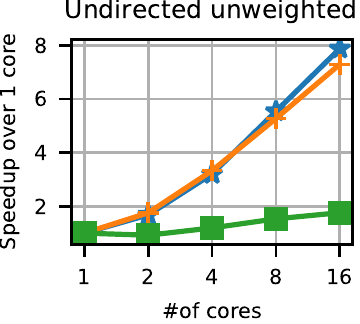}
\end{subfigure}\hfill
\begin{subfigure}[t]{.5\columnwidth}
\centering
\includegraphics{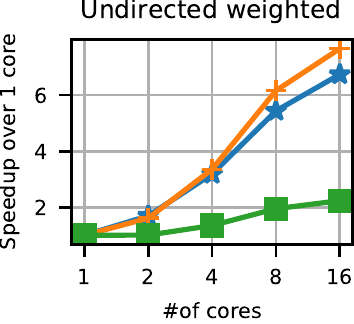}
\end{subfigure}
\caption{High-diameter networks}
\end{subfigure}\medskip

\begin{subfigure}[t]{\columnwidth}
\begin{subfigure}[t]{.5\columnwidth}
\centering
\includegraphics{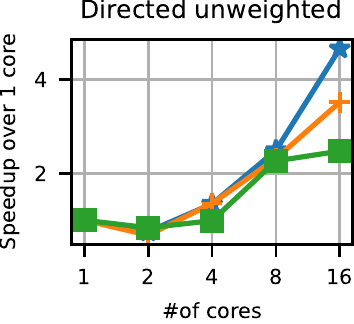}
\end{subfigure}\hfill
\begin{subfigure}[t]{.5\columnwidth}
\centering
\includegraphics{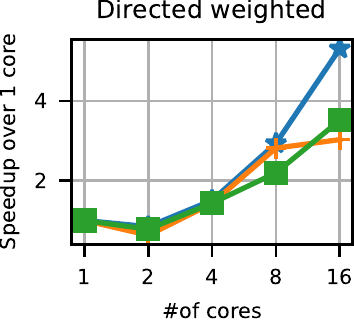}
\end{subfigure}
\caption{High-diameter networks}
\end{subfigure}

\caption{Parallel scalability over the networks of Tables~\ArxivOrCr{\ref{tab:cplx-large}
and~\ref{tab:high-diam-large}, Appendix~\ref{apx:insts-stats}}{8 and 9, Appendix G}.}
\label{fig:par-scal}
\end{figure}

\section{Conclusions}

This work has investigated theoretical and practical approximation aspects of two group centrality maximization problems, namely group-harmonic maximization and group-closeness maximization. 
These two problems aim to determine a group of $k$ nodes in a network which is central as a whole.

For the first problem, we have provided approximation hardness results as well as interesting approximation guarantees for a local search algorithm and the greedy algorithm.
For the second one, we showed that the undirected version of the problem
admits a constant-factor approximation algorithm, while the directed version is $NP$-hard to approximate better than $\Theta(|V|^{-\epsilon})$ for any $\epsilon < 1/2$. 
We have illustrated how to implement efficiently greedy and local search heuristics for both
problems and presented the results of a detailed experimental study. 
Our experiments show that the quality of both the greedy and the local search algorithms come very close to the optimum. 
This finding is consistent with our theoretical results which assess that in most cases these algorithms have good approximation guarantees. 
Interestingly, the two methods also perform well on directed instances for group-closeness maximization despite the hardness of approximation result which holds on this class of instances.  

Several future works are conceivable. First, one could try to close the gap on group-harmonic maximization between existing approximation guarantees and approximation hardness results. 
Second, for group-closeness maximization, it would be interesting to design an algorithm with an approximation ratio matching our hardness result in the directed case.

\bibliography{./references,biblio2}
\newpage 

\ArxivOrCr{\appendix

\section{Omitted proofs}
\subsection{Proof of Lemma~\ref{harmonic:submodularity}.}

We show that for any $S\subseteq T \subseteq V$ and $v\in V\setminus T$ the following holds:
\[
\GH(S\cup \{v\}) - \GH(S) \geq \GH(T\cup \{v\}) - \GH(T).
\]
The LHS is equal to
\begin{align}\label{harmonic:submodularity:lhs}
    \begin{split}
        &\underbrace{\sum_{u\in T \setminus S}\left( \frac{1}{\dist(S\cup\{v\},u)} - \frac{1}{\dist(S,u)} \right)}_{(a)}
        \underbrace{- \frac{1}{\dist(S,v)}}_{(b)}\\
        &\quad+ \underbrace{ \sum_{u\in V \setminus (T\cup \{v\})}\left( \frac{1}{\dist(S\cup\{v\},u)} - \frac{1}{\dist(S,u)} \right)}_{(c)},
    \end{split}
\end{align}
while the RHS is equal to
\begin{equation}\label{harmonic:submodularity:rhs}
\underbrace{- \frac{1}{\dist(T,v)}}_{(b')}
+ \underbrace{\sum_{u\in V \setminus (T\cup \{v\})}\left( \frac{1}{\dist(T\cup\{v\},u)} - \frac{1}{\dist(T,u)} \right)}_{(c')}.
\end{equation}
Term $(a)$ in~\eqref{harmonic:submodularity:lhs} is non-negative, term $(b)$ is at least equal to term $(b')$, we show that term $(c)$ is at least $(c')$. We analyze each term of the sum, $u\in V \setminus (T\cup \{v\})$, separately, we have two cases: 
(1) $\dist(T\cup\{v\},u) = \dist(T,u)$. In this case the term related to $u$ in $(c')$ is equal to 0, while the one in $(c)$ is non-negative.
(2) $\dist(T\cup\{v\},u) < \dist(T,u)$. In this case we have $\dist(v,u) < \dist(T,u)\leq \dist(S,u)$ and  $\dist(S\cup\{v\},u)=\dist(T\cup\{v\},u)$. It follows that 
    $
     \frac{1}{\dist(S\cup\{v\},u)} - \frac{1}{\dist(S,u)} 
     = \frac{1}{\dist(T\cup\{v\},u)} - \frac{1}{\dist(S,u)}
     \geq \frac{1}{\dist(T\cup\{v\},u)} - \frac{1}{\dist(T,u)}
    $, which concludes the proof.

\subsection{Proof of Lemma~\ref{harmonic:lemmaweighted}.}

Let $A$ be an $\alpha$-approximation algorithm for the unweighted case of the problem. 
Given an instance $I$ of the group-harmonic centrality problem, we denote by $I_{\overline{w}}$ its unweighted version (setting all weights to 1). 
We denote by $\GH(\cdot)$ and $\GH_{\overline{w}}(\cdot)$ the corresponding group-harmonic objective functions and let $O_{\overline{w}}$ and $O$ be optimal solutions in $I_{\overline{w}}$ and $I$, respectively. 
Apply algorithm $A$ to $I_{\overline{w}}$ and let $S$ be the returned solution. 
We have that $\GH_{\overline{w}}(S) \ge \alpha \GH_{\overline{w}}(O_{\overline{w}}) \ge \alpha \GH_{\overline{w}}(O)$. 
Moreover, for any set $T$, it is easy to observe that $\GH_{\overline{w}}(T) \times \frac{1}{\LMAX} \le \GH(T) \le \GH_{\overline{w}}(T) \times \frac{1}{\LMIN}$. Hence, we obtain that $\GH(S) \ge \alpha\lambda\GH(O)$.

\subsection{Proof of Lemma~\ref{lemma: one half approx}.}
In the directed (resp. undirected) case, let $i$ be the first iteration such that $\Delta_i < 0$ (resp. $\Delta_i \le 0$). 
We show that in this case, for each $v\in V\setminus S_{i-1}$, $\dist(S_{i-1},v) \le 2$.
By contradiction, let us consider a node $v$ such that $\dist(S_{i-1},v) \ge 3$.
We first argue that $\dist(S_{i-1},v)  < \infty$.  
Indeed, in the directed case, if $\dist(S_{i-1},v)  = \infty$, then $v$ would yield a non-negative increment. 
Moreover, in the undirected case, if $\dist(S_{i-1},v)  = \infty$, then $v$ would yield a positive increment as we have assumed there are no isolated nodes in $G$. 
Hence, $\dist(S_{i-1},v)  < \infty$ and there exists a neighbor $u$ of $v$ on a shortest from $S_{i-1}$ to $v$. Then, $\dist(S_{i-1},u) \ge 2$,
$\dist(S_{i-1}\cup\{u\},v) = 1$
and $\GH(S_{i-1} \cup \{u\}) - \GH(S_{i-1}) \geq - \frac{1}{\dist(S_{i-1},u)} 
+ \frac{1}{\dist(S_{i-1}\cup\{u\},v)} - \frac{1}{\dist(S_{i-1},v)}
\ge - \frac{1}{2} + 1 - \frac{1}{3} > 0$, 
a contradiction to $\Delta_i < 0$ (resp. $\Delta_i \le 0$). 

Let $S$ be the set returned by Algorithm~\ref{algo:mon+submodular:greedy}. 
As $S_{i-1}\subset S$, it follows that the group-harmonic centrality of $S$ can be lower-bounded by $\frac{|V|-k}{2}$,
while the optimum can be upper-bounded by $|V| - k$. 
Thus, the approximation ratio guaranteed by $S$ is at least $0.5$.

\subsection{Proof of Theorem~\ref{harmonic:hardness:undirected}.}\label{apx:reduction harmonic undirected}

\begin{algorithm}
\caption{Approximation algorithm for Minimum Dominating Set used in the proof of Theorem~\ref{harmonic:hardness:undirected}}
\label{algo:harmonic:hardness}
\begin{algorithmic}[1]
\State We assume that there exists a $\gamma$-approximation algorithm $\mathcal{A}$ for the group harmonic maximization problem.
\For {$k=1,\ldots,n$}
  \State $D_k \gets \emptyset$
  \State $V_k^1\gets V$
  \State $j \gets 1$
  \While {$|V_k^j|\geq k +1$}
    \State Let $n_k^j=|V_k^j|$ and assume w.l.o.g. that $V_k^j = [n_k^j]$
    \State Build a graph $G_k^j=(\bar{V}_k^j,E_k^j)$ from the subgraph of $G$ induced by $V_k^j$ as follows
      \State $\bar{V}_k^j\gets V_k^j \cup\{x\}\cup Y_k^j \cup Z_k^j$, where  $Y_k^j:=\{y_i~|~i=1,\ldots,k\}$ and $Z_k^j:=\{z_i~|~i=1,\ldots, n_k^j\}$
      \State $E_k^j\gets E(V_k^j) \cup \{\{x,y_i\}~|~y_i\in Y_k^j\} \cup \{\{x,z_i\}~|~ z_i\in Z_k^j\} \cup \{\{z_i,{i}\}~|~i=1,\ldots, n_k^j\}  $
    \State Let $S_k^j$ be the solution returned by algorithm $\mathcal{A}$ on $G_k^j$ with budget $k+1$
    \State $D_k\gets D_k \cup (S_k^j \cap V_k^j)$
    \State $V_k^{j+1} \gets V_k^{j} \setminus (S_k^j \cup \bigcup_{v\in S_k^j}N_v)$
    \State $j\gets j+1$
  \EndWhile
  \State $D_k\gets D_k \cup V_k^j$
\EndFor
\State $D\gets \argmin_{k=1,\ldots,n} |D_k|$
\State\Return $D$
\end{algorithmic}
\end{algorithm}

By contradiction, let us assume that there exists a $\gamma$-approximation algorithm $\mathcal{A}$ for the group harmonic maximization problem, where $\gamma > 1-1/4e$. We show that, using $\mathcal{A}$, Algorithm~\ref{algo:harmonic:hardness} is a logarithmic-factor approximation algorithm to the minimum dominating set problem, which is a contradiction, unless $P=NP$~\cite{DinurS14}.

The \emph{Minimum Dominating Set} problem is defined as follows, let $G=(V,E)$ be an undirected graph, where $V=[n]$, find a dominating set, i.e. a set of nodes $D\subset V$ such that $V=D\cup \bigcup_{v\in D}N_v$, of minimum size. For any $c\in(0,1)$, there exist no $(c\ln n)$-approximation algorithm, unless $P=NP$~\cite{DinurS14}.

Let $k$ be the size of a minimum dominating set of a graph $G$. We can assume w.l.o.g. that $k\geq 3$, as otherwise we can guess a minimum dominating set. We observe that $|D|\leq |D_k|$ and therefore we can show a contradiction on $D_k$ instead of $D$. In the following we focus on iteration $k$ of the for loop.

Let $\eta$ be the last iteration  of the while loop (the largest value of $j$ such that the while condition holds). 
We will show that, at each iteration of the while loop, $k$ nodes are added to $D_k$. Moreover, since the exit condition of the while loop is $|V_k^{\eta+1}| < k+1$, then the size of $V_k^{\eta+1}$ is at most $k$. Therefore the size of $D_k$ is at most $\eta k + k$, which implies that the approximation ratio of Algorithm~\ref{algo:harmonic:hardness} is at most $\eta+1$. In the following we show that $|S_k^j\cap V_k^j| = k$, for each $j\leq \eta$, and bound the value of $\eta$.

We first show that, for each $j\leq \eta$, any solution $S_k^j$ returned by algorithm $\mathcal{A}$ selects node $x$, i.e. $x\in S_k^j$. Indeed, we show that if $x\not\in S_k^j$, then we can find a node $u\in S_k^j$ such that $\GH(S_k^j) \leq ((S_k^j\cup\{x\})\setminus \{u\})$. We analyze three different cases.
\begin{itemize}
    \item If $y_i\in S_k^j$ for some $y_i\in Y_k^j$, then  $\GH(S_k^j) \leq ((S_k^j\cup\{x\})\setminus \{y_i\})$ since $\dist(S_k^j,x) = \dist((S_k^j\cup\{x\})\setminus \{y_i\},y_i)=1$ and any node different from $y_i$ is closer to $x$ than to $y_i$.
    \item If $Y_k^j\cap S_k^j = \emptyset$ and $z_i\in S_k^j$ for some $z_i\in Z_k^j$, then $\GH(S_k^j) \le 2+ \frac{k}{2} + h'$, where the first term is due to the two neighbors of $z_i$, the second term is due to the nodes in $Y$, and $h'$ is the contribution of any other node, note that all such nodes are at distance at least $2$ from $z_i$. By swapping $z_i$ with $x$ we obtain $\GH((S_k^j\cup\{x\})\setminus \{z_i\}) \ge 1 + k + h''$, where the first term is due to $z_i$,
    the second term is due to the nodes in $Y$, and $h''$ is the contribution of any other node, which are at distance at most $2$ from $x$, that is $h''\geq h'$. It follows that $\GH((S_k^j\cup\{x\})\setminus \{z_i\}) \geq \GH(S_k^j)$, for any $k\geq 2$.

   \item If $(Y_i\cup Z_k^j)\cap S_k^j = \emptyset$, that is $S_k^j\subseteq V_k^j$, then we show that there exists a node $v\in S_k^j$ such that $\GH(S_k^j) \leq ((S_k^j\cup\{x\})\setminus \{v\})$. For each $v\in S_k^j$, let us define $C(v):=\{w \in V_k^j~|~\dist(v,w)=1~\wedge~\dist(S_k^j\setminus \{v\},w)> 1 \}$, in other words, $C(v)$ are the nodes of $V_k^j$ that, among nodes in $S_k^j$, are adjacent only to $v$. Since, for $k\geq 1$, we have $|S_k^j|\geq 2$, then there exists at least a node $v\in S_k^j$ such that $|C(v)|\leq \lfloor n_k^j/2 \rfloor$. We observe that among nodes in $Z_k^j$ there are $k+1$ nodes at distance $1$ from $S_k^j$ and $n_k^j -k-1$ other nodes at distance at least $2$ (note that $n_k^j\geq k+1$ due to the condition of the while loop). The group harmonic centrality of $S_k^j$ is then $\GH(S_k^j) \leq \frac{k}{3} + \frac{1}{2} + k+1+ \frac{n_k^j -k-1}{2} + |C(v)| + h'$, where $ \frac{k}{3} + \frac{1}{2}$ is the contribution of nodes in $Y_k^j\cup\{x\}$ and  $h'$ is the contribution of nodes not in $Y_k^j\cup Z_k^j \cup\{x\}\cup C(v)$. By swapping $v$ with $x$ we obtain $\GH((S_k^j\cup\{x\})\setminus \{v\}) = k +n_k^j + \frac{|C(v)|}{2} + \frac{1}{2} + h''$, where the last term $\frac{1}{2}$ is due to $v$ and $h''$ is the contribution of nodes not in $Y_k^j\cup Z_k^j \cup\{x\}\cup C(v)$, with $h''\geq h'$. Since $|C(v)|\leq\lfloor n_k^j/2 \rfloor \leq n_k^j$ and $k\geq 3$, we obtain the statement.
\end{itemize}

We can further assume that, since $x\in S_k^j$, then $S_k^j$ does not contain any node in $Y_k^j\cup Z_k^j$. Indeed, if $y_i\in S_k^j$, for some $y_i\in Y_k^j$, then we can can swap $y_i$ with any node in $V_k^j\setminus S_k^j$. If $z_i\in S_k^j$, for some $z_i\in Z_k^j$, then we can swap $z_i$ with its neighbor in $V_k^j$, if it does not belong to $S_k^j$ or with any other node in $V_k^j\setminus S_k^j$ otherwise. In any case we do not decrease the value of the objective function.

Since $x\in S_k^j$ and $(Y_i\cup Z_k^j)\cap S_k^j = \emptyset$, it follows that $|S_k^j\cap V_k^j| = k$.

We now bound the value of $\eta$.
For each $j\leq \eta$, we have that the optimal value $OPT$ of the harmonic maximization problem on $G_k^j$ is at least $2n_k^j$. In fact, since $k$ is the size of an optimal dominating set of $G$, then there exists a dominating set of size $k$ for the subgraph of $G$ induced by $V_k^j$. If we select the $k$ nodes in a dominating set of this subgraph and node $x$, we have that all nodes in $V_k^j$ that are not selected and all nodes in $Y_k^j \cup Z_k^j$ are at distance 1 from the nodes in the solution. 

Let us consider the first iteration of the while loop (i.e. $j=1$) and let us denote as $c$ (as ``covered'') and $u$ (as ``uncovered'') the number of nodes in $V_k^1$ that are at distance $1$ and $2$ from $S_k^1$, respectively. Since $x\in S_k^1$ there is no node at distance greater than 2. We have that
\[
\GH(S_k^1) = c+ \frac{u}{2} + n_k^1 + k \geq \gamma 2n_k^1,
\]
since $S_k^1$ is a $\gamma$ approximation to $OPT$. 
Moreover, we have $n_k^1 = c+u+k$, that is $c = n_k^1 -u-k$, which implies 
\[
2n_k^1 -\frac{u}{2}\geq \gamma 2n_k^1,
\]
that is
\[
u \leq 4 n_k^1(1-\gamma).
\]
Note that $u$ is the number of nodes in $V$ that are given in input to the next iteration, i.e. $u=n_k^2$. By iterating the above arguments, we obtain
\[
n_k^j\leq 4n_k^{j-1}(1-\gamma) \leq n_k^1(4(1-\gamma))^{j-1},
\]
for each $j=2,\ldots,\eta$. By plugging $j=\eta$ and observing that $n_k^{\eta}\geq 1$, we obtain
\[
1\leq n_k^{\eta}\leq n_k^1(4(1-\gamma))^{\eta-1}.
\]
Since $\gamma>1-\frac{1}{4e}$, we have $4(1-\gamma)<1$, and hence the above inequality can be solved as
\[
\eta -1 \leq \log_{4(1-\gamma)}\frac{1}{n_k^1}=\frac{\ln(n_k^1)}{\ln((4(1-\gamma))^{-1})}.
\]

The approximation ratio of Algorithm~\ref{algo:harmonic:hardness} is at most $\eta +1 \leq \frac{\ln(n_k^1)}{\ln((4(1-\gamma))^{-1})} +2$. Let us denote $\alpha := \frac{1}{\ln((4(1-\gamma))^{-1})}$, since $\gamma>1-\frac{1}{4e}$, then $\alpha < 1$. For any $\beta$ such that $0<\alpha<\beta<1$ there exists a $n_\beta$ such that for each $n_k^1\geq n_\beta$, $\alpha\ln(n_k^1) + 2 \leq \beta\ln(n_k^1)$, which implies that the approximation ratio of Algorithm~\ref{algo:harmonic:hardness} is at most $\beta\ln(n_k^1)$. Since for any $c\in(0,1)$, there exist no $(c\ln n)$-approximation algorithm, unless $P=NP$~\cite{DinurS14}, we obtain a contradiction.

\section{Counter-example on the submodularity of group closeness}\label{apx:example}
We provide here a simple example illustrating that $\GC(\cdot)$ is not a submodular set 
function.\footnote{Note that another counter-example has already been pointed out in the most recent version of~\cite{BergaminiGM18}.}
Consider a simple path graph composed of four nodes $v_1$, $v_2$, $v_3$, and $v_4$. The edge-weight function $\ell$ is defined as follows: $\ell(\{v_1,v_2\})= L$ and $\ell(\{v_2,v_3\})=\ell(\{v_3,v_4\})=1$. It is easy to check that $\GC(\emptyset) = 0$, $\GC(\{v_1\}) = 4/(3L+3)$, $\GC(\{v_2\}) = 4/(L+3)$  and $\GC(\{v_1,v_2\}) = 4/3$. Hence, $\GC(\{v_1,v_2\}) - \GC(\{v_1\}) = 4L/(3(L+1))$ and $\GC(\{v_2\}) - \GC(\emptyset)  = 4/(L+3)$. It is straightforward that for a large enough value of $L$ (more precisely for $L\ge 2$), we have $\GC(\{v_1,v_2\}) - \GC(\{v_1\}) > \GC(\{v_2\}) - \GC(\emptyset)$ which shows that $\GC(\cdot)$ is not submodular.

\section{Approximation for group-closeness maximization in the Sense of Li et al.}\label{apx:li-approach}
The approach of Li et al.\ in fact works for minimizing a general supermodular monotone non-increasing function $f(\cdot)$ with respect to a cardinality constraint. They let $x_1^{*} \in \arg\max \{f(\emptyset) - f(\{x\})\}$ and use the greedy algorithm on the set function $g(S):= f(\{x^{*}_1\}) - f(\{x^{*}_1\}\cup S)$, which is a monotone non-decreasing submodular set function with $g(\emptyset) = 0$. Thus, the greedy algorithm maximizes the function with respect to a cardinality constraint within an approximation factor of $1-1/e$~\cite{NemhauserWF78}. However, there are two caveats. First, the greedy algorithm uses a budget of $k-1$ instead of $k$ (as a budget of one is spent on identifying $x_1^{*}$) and thus Li et al.\ obtain an approximation factor of $1-k/((k-1)e)$. Second and most importantly, observe that the approximation factor is obtained on the function $g(S)$ and not $f(S)$, i.e., they get a set $S$ of size $k-1$ such that 
$
    f(\{x_1^*\}) - f(S\cup \{x_1^*\})
    \ge \Big(1-\frac{k}{(k-1)\cdot e}\Big) \cdot
    (f(\{x_1^*\}) - f(S^*\cup \{x_1^*\}),
$ 
where $S^*$ is an optimal set of size $k-1$ for adding to $\{x_1^*$\} with the goal of minimizing $f$. We remark that this set is not necessarily related to the set that minimizes $f$ with respect to the cardinality constraint.
Clearly, this approach can be applied for the supermodular farness function $\GF(\cdot)$ in place of $f(\cdot)$. It can, however, not provide an approximation algorithm for $\GF(\cdot)$ in the usual sense --
and furthermore it would not be easily extendable to the closeness function $\GC(\cdot)$.

\section{Ground Truth via ILP.}
To evaluate the quality of the results yielded by our greedy algorithm we
develop an ILP formulation of the group harmonic closeness maximization problem
similar to the one proposed in other
works~\cite{BergaminiGM18,crescenzi2016greedily} which we use later in our
experiments to compute the optimal solution $S^\star$ for some instances with
limited size and we compare it to the one yielded by our greedy algorithm.

We define a binary variable $y_j$ for each vertex $v_j \in V$ that is 1 if
$v_j \in S^\star$, 0 otherwise. A vertex $v_i$ is \emph{assigned} to $v_j\in
S^\star$ if $\dist(v_i, S^\star) = \dist(v_i, v_j)$ (if multiple vertices
satisfy this condition $v_i$ can be assigned arbitrarily to one of them).
For every node pair $(v_i, v_j)$ we define a variable $x_{ij}$ that is 1 if
$v_i$ is assigned to $v_j$, 0 otherwise.
Note that maximizing the sum of all $x_{ij}/\dist(v_i, v_j)$ would not work
because this would yield divisions by zero if a vertex is assigned to
itself. Thus, we set the contribution of all $x_{ii}$ to zero by splitting the
sum in two terms.

\begin{align*}
    \max&\sum_{i = 1}^n\left(
        \sum_{j = 1}^{i - 1} \frac{x_{ij}}{\dist(v_i, v_j)} +
        \sum_{j = i + 1}^{n}\frac{x_{ij}}{\dist(v_i, v_j)}
    \right)\label{eq:ghc-ilp}\\
        \text{s.t.}\, (i)\ & \sum_{j = 1}^n x_{ij} + y_i= 1 \quad \forall i \in \{1, \dots, n\}\\
        (ii)\ & \sum_{j = 1}^n y_j = k\\
        (iii)\ & x_{ij} \le y_j \quad \forall i,j \in \{1, \dots, n\}\\
         \text{where}\ & x_{ij}, y_j \in \{0, 1\}
\end{align*}

Condition $(i)$ states that each vertex but the ones in $S^\star$ is assigned
to exactly one vertex $v_j\in S^\star$, $(ii)$ that $|S^\star| = k$, and
$(iii)$ a vertex can be assigned only to vertices in $S^\star$.

\section{Additional Experimental Results for Group Harmonic Maximization}
\label{apx:add-exp-harmonic}


\label{apx:exact-harmonic}
\FloatBarrier

\begin{minipage}[t]{\columnwidth}
\begin{figure}[H]
\centering
\begin{subfigure}[t]{\columnwidth}
\centering
\includegraphics{plots/legend-exact-harmonic}
\end{subfigure}\smallskip

\begin{subfigure}[t]{.5\columnwidth}
\centering
\includegraphics{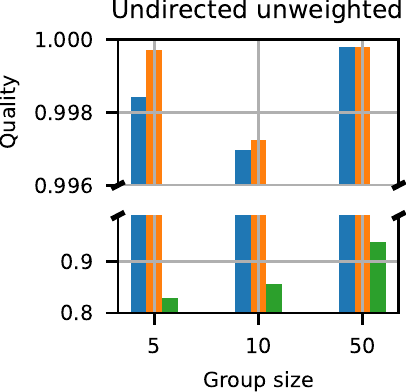}
\caption{Complex networks}
\end{subfigure}\hfill
\begin{subfigure}[t]{.5\columnwidth}
\centering
\includegraphics{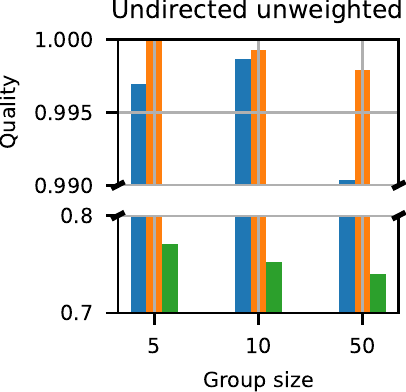}
\caption{High-diameter networks}
\end{subfigure}
\caption{Quality vs. the optimum over the small networks of Tables~\ref{tab:cplx-harmonic-small}
and~\ref{tab:high-diam-harmonic-small}.}
\label{fig:apx:quality-harmonic-ilp}
\end{figure}
\end{minipage}


\newpage
\label{apx:larger-harmonic}
\FloatBarrier

\begin{minipage}[t]{\columnwidth}
\begin{figure}[H]
\centering
\begin{subfigure}[t]{\columnwidth}
\centering
\includegraphics{plots/legend-large-harmonic}
\end{subfigure}\smallskip

\begin{subfigure}[t]{.5\columnwidth}
\centering
\includegraphics{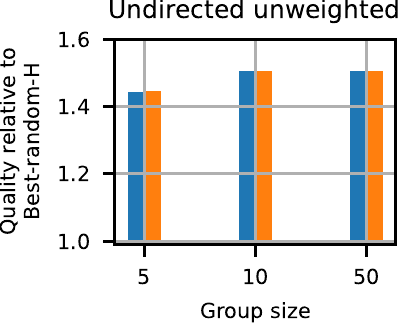}
\end{subfigure}\hfill
\begin{subfigure}[t]{.5\columnwidth}
\centering
\includegraphics{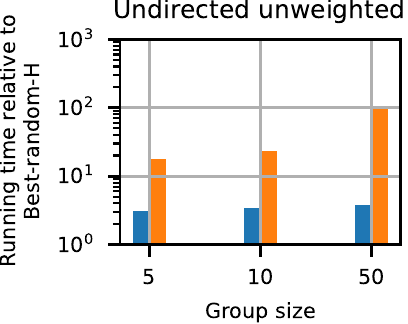}
\end{subfigure}

\caption{Quality and running time relative to \rndh over the large complex networks
of Table~\ref{tab:cplx-large}.}
\label{fig:apx:quality-harmonic-cplx}
\end{figure}

\begin{figure}[H]
\centering
\begin{subfigure}[t]{\columnwidth}
\centering
\includegraphics{plots/legend-large-harmonic}
\end{subfigure}\smallskip

\begin{subfigure}[t]{.5\columnwidth}
\centering
\includegraphics{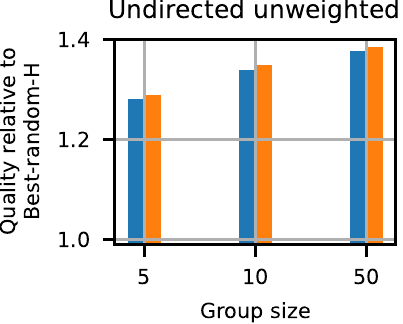}
\end{subfigure}\hfill
\begin{subfigure}[t]{.5\columnwidth}
\centering
\includegraphics{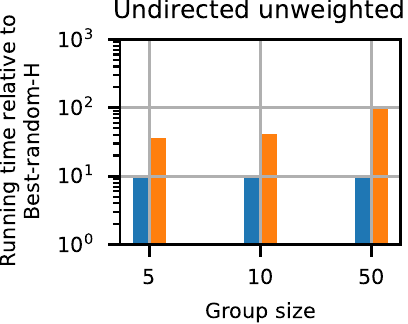}
\end{subfigure}\medskip

\begin{subfigure}[t]{.5\columnwidth}
\centering
\includegraphics{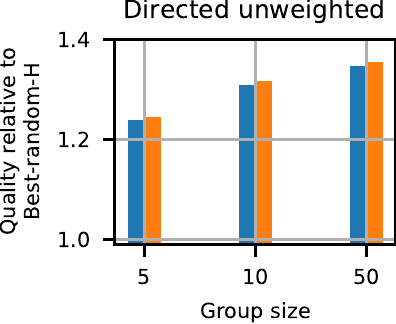}
\end{subfigure}\hfill
\begin{subfigure}[t]{.5\columnwidth}
\centering
\includegraphics{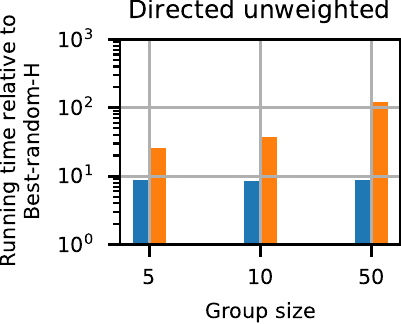}
\end{subfigure}\medskip

\begin{subfigure}[t]{.5\columnwidth}
\centering
\includegraphics{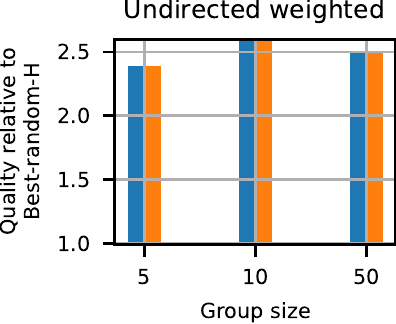}
\end{subfigure}\hfill
\begin{subfigure}[t]{.5\columnwidth}
\centering
\includegraphics{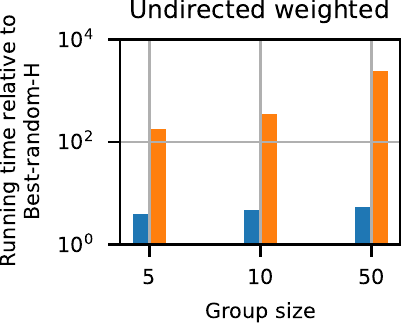}
\end{subfigure}\medskip

\caption{Quality and running time relative to \rndh over the large
high-diameter networks of Table~\ref{tab:high-diam-harmonic-large}.}
\label{fig:apx:quality-harmonic-high-diameter}
\end{figure}
\end{minipage}

\newpage


\section{Additional Experimental Results for Group Closeness Maximization}
\label{apx:add-exp-clos}

\begin{minipage}[t]{\columnwidth}
\begin{figure}[H]
\centering
\begin{subfigure}[t]{\columnwidth}
\centering
\includegraphics{plots/legend-ilp-closeness}
\end{subfigure}\smallskip

\begin{subfigure}[t]{.5\columnwidth}
\centering
\includegraphics{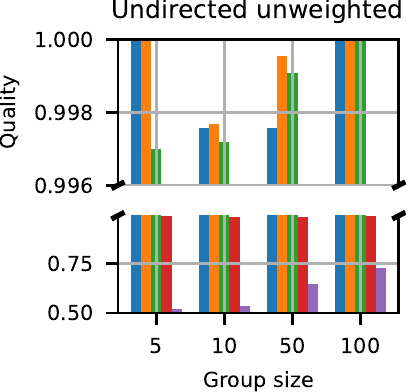}
\caption{Complex networks}
\end{subfigure}\hfill
\begin{subfigure}[t]{.5\columnwidth}
\centering
\includegraphics{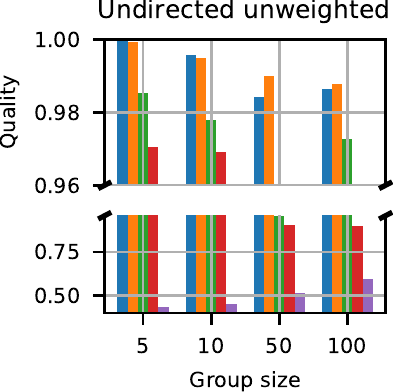}
\caption{High-diameter networks}
\end{subfigure}\medskip

\begin{subfigure}[t]{\columnwidth}
\begin{subfigure}[t]{.5\columnwidth}
\centering
\includegraphics{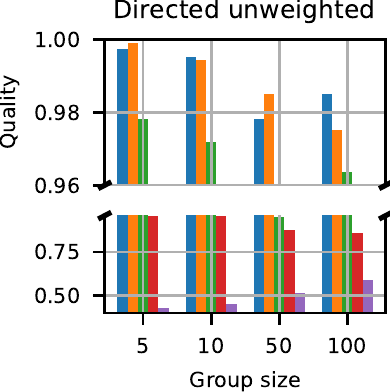}
\caption{High-diameter networks}
\end{subfigure}\hfill
\begin{subfigure}[t]{.5\columnwidth}
\centering
\includegraphics{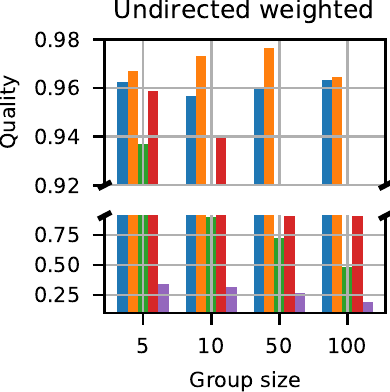}
\caption{High-diameter networks}
\end{subfigure}
\end{subfigure}

\caption{Quality vs the optimum over the small networks of Tables~\ref{tab:cplx-small}
and~\ref{tab:high-diam-small}.}
\label{fig:apx:quality-closeness-ilp}
\end{figure}
\end{minipage}

\vfill\eject


\begin{minipage}[t]{\columnwidth}
\begin{figure}[H]
\centering
\begin{subfigure}[t]{\columnwidth}
\centering
\includegraphics{plots/legend-quality-closeness}
\end{subfigure}\smallskip

\begin{subfigure}[t]{\columnwidth}
\begin{subfigure}[t]{.5\columnwidth}
\centering
\includegraphics{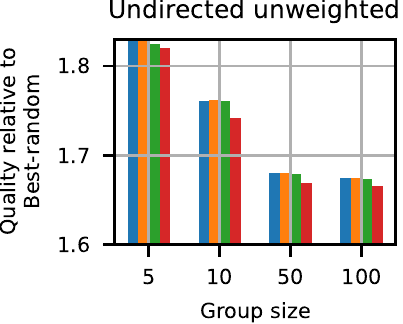}
\end{subfigure}\hfill
\begin{subfigure}[t]{.5\columnwidth}
\centering
\includegraphics{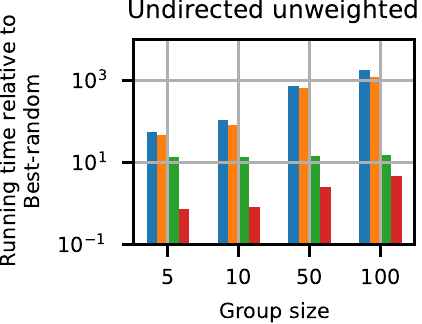}
\end{subfigure}
\caption{Complex networks}
\end{subfigure}\medskip

\begin{subfigure}[t]{\columnwidth}
\begin{subfigure}[t]{.5\columnwidth}
\centering
\includegraphics{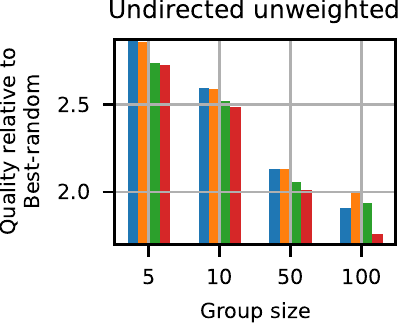}
\end{subfigure}\hfill
\begin{subfigure}[t]{.5\columnwidth}
\centering
\includegraphics{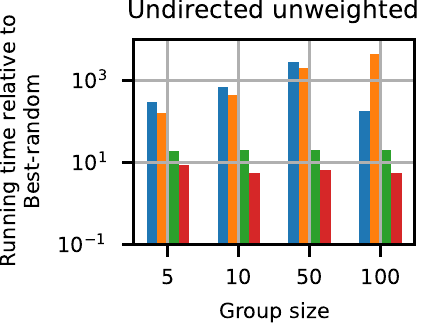}
\end{subfigure}
\caption{High-diameter diameter networks}
\end{subfigure}\medskip

\begin{subfigure}[t]{\columnwidth}
\begin{subfigure}[t]{.5\columnwidth}
\centering
\includegraphics{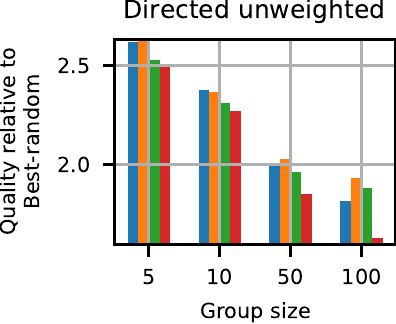}
\end{subfigure}\hfill
\begin{subfigure}[t]{.5\columnwidth}
\centering
\includegraphics{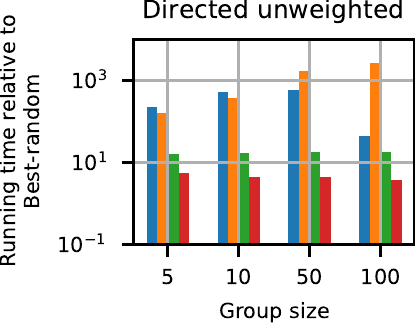}
\end{subfigure}
\caption{High-diameter networks}
\end{subfigure}\medskip

\begin{subfigure}[t]{\columnwidth}
\begin{subfigure}[t]{.5\columnwidth}
\centering
\includegraphics{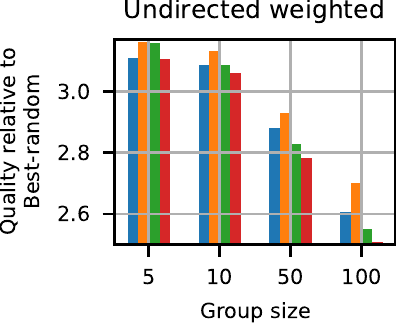}
\end{subfigure}\hfill
\begin{subfigure}[t]{.5\columnwidth}
\centering
\includegraphics{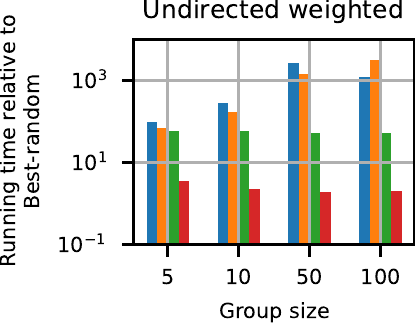}
\end{subfigure}
\caption{High-diameter networks}
\end{subfigure}

\caption{Quality and running time relative to \rnd over the complex networks of
Table~\ref{tab:cplx-large}.}
\label{fig:apx:quality-closeness}
\end{figure}
\end{minipage}

\vfill\eject
\section{Instances Statistics}
\label{apx:insts-stats}
\FloatBarrier

\setlength{\tabcolsep}{2pt}




\begin{minipage}[tb]{\columnwidth}
\begin{table}[H]
\footnotesize
\centering
\begin{tabular}{lcrr}
\midrule
Graph & Type & $|V|$ & $|E|$\\
\midrule
convote & \texttt{U} & \numprint{219} & \numprint{586}\\
dimacs10-football & \texttt{U} & \numprint{115} & \numprint{613}\\
wiki\_talk\_ht & \texttt{U} & \numprint{537} & \numprint{787}\\
moreno\_innovation & \texttt{U} & \numprint{241} & \numprint{1098}\\
dimacs10-celegans\_metabolic & \texttt{U} & \numprint{453} & \numprint{2025}\\
arenas-meta & \texttt{U} & \numprint{453} & \numprint{2025}\\
foodweb-baywet & \texttt{U} & \numprint{128} & \numprint{2106}\\
contact & \texttt{U} & \numprint{275} & \numprint{2124}\\
foodweb-baydry & \texttt{U} & \numprint{128} & \numprint{2137}\\
moreno\_oz & \texttt{U} & \numprint{217} & \numprint{2672}\\
arenas-jazz & \texttt{U} & \numprint{198} & \numprint{2742}\\
sociopatterns-infectious & \texttt{U} & \numprint{411} & \numprint{2765}\\
dimacs10-celegansneural & \texttt{U} & \numprint{297} & \numprint{4296}\\
radoslaw\_email & \texttt{U} & \numprint{168} & \numprint{5783}\\
\midrule
convote & \texttt{D} & \numprint{219} & \numprint{586}\\
wiki\_talk\_ht & \texttt{D} & \numprint{537} & \numprint{787}\\
moreno\_innovation & \texttt{D} & \numprint{241} & \numprint{1098}\\
foodweb-baywet & \texttt{D} & \numprint{128} & \numprint{2106}\\
foodweb-baydry & \texttt{D} & \numprint{128} & \numprint{2137}\\
moreno\_oz & \texttt{D} & \numprint{217} & \numprint{2672}\\
dimacs10-celegansneural & \texttt{D} & \numprint{297} & \numprint{4296}\\
radoslaw\_email & \texttt{D} & \numprint{168} & \numprint{5783}\\
\midrule
\end{tabular}
\caption{Small complex networks used for group harmonic closeness experiments with ILP solver.}
\label{tab:cplx-harmonic-small}
\end{table}
\end{minipage}

\begin{table}[tb]
\footnotesize
\centering
\begin{tabular}{lcrr}
\midrule
Graph & Type & $|V|$ & $|E|$\\
\midrule
dbpedia-similar & \texttt{UU} & \numprint{430} & \numprint{564}\\
niue & \texttt{UU} & \numprint{461} & \numprint{1055}\\
tuvalu & \texttt{UU} & \numprint{436} & \numprint{1082}\\
librec-filmtrust-trust & \texttt{UU} & \numprint{874} & \numprint{1853}\\
\midrule
niue & \texttt{UW} & \numprint{461} & \numprint{1055}\\
tuvalu & \texttt{UW} & \numprint{436} & \numprint{1082}\\
\midrule
niue & \texttt{DU} & \numprint{461} & \numprint{1055}\\
tuvalu & \texttt{DU} & \numprint{436} & \numprint{1082}\\
librec-filmtrust-trust & \texttt{DU} & \numprint{874} & \numprint{1853}\\
\midrule
niue & \texttt{DW} & \numprint{461} & \numprint{1055}\\
tuvalu & \texttt{DW} & \numprint{436} & \numprint{1082}\\
\midrule
\end{tabular}
\caption{Small high-diameter networks used for group harmonic closeness experiments with ILP solver.}
\label{tab:high-diam-harmonic-small}
\end{table}




\begin{table}[tb]
\footnotesize
\centering
\begin{tabular}{lcrr}
\midrule
Graph & Type & $|V|$ & $|E|$\\
\midrule
dimacs10-celegans\_metabolic & \texttt{U} & \numprint{453} & \numprint{2025}\\
arenas-meta & \texttt{U} & \numprint{453} & \numprint{2025}\\
contact & \texttt{U} & \numprint{274} & \numprint{2124}\\
arenas-jazz & \texttt{U} & \numprint{198} & \numprint{2742}\\
sociopatterns-infectious & \texttt{U} & \numprint{410} & \numprint{2765}\\
dnc-corecipient & \texttt{U} & \numprint{849} & \numprint{10384}\\
\midrule
moreno\_oz & \texttt{D} & \numprint{214} & \numprint{2658}\\
wiki\_talk\_lv & \texttt{D} & \numprint{510} & \numprint{2783}\\
wiki\_talk\_eu & \texttt{D} & \numprint{617} & \numprint{2811}\\
dnc-temporalGraph & \texttt{D} & \numprint{520} & \numprint{3518}\\
dimacs10-celegansneural & \texttt{D} & \numprint{297} & \numprint{4296}\\
wiki\_talk\_bn & \texttt{D} & \numprint{700} & \numprint{4316}\\
wiki\_talk\_eo & \texttt{D} & \numprint{822} & \numprint{6076}\\
wiki\_talk\_gl & \texttt{D} & \numprint{1009} & \numprint{7435}\\
\midrule
\end{tabular}
\caption{Small complex networks used for group closeness experiments with ILP solver.}
\label{tab:cplx-small}
\end{table}

\begin{table}[tb]
\footnotesize
\centering
\begin{tabular}{lcrr}
\midrule
Graph & Type & $|V|$ & $|E|$\\
\midrule
tuvalu & \texttt{UU} & \numprint{152} & \numprint{187}\\
niue & \texttt{UU} & \numprint{461} & \numprint{529}\\
nauru & \texttt{UU} & \numprint{618} & \numprint{729}\\
dimacs10-netscience & \texttt{UU} & \numprint{379} & \numprint{914}\\
asoiaf & \texttt{UU} & \numprint{796} & \numprint{2823}\\
\midrule
tuvalu & \texttt{UW} & \numprint{152} & \numprint{187}\\
niue & \texttt{UW} & \numprint{461} & \numprint{529}\\
nauru & \texttt{UW} & \numprint{618} & \numprint{729}\\
\midrule
tuvalu & \texttt{DU} & \numprint{152} & \numprint{374}\\
niue & \texttt{DU} & \numprint{461} & \numprint{1055}\\
librec-filmtrust-trust & \texttt{DU} & \numprint{267} & \numprint{1099}\\
nauru & \texttt{DU} & \numprint{618} & \numprint{1427}\\
\midrule
tuvalu & \texttt{DW} & \numprint{152} & \numprint{374}\\
niue & \texttt{DW} & \numprint{461} & \numprint{1055}\\
nauru & \texttt{DW} & \numprint{618} & \numprint{1427}\\
\midrule
\end{tabular}
\caption{Small high-diameter networks used for group closeness experiments with ILP solver.}
\label{tab:high-diam-small}
\end{table}

\begin{table}[tb]
\footnotesize
\centering
\begin{tabular}{lcrr}
\midrule
Graph & Type & $|V|$ & $|E|$\\
\midrule
loc-brightkite\_edges & \texttt{U} & \numprint{56739} & \numprint{212945}\\
douban & \texttt{U} & \numprint{154908} & \numprint{327162}\\
petster-cat-household & \texttt{U} & \numprint{68315} & \numprint{494562}\\
wikipedia\_link\_ckb & \texttt{U} & \numprint{60257} & \numprint{801794}\\
wikipedia\_link\_fy & \texttt{U} & \numprint{65512} & \numprint{921533}\\
livemocha & \texttt{U} & \numprint{104103} & \numprint{2193083}\\
\midrule
wikipedia\_link\_mi & \texttt{D} & \numprint{3696} & \numprint{99237}\\
wikipedia\_link\_lo & \texttt{D} & \numprint{1622} & \numprint{109577}\\
wikipedia\_link\_so & \texttt{D} & \numprint{5149} & \numprint{114922}\\
foldoc & \texttt{D} & \numprint{13274} & \numprint{119485}\\
wikipedia\_link\_co & \texttt{D} & \numprint{5150} & \numprint{160474}\\
web-NotreDame & \texttt{D} & \numprint{53968} & \numprint{296228}\\
slashdot-zoo & \texttt{D} & \numprint{26997} & \numprint{333425}\\
soc-Epinions1 & \texttt{D} & \numprint{32223} & \numprint{443506}\\
wikipedia\_link\_jv & \texttt{D} & \numprint{39248} & \numprint{1059059}\\
\midrule
\end{tabular}
\caption{Largest (strongly) connected components of the complex networks in
Table~\ref{tab:cplx-harmonic-large} used for group closeness experiments.}
\label{tab:cplx-large}
\end{table}

\begin{table}[tb]
\footnotesize
\centering
\begin{tabular}{lcrr}
\midrule
Graph & Type & $|V|$ & $|E|$\\
\midrule
seychelles & \texttt{UU} & \numprint{3907} & \numprint{4322}\\
comores & \texttt{UU} & \numprint{3789} & \numprint{4630}\\
andorra & \texttt{UU} & \numprint{4219} & \numprint{4933}\\
opsahl-powergrid & \texttt{UU} & \numprint{4941} & \numprint{6594}\\
liechtenstein & \texttt{UU} & \numprint{6215} & \numprint{7002}\\
faroe-islands & \texttt{UU} & \numprint{12129} & \numprint{13165}\\
\midrule
seychelles & \texttt{UW} & \numprint{3907} & \numprint{4322}\\
comores & \texttt{UW} & \numprint{3789} & \numprint{4630}\\
andorra & \texttt{UW} & \numprint{4219} & \numprint{4933}\\
liechtenstein & \texttt{UW} & \numprint{6215} & \numprint{7002}\\
faroe-islands & \texttt{UW} & \numprint{12129} & \numprint{13165}\\
DC & \texttt{UW} & \numprint{9522} & \numprint{14807}\\
\midrule
seychelles & \texttt{DU} & \numprint{3907} & \numprint{8225}\\
andorra & \texttt{DU} & \numprint{4160} & \numprint{8288}\\
comores & \texttt{DU} & \numprint{3789} & \numprint{8952}\\
liechtenstein & \texttt{DU} & \numprint{6205} & \numprint{13591}\\
faroe-islands & \texttt{DU} & \numprint{12077} & \numprint{25679}\\
opsahl-openflights & \texttt{DU} & \numprint{2868} & \numprint{30404}\\
tntp-ChicagoRegional & \texttt{DU} & \numprint{12978} & \numprint{39017}\\
\midrule
seychelles & \texttt{DW} & \numprint{3907} & \numprint{8225}\\
andorra & \texttt{DW} & \numprint{4160} & \numprint{8288}\\
comores & \texttt{DW} & \numprint{3789} & \numprint{8952}\\
liechtenstein & \texttt{DW} & \numprint{6205} & \numprint{13591}\\
faroe-islands & \texttt{DW} & \numprint{12077} & \numprint{25679}\\
\midrule
\end{tabular}
\caption{Largest (strongly) connected components of the high-diameter networks in Table~\ref{tab:high-diam-harmonic-large}
used for group closeness experiments.}
\label{tab:high-diam-large}
\end{table}

\FloatBarrier
\section{Pseudocodes}
\label{apx:pseudocodes}
\vspace{-15pt}
\begin{minipage}[t]{\columnwidth}
\begin{algorithm}[H]
\caption{Greedy algorithm for group-harmonic closeness}
\begin{algorithmic}[1]
\State$v \gets \texttt{topHarmonicCloseness}()$;\ \ $S \gets \{v\}$ \label{line:top-harmonic-vtx}
\While{$|S| < k$}\label{line:group-hclos-while}
    \State$PQ \gets $ max-PQ with key $\GHhat(S, u)$ and value $u$\label{line:group-hclos-pq}
    \For{each $u\in V \setminus S$}\label{line:max-pq-insert1}
        \State$PQ.\texttt{push}(u)$
    \EndFor\label{line:max-pq-insert2}
    \State$x \gets \nil$
    \State $\GH(S \cup \{x\})$ $\gets -\infty$
    \Repeat\Comment{This loop is done in parallel.}\label{line:group-hclos-repeat}
        \State $u \gets PQ.\texttt{extract\_max}()$
        \If{$\GHhat(S, u) \le \GH(S \cup \{x\})$}
        \State\textbf{break}\Comment{$x$ has the highest marginal gain.}
        \EndIf
        \State (isExact, $\GH(S \cup \{u\})) \gets $ pruned SSSP($u, \GH(S \cup \{x\})$)\label{line:group-hclos-sssp}
        \If{isExact \textbf{and} $\GH(S \cup \{u\}) > \GH(S \cup \{x\})$}
            \State $x \gets u$
        \EndIf
    \Until{$PQ$ is empty}
    \State$S \gets S \cup \{x\}$
\EndWhile
\State\Return$S$
\end{algorithmic}
\label{algo:greedy-group-harmonic}
\end{algorithm}
\end{minipage}

\begin{minipage}[t]{\columnwidth}
\begin{algorithm}[H]
\caption{Overview of the single-swap algorithm}
\label{algo:local-search}
\begin{algorithmic}[1]
\State $S\gets$ \textit{grow-shrink}$(G, k)$
\State $\GF(S) \gets SSSP(S)$
\Repeat
    \State $PQ_u \gets$ min-PQ with key $(\GF(S \setminus \{u\}) - \GF(S))$ and value $u$\label{line:pq1}
    \For{each $w\in S$}
    \State $PQ_u.\texttt{push}(w)$
    \EndFor\label{line:pq2}
    \State didSwap $\gets$ \false
    \Repeat\label{line:outer-repeat}
        \State $u \gets PQ_u.\texttt{extract\_min}()$\\
        \Comment{Compute exact farness increase}
        \State $\GF^+(u)$ $\gets \GF(S \setminus \{u\}) - \GF(S)$\label{line:farn-inc}
        \State compute $\GFapx(\Suv)$ for all $V \setminus S$\label{line:maxpq1}
        \State $PQ_v \gets$ max-PQ with key $\GFapx(\Suv)$ and value $v$
        \For{each $w\in V\setminus S$}
        \State $PQ_v.\texttt{push}(w)$
        \EndFor\label{line:maxpq2}
        \Repeat\label{line:inner-repeat}\Comment{This loop is done in parallel.}
        \State \label{line:pick-swap} $v \gets PQ.\texttt{extract\_max()}$
        \State $\GF(\Suv)\gets$ pruned SSSP from $v$\label{line:pruned-sssp}\Comment{Compute exact farness decrement.}
        \If{$\GF(\Suv) \le (1 - \frac{\varepsilon}{k\cdot(n - k)}) \GF(S)$}
            \State $S \gets \Suv$
            \State $\GF(S) \gets SSSP(S)$
            \State didSwap $\gets$ \true
            \State\textbf{break}
        \EndIf
        \Until{$PQ_v$ is empty}
    \If{didSwap}
    \State\textbf{break}
    \EndIf
    \Until{$PQ_u$ is empty}
\Until{\texttt{not} didSwap}
\State\Return$S$
\end{algorithmic}
\end{algorithm}
\end{minipage}

\newpage
\section{Running Times}
\label{sec:running-times}

\begin{minipage}[t]{\columnwidth}
\begin{table}[H]
\footnotesize
\centering
\tabtitle{Undirected unweighted}

\begin{tabular}{lrrrrrr}
\toprule
Graph & \multicolumn{3}{c}{Greedy-H} & \multicolumn{3}{c}{Greedy-LS-H}\\
\hfill $k$ & $5$ & $10$ & $50$ & $5$ & $10$ & $50$\\
\midrule
petster-hamster-household & \textless\numprint{0.1} & \textless\numprint{0.1} & \textless\numprint{0.1} & \textless\numprint{0.1} & \textless\numprint{0.1} & \textless\numprint{0.1}\\
petster-hamster-friend & \textless\numprint{0.1} & \textless\numprint{0.1} & \textless\numprint{0.1} & \textless\numprint{0.1} & \textless\numprint{0.1} & \numprint{0.1}\\
petster-hamster & \textless\numprint{0.1} & \textless\numprint{0.1} & \textless\numprint{0.1} & \textless\numprint{0.1} & \textless\numprint{0.1} & \textless\numprint{0.1}\\
loc-brightkite\_edges & \numprint{1.1} & \numprint{1.0} & \numprint{1.1} & \numprint{4.3} & \numprint{6.6} & \numprint{25.8}\\
douban & \numprint{8.1} & \numprint{8.1} & \numprint{8.4} & \numprint{40.3} & \numprint{86.3} & \numprint{303.0}\\
petster-cat-household & \numprint{0.1} & \numprint{0.2} & \numprint{0.3} & \numprint{19.5} & \numprint{23.8} & \numprint{106.1}\\
loc-gowalla\_edges & \numprint{8.9} & \numprint{8.4} & \numprint{8.7} & \numprint{59.8} & \numprint{97.3} & \numprint{1064.5}\\
wikipedia\_link\_fy & \numprint{3.8} & \numprint{3.8} & \numprint{4.0} & \numprint{13.3} & \numprint{15.7} & \numprint{137.9}\\
wikipedia\_link\_ckb & \numprint{7.3} & \numprint{7.3} & \numprint{7.4} & \numprint{12.9} & \numprint{14.6} & \numprint{80.2}\\
petster-dog-household & \numprint{10.3} & \numprint{10.4} & \numprint{10.7} & \numprint{131.9} & \numprint{212.3} & \numprint{843.8}\\
livemocha & \numprint{11.2} & \numprint{11.4} & \numprint{11.8} & \numprint{52.5} & \numprint{64.6} & \numprint{277.9}\\
flickrEdges & \numprint{44.2} & \numprint{45.4} & \numprint{46.4} & \numprint{119.5} & \numprint{128.4} & \numprint{217.6}\\
petster-friendships-cat & \numprint{2.7} & \numprint{2.8} & \numprint{2.9} & \numprint{35.6} & \numprint{55.1} & \numprint{266.7}\\
\bottomrule
\end{tabular}

\tabtitle{Directed unweighted}

\begin{tabular}{lrrrrrr}
\toprule
Graph & \multicolumn{3}{c}{Greedy-H} & \multicolumn{3}{c}{Greedy-LS-H}\\
\hfill $k$ & $5$ & $10$ & $50$ & $5$ & $10$ & $50$\\
\midrule
wikipedia\_link\_mi & \numprint{0.3} & \numprint{0.3} & \numprint{0.3} & \numprint{0.6} & \numprint{1.0} & \numprint{3.8}\\
foldoc & \numprint{0.6} & \numprint{0.6} & \numprint{0.6} & \numprint{1.6} & \numprint{1.7} & \numprint{14.7}\\
wikipedia\_link\_so & \numprint{0.1} & \numprint{0.1} & \numprint{0.1} & \numprint{0.3} & \numprint{0.4} & \numprint{2.2}\\
wikipedia\_link\_lo & \numprint{0.2} & \numprint{0.2} & \numprint{0.2} & \numprint{0.2} & \numprint{0.2} & \numprint{0.7}\\
wikipedia\_link\_co & \numprint{0.2} & \numprint{0.3} & \numprint{0.3} & \numprint{0.5} & \numprint{0.5} & \numprint{2.6}\\
\bottomrule
\end{tabular}

\caption{Running time (s) of \greedyh and \grlsh
on the complex networks of Table~\ref{tab:cplx-harmonic-large}.}
\label{tab:runtime-h-cplx}
\end{table}
\end{minipage}

\begin{minipage}[t]{\columnwidth}
\begin{table}[H]
\footnotesize
\centering
\tabtitle{Undirected unweighted}

\begin{tabular}{lrrrrrr}
\toprule
Graph & \multicolumn{3}{c}{Greedy-H} & \multicolumn{3}{c}{Greedy-LS-H}\\
\hfill $k$ & $5$ & $10$ & $50$ & $5$ & $10$ & $50$\\
\midrule
marshall-islands & \textless\numprint{0.1} & \textless\numprint{0.1} & \textless\numprint{0.1} & \textless\numprint{0.1} & \textless\numprint{0.1} & \textless\numprint{0.1}\\
micronesia & \textless\numprint{0.1} & \textless\numprint{0.1} & \textless\numprint{0.1} & \textless\numprint{0.1} & \textless\numprint{0.1} & \numprint{0.2}\\
kiribati & \textless\numprint{0.1} & \textless\numprint{0.1} & \textless\numprint{0.1} & \textless\numprint{0.1} & \textless\numprint{0.1} & \numprint{0.3}\\
opsahl-powergrid & \numprint{0.2} & \numprint{0.2} & \numprint{0.2} & \numprint{1.7} & \numprint{0.9} & \numprint{1.4}\\
samoa & \numprint{0.8} & \numprint{0.9} & \numprint{0.9} & \numprint{2.6} & \numprint{2.9} & \numprint{5.5}\\
comores & \numprint{0.5} & \numprint{0.5} & \numprint{0.6} & \numprint{1.1} & \numprint{2.3} & \numprint{8.6}\\
\bottomrule
\end{tabular}

\medskip

\tabtitle{Undirected weighted}

\begin{tabular}{lrrrrrr}
\toprule
Graph & \multicolumn{3}{c}{Greedy-H} & \multicolumn{3}{c}{Greedy-LS-H}\\
\hfill $k$ & $5$ & $10$ & $50$ & $5$ & $10$ & $50$\\
\midrule
marshall-islands & \textless\numprint{0.1} & \textless\numprint{0.1} & \textless\numprint{0.1} & \numprint{0.3} & \numprint{0.8} & \numprint{6.1}\\
micronesia & \textless\numprint{0.1} & \textless\numprint{0.1} & \textless\numprint{0.1} & \numprint{1.0} & \numprint{2.6} & \numprint{22.1}\\
kiribati & \textless\numprint{0.1} & \textless\numprint{0.1} & \textless\numprint{0.1} & \numprint{1.1} & \numprint{2.3} & \numprint{21.9}\\
DC & \numprint{4.7} & \numprint{4.8} & \numprint{4.9} & \numprint{86.3} & \numprint{161.0} & \numprint{2247.7}\\
samoa & \numprint{1.3} & \numprint{1.9} & \numprint{2.4} & \numprint{30.0} & \numprint{60.8} & \numprint{323.3}\\
comores & \numprint{0.2} & \numprint{0.4} & \numprint{0.8} & \numprint{26.6} & \numprint{64.3} & \numprint{732.6}\\
\bottomrule
\end{tabular}

\medskip

\tabtitle{Directed unweighted}

\begin{tabular}{lrrrrrr}
\toprule
Graph & \multicolumn{3}{c}{Greedy-H} & \multicolumn{3}{c}{Greedy-LS-H}\\
\hfill $k$ & $5$ & $10$ & $50$ & $5$ & $10$ & $50$\\
\midrule
marshall-islands & \textless\numprint{0.1} & \textless\numprint{0.1} & \textless\numprint{0.1} & \textless\numprint{0.1} & \textless\numprint{0.1} & \textless\numprint{0.1}\\
micronesia & \textless\numprint{0.1} & \textless\numprint{0.1} & \textless\numprint{0.1} & \textless\numprint{0.1} & \textless\numprint{0.1} & \numprint{0.2}\\
kiribati & \textless\numprint{0.1} & \textless\numprint{0.1} & \textless\numprint{0.1} & \textless\numprint{0.1} & \textless\numprint{0.1} & \numprint{0.3}\\
samoa & \numprint{0.8} & \numprint{0.9} & \numprint{0.9} & \numprint{2.6} & \numprint{1.9} & \numprint{5.9}\\
comores & \numprint{0.5} & \numprint{0.5} & \numprint{0.6} & \numprint{1.1} & \numprint{3.3} & \numprint{10.5}\\
opsahl-openflights & \textless\numprint{0.1} & \textless\numprint{0.1} & \textless\numprint{0.1} & \textless\numprint{0.1} & \textless\numprint{0.1} & \numprint{0.2}\\
tntp-ChicagoRegional & \numprint{2.7} & \numprint{2.9} & \numprint{3.2} & \numprint{11.5} & \numprint{20.6} & \numprint{85.8}\\
\bottomrule
\end{tabular}

\medskip

\tabtitle{Directed weighted}

\begin{tabular}{lrrrrrr}
\toprule
Graph & \multicolumn{3}{c}{Greedy-H} & \multicolumn{3}{c}{Greedy-LS-H}\\
\hfill $k$ & $5$ & $10$ & $50$ & $5$ & $10$ & $50$\\
\midrule
marshall-islands & \textless\numprint{0.1} & \textless\numprint{0.1} & \textless\numprint{0.1} & \numprint{0.3} & \numprint{0.8} & \numprint{6.1}\\
micronesia & \textless\numprint{0.1} & \textless\numprint{0.1} & \textless\numprint{0.1} & \numprint{1.0} & \numprint{2.6} & \numprint{22.2}\\
kiribati & \textless\numprint{0.1} & \textless\numprint{0.1} & \textless\numprint{0.1} & \numprint{1.1} & \numprint{2.3} & \numprint{22.0}\\
samoa & \numprint{1.2} & \numprint{1.8} & \numprint{2.3} & \numprint{36.8} & \numprint{63.4} & \numprint{331.5}\\
comores & \numprint{0.3} & \numprint{0.4} & \numprint{0.8} & \numprint{27.4} & \numprint{68.0} & \numprint{521.6}\\
\bottomrule
\end{tabular}

\caption{Running time (s) of \greedyh and \grlsh
on the high-diameter networks of Table~\ref{tab:high-diam-harmonic-large}.}
\label{tab:runtime-h-high-diam}
\end{table}
\end{minipage}

\begin{minipage}[t]{\columnwidth}
\begin{table}[H]
\footnotesize
\centering
\tabtitle{Undirected unweighted}

\begin{tabular}{lrrrrrr}
\toprule
Graph & \multicolumn{3}{c}{GS-LS-C} & \multicolumn{3}{c}{Greedy-LS-C}\\
\hfill $k$ & $5$ & $10$ & $50$ & $5$ & $10$ & $50$\\
\midrule
loc-brightkite\_edges & \numprint{11.8} & \numprint{22.1} & \numprint{146.4} & \numprint{11.5} & \numprint{20.8} & \numprint{110.9}\\
douban & \numprint{35.0} & \numprint{59.4} & \numprint{222.0} & \numprint{26.3} & \numprint{43.5} & \numprint{202.5}\\
petster-cat-household & \numprint{32.7} & \numprint{66.1} & \numprint{363.2} & \numprint{32.2} & \numprint{63.2} & \numprint{341.5}\\
wikipedia\_link\_fy & \numprint{100.2} & \numprint{102.5} & \numprint{476.9} & \numprint{27.5} & \numprint{50.3} & \numprint{434.7}\\
wikipedia\_link\_ckb & \numprint{19.7} & \numprint{103.2} & \numprint{767.2} & \numprint{19.4} & \numprint{34.1} & \numprint{718.3}\\
livemocha & \numprint{58.1} & \numprint{86.3} & \numprint{713.0} & \numprint{46.5} & \numprint{58.2} & \numprint{604.9}\\
\bottomrule
\end{tabular}
\medskip

\tabtitle{Directed unweighted}

\begin{tabular}{lrrrrrr}
\toprule
Graph & \multicolumn{3}{c}{GS-LS-C} & \multicolumn{3}{c}{Greedy-LS-C}\\
\hfill $k$ & $5$ & $10$ & $50$ & $5$ & $10$ & $50$\\
\midrule
wikipedia\_link\_mi & \textless\numprint{0.1} & \numprint{0.8} & \numprint{2.9} & \numprint{0.1} & \numprint{0.2} & \numprint{1.8}\\
foldoc & \numprint{2.3} & \numprint{3.5} & \numprint{0.5} & \numprint{1.7} & \numprint{2.2} & \numprint{5.7}\\
wikipedia\_link\_so & \numprint{0.7} & \numprint{1.5} & \numprint{23.7} & \numprint{0.5} & \numprint{0.9} & \numprint{3.3}\\
wikipedia\_link\_lo & \numprint{0.8} & \numprint{1.6} & \numprint{26.0} & \numprint{0.5} & \numprint{2.1} & \numprint{13.5}\\
wikipedia\_link\_co & \numprint{0.8} & \numprint{1.7} & \numprint{42.9} & \numprint{1.2} & \numprint{1.7} & \numprint{18.5}\\
soc-Epinions1 & \numprint{4.2} & \numprint{6.9} & \numprint{30.1} & \numprint{3.6} & \numprint{6.0} & \numprint{28.2}\\
slashdot-zoo & \numprint{4.1} & \numprint{6.4} & \numprint{19.9} & \numprint{3.4} & \numprint{7.1} & \numprint{15.4}\\
web-NotreDame & \numprint{14.9} & \numprint{37.3} & \numprint{1106.5} & \numprint{14.4} & \numprint{23.4} & \numprint{388.6}\\
wikipedia\_link\_jv & \numprint{22.9} & \numprint{97.1} & \numprint{30.0} & \numprint{17.7} & \numprint{14.5} & \numprint{49.9}\\
\bottomrule
\end{tabular}

\caption{Running time (s) of \gsls and \grls on the complex networks
of Table~\ref{tab:cplx-large}.}
\label{tab:runtime-c-cplx}
\end{table}
\end{minipage}

\begin{minipage}[t]{\columnwidth}
\begin{table}[H]
\footnotesize
\centering
\tabtitle{Undirected unweighted}

\begin{tabular}{lrrrrrr}
\toprule
Graph & \multicolumn{3}{c}{GS-LS-C} & \multicolumn{3}{c}{Greedy-LS-C}\\
\hfill $k$ & $5$ & $10$ & $50$ & $5$ & $10$ & $50$\\
\midrule
opsahl-powergrid & \numprint{0.9} & \numprint{1.1} & \numprint{13.4} & \numprint{0.7} & \numprint{0.4} & \numprint{3.7}\\
andorra & \numprint{3.6} & \numprint{8.9} & \numprint{55.1} & \numprint{1.9} & \numprint{3.9} & \numprint{26.7}\\
seychelles & \numprint{1.6} & \numprint{5.3} & \numprint{26.7} & \numprint{0.9} & \numprint{3.4} & \numprint{25.0}\\
liechtenstein & \numprint{10.8} & \numprint{21.2} & \numprint{56.3} & \numprint{2.2} & \numprint{16.4} & \numprint{38.2}\\
comores & \numprint{1.2} & \numprint{5.0} & \numprint{22.9} & \numprint{1.4} & \numprint{4.9} & \numprint{18.0}\\
faroe-islands & \numprint{33.9} & \numprint{77.2} & \numprint{313.5} & \numprint{25.3} & \numprint{96.4} & \numprint{268.4}\\
\bottomrule
\end{tabular}

\medskip

\tabtitle{Undirected weighted}

\begin{tabular}{lrrrrrr}
\toprule
Graph & \multicolumn{3}{c}{GS-LS-C} & \multicolumn{3}{c}{Greedy-LS-C}\\
\hfill $k$ & $5$ & $10$ & $50$ & $5$ & $10$ & $50$\\
\midrule
andorra & \numprint{20.5} & \numprint{35.5} & \numprint{182.0} & \numprint{4.5} & \numprint{10.9} & \numprint{64.3}\\
seychelles & \numprint{2.6} & \numprint{13.7} & \numprint{93.1} & \numprint{2.3} & \numprint{3.3} & \numprint{62.6}\\
liechtenstein & \numprint{3.8} & \numprint{8.6} & \numprint{230.3} & \numprint{4.1} & \numprint{27.0} & \numprint{265.6}\\
DC & \numprint{7.8} & \numprint{18.9} & \numprint{473.2} & \numprint{9.9} & \numprint{14.0} & \numprint{98.3}\\
comores & \numprint{2.3} & \numprint{10.1} & \numprint{140.1} & \numprint{2.3} & \numprint{9.6} & \numprint{55.3}\\
faroe-islands & \numprint{17.1} & \numprint{137.5} & \numprint{907.0} & \numprint{15.6} & \numprint{27.4} & \numprint{411.3}\\
\bottomrule
\end{tabular}

\medskip

\tabtitle{Directed unweighted}

\begin{tabular}{lrrrrrr}
\toprule
Graph & \multicolumn{3}{c}{GS-LS-C} & \multicolumn{3}{c}{Greedy-LS-C}\\
\hfill $k$ & $5$ & $10$ & $50$ & $5$ & $10$ & $50$\\
\midrule
andorra & \numprint{5.2} & \numprint{6.2} & \numprint{5.0} & \numprint{4.2} & \numprint{3.5} & \numprint{26.0}\\
seychelles & \numprint{1.4} & \numprint{5.1} & \numprint{21.0} & \numprint{0.8} & \numprint{3.3} & \numprint{17.6}\\
liechtenstein & \numprint{17.7} & \numprint{14.4} & \numprint{59.7} & \numprint{4.0} & \numprint{15.5} & \numprint{41.3}\\
comores & \numprint{1.7} & \numprint{4.3} & \numprint{7.7} & \numprint{1.2} & \numprint{4.1} & \numprint{19.2}\\
faroe-islands & \numprint{20.2} & \numprint{77.0} & \numprint{254.1} & \numprint{25.7} & \numprint{44.8} & \numprint{189.4}\\
opsahl-openflights & \textless\numprint{0.1} & \numprint{0.1} & \numprint{1.0} & \textless\numprint{0.1} & \numprint{0.1} & \numprint{0.6}\\
tntp-ChicagoRegional & \numprint{45.3} & \numprint{151.4} & \numprint{0.3} & \numprint{32.5} & \numprint{68.2} & \numprint{304.3}\\
\bottomrule
\end{tabular}

\medskip

\tabtitle{Directed weighted}

\begin{tabular}{lrrrrrr}
\toprule
Graph & \multicolumn{3}{c}{GS-LS-C} & \multicolumn{3}{c}{Greedy-LS-C}\\
\hfill $k$ & $5$ & $10$ & $50$ & $5$ & $10$ & $50$\\
\midrule
andorra & \numprint{5.9} & \numprint{16.0} & \numprint{129.3} & \numprint{6.4} & \numprint{5.8} & \numprint{52.8}\\
seychelles & \numprint{2.1} & \numprint{2.8} & \numprint{59.2} & \numprint{2.3} & \numprint{2.7} & \numprint{29.6}\\
liechtenstein & \numprint{3.7} & \numprint{16.9} & \numprint{227.5} & \numprint{3.8} & \numprint{22.2} & \numprint{167.9}\\
comores & \numprint{1.9} & \numprint{7.0} & \numprint{90.0} & \numprint{2.2} & \numprint{10.7} & \numprint{28.1}\\
faroe-islands & \numprint{16.2} & \numprint{148.1} & \numprint{696.2} & \numprint{15.3} & \numprint{27.2} & \numprint{98.5}\\
\bottomrule
\end{tabular}

\caption{Running time (s) of \gsls and \grls on the
high-diameter networks of Table~\ref{tab:high-diam-large}.}
\label{tab:runtime-c-high-diam}
\end{table}
\end{minipage}

}{}
\end{document}